 \newtheorem{theorem}{Theorem}[section]
 \newtheorem{lemma}[theorem]{Lemma}
 \newtheorem{corollary}[theorem]{Corollary}
 \newenvironment{pliste}%
 { \begin{list}%
         {--}%
         {\setlength{\labelwidth}{20pt}%
          \setlength{\leftmargin}{25pt}%
          \setlength{\topsep}{0pt}
          \setlength{\itemsep}{0pt}
          \setlength{\parsep}{0pt}}}%
 { \end{list} }
 \newenvironment{remark*}
         {\bigskip\noindent \textbf{Remark.} \hspace{0.3mm}} 
         {\medskip} 
 \newenvironment{remarks*}
         {\bigskip\noindent \textbf{Remarks.} \hspace{0.3mm}} 
         {\medskip} 
 \newcommand{\qed}{$\square$}
 \newenvironment{proof}
         {\smallskip\noindent \textbf{Proof.} \hspace{0.3mm}} 
         {\qed  \medskip} 
 \newenvironment{proofarg}[1]
         {\smallskip\noindent \textbf{Proof #1.}} 
         {\qed  \medskip}
 \newcommand{\sind}{\mbox{}\hspace{0.5cm}}
 \newcommand{\im}{{\rm im}\;}
 \newcommand{\rank}{{\rm rank}\;}
 \newcommand{\isom}{\cong}
 \newcommand{\id}{{\rm id}}
 \newcommand{\e}{\varepsilon}
 \newcommand{\sss}{{X}}
 \newcommand{\tangent}{T}
 \newcommand{\tangentS}{\tangent}
 \newcommand{\dual}[1]{\textrm{V}(#1)} 
 \newcommand{\wdual}[1]{\weighted{\textrm{V}}(#1)}
 \newcommand{\volume}{{\rm vol}} 
 \newcommand{\conv}{\textrm{CH}}
 \newcommand{\proj}{p}
 \newcommand{\projS}{\proj_\sss} 
 \newcommand{\dist}{{\rm d}} 
 \newcommand{\disth}{\dist_{\cal H}}
 \newcommand{\R}{\mathbb{R}} 
 \newcommand{\N}{\mathbb{N}} 
 \newcommand{\diam}{{\rm diam}} 
 \newcommand{\diamcc}{\diam_{\rm CC}}
 \newcommand{\ma}{{\rm M}}
 \newcommand{\rch}{{\rm rch}}
 \newcommand{\vor}{{\cal V}} 
 \newcommand{\del}{{\cal D}} 
 \newcommand{\wvor}{\weighted\vor} 
 \newcommand{\wdel}{\weighted\del}
 \newcommand{\vorsarg}[1]{\vor^{#1}} 
 \newcommand{\wvorsarg}[1]{\wvor^{#1}}
 \newcommand{\wvors}{\wvorsarg{\sss}}
 \newcommand{\delsarg}[1]{\del^{#1}} 
 \newcommand{\dels}{\delsarg{\sss}} 
 \newcommand{\wdelsarg}[1]{\wdel^{#1}}
 \newcommand{\wdels}{\wdelsarg{\sss}}
 \newcommand{\wfs}{{\rm wfs}}
 \newcommand{\tab }{\hspace{0.25cm} \begin{minipage}{7cm}} 
 \newcommand{\fintab }{\end{minipage} \par} 
 \newcommand{\algo}{\begin{center}\parbox{16cm}} 
 \newcommand{\finalgo}{\end{center}}
 \newcommand{\simplex}{\sigma}
 \newcommand{\complex}{{\cal C}} 
 \newcommand{\cech}[1]{\complex^{#1}}
 \newcommand{\cecha}{\cech{\alpha}}
 \newcommand{\cechap}{\cech{\alpha'}}
 \newcommand{\rips}[1]{{\cal R}^{#1}}
 \newcommand{\ripsa}{\rips{\alpha}}
 \newcommand{\ripsap}{\rips{\alpha'}}
 \newcommand{\ripsqe}{\rips{4\e}}
 \newcommand{\ripsse}{\rips{16\e}}
 \newcommand{\wit}{W}
 \newcommand{\eee}{{L}}
 \newcommand{\winf}[1]{\complex_{#1}}
 \newcommand{\winfw}{\winf{\wit}}
 \newcommand{\winfr}[2]{\winf{#2}^{#1}}
 \newcommand{\winfrw}[1]{\winfr{#1}{\wit}}
 \newcommand{\winfa}[1]{\winfr{\alpha}{#1}}
 \newcommand{\winfaw}{\winfa{\wit}}
 \newcommand{\winfap}[1]{\winfr{\alpha'}{#1}}
 \newcommand{\winfapw}{\winfap{\wit}}
 \newcommand{\eeei}{\eee(i)}
 \newcommand{\ei}{\e(i)}
 \newcommand{\io}{i_0}
 \newcommand{\eeeio}{\eee(i_0)}
 \newcommand{\eio}{\e(i_0)}
 \newcommand{\ripsqi}{\rips{4\ei}}
 \newcommand{\ripssi}{\rips{16\ei}}
 \newcommand{\weight}{\omega}
 \newcommand{\weighted}[1]{#1_{\weight}}
 \newcommand{\wpb}{{\bar \weight}}  %% weight property bound
\begin{document}

\makeRR

%
%\noindent {\small The full version of the paper can be found at
%\url{http://???}}
%

%%%%%%%%%%%%%%%%%%%%%%%%%%%%%%%%%%%%%%%%%%%%%%%%%%%%%%%%%%%%%%%%%%%%%%%%%%
%
\section{Introduction}
\label{sec-intro}

The problem of reconstructing unknown structures from finite
collections of data samples is ubiquitous in the Sciences, where it
has many different variants, depending on the nature of the data and
on the targeted application. In the last decade or so, the
computational geometry community has gained a lot of interest in
manifold reconstruction, where the goal is to reconstruct submanifolds
of Euclidean spaces from point clouds. In particular, efficient
solutions have been proposed in dimensions two and three, based on the
use of the Delaunay triangulation -- see \cite{cg-dtbsr-07} for a
survey. In these methods, the unknown manifold is approximated by a
simplicial complex that is extracted from the full-dimensional
Delaunay triangulation of the input point cloud.  The success of this
approach is explained by the fact that, not only does it behave well
on practical examples, but the quality of its output is guaranteed by
a sound theoretical framework. Indeed, the extracted complex is
usually shown to be equal, or at least close, to the so-called {\em
  restricted Delaunay triangulation}, a particular subset of the
Delaunay triangulation whose approximation power is well-understood on
smooth or Lipschitz curves and surfaces \cite{ab-srvf-99,
  abe-cbscc-98, bo-lrk-06}.  Unfortunately, the size of the Delaunay
triangulation grows too fast with the dimension of the ambient space
for the approach to be still tractable in high-dimensional spaces
\cite{m-mnfcp-70}.

Recently, significant steps were made towards a full understanding of
the potential and limitations of the restricted Delaunay triangulation
on smooth manifolds \cite{cdr-mrps-2005, o-ntrdwchd-07}. In parallel,
new sampling theories were developped, such as the critical point
theory for distance functions \cite{ccl-pc-06}, which provides
sufficient conditions for the topology of a shape $X\subset\R^d$ to be
captured by the offsets of a point cloud $\eee$ lying at small
Hausdorff distance. These advances lay the foundations of a new
theoretical framework for the reconstruction of smooth submanifolds
\cite{cl-tgmr-06, nsw-fhswhc-04}, and more generally of large classes
of compact subsets of $\R^d$ \cite{ccl-pc-06, cl-lma-05,
  cl-sctis-07}. Combined with the introduction of more lightweight
data structures, such as the {\em witness complex} \cite{ds-wdd-03},
they have lead to new reconstruction techniques in arbitrary Euclidean
spaces \cite{bgo-mradwc-07}, whose outputs can be guaranteed under
mild sampling conditions, and whose complexities can be orders of
magnitude below the one of the classical Delaunay-based approach.  For
instance, on a data set with $n$ points in $\R^d$, the algorithm of
\cite{bgo-mradwc-07} runs in time $2^{O(d^2)}n^2$, whereas the size of
the Delaunay triangulation can be of the order of
$n^{\left\lceil\frac{d}{2}\right\rceil}$. Unfortunately,
$2^{O(d^2)}n^2$ still remains too large for these new methods to be
practical, even when the data points lie on or near a very
low-dimensional submanifold.

A weaker yet similarly difficult version of the reconstruction
paradigm is topological estimation, where the goal is not to exhibit a
data structure that faithfully approximates the underlying shape $X$,
but simply to infer the topological invariants of $X$ from an input
point cloud $\eee$. This problem has received a lot of attention in
the recent years, and it finds applications in a number of areas of
Science, such as sensor networks \cite{dg-csnph-07}, statistical
analysis \cite{cisz-lbsni-07}, or dynamical systems \cite{kmm-ch-04,
  r-tcha-99}.  A classical approach to learning the homology of $X$
consists in building a nested sequence of spaces ${\cal
  K}^0\subseteq{\cal K}^1\subseteq\cdots\subseteq{\cal K}^m$, and in
studying the persistence of homology classes throughout this
sequence. In particular, it has been independently proved in
\cite{cl-sctis-07} and \cite{ceh-spd-05} that the persistent homology
of the sequence defined by the $\alpha$-offsets of a point cloud
$\eee$ coincides with the homology of the underlying shape $X$, under
sampling conditions that are milder than the ones of
\cite{ccl-pc-06}. Specifically, if the Hausdorff distance between
$\eee$ and $X$ is less than $\e$, for some small enough $\e$, then,
for all $\alpha\geq\e$, the canonical inclusion map
$\eee^\alpha\hookrightarrow\eee^{\alpha+2\e}$ induces homomorphisms
between homology groups, whose images are isomorphic to the homology
groups of $X$. Combined with the structure theorem of
\cite{zc-cph-05}, which states that the persistent homology of the
sequence $\{\eee^\alpha\}_{\alpha\geq 0}$ is fully described by a
finite set of intervals, called a {\em persistence barcode} or a {\em
  persistence diagram} --- see Figure \ref{fig-spiral} (left), the
above result means that the homology of $X$ can be deduced from this
barcode, simply by removing the intervals of length less than $2\e$,
which are therefore viewed as topological noise.

From an algorithmic point of view, the persistent homology of a nested
sequence of simplicial complexes (called a {\em filtration}) can be
efficiently computed using the persistence algorithm \cite{elz-tps-02,
  zc-cph-05}. Among the many filtrations that can be built on top of a
point set $\eee$, the $\alpha$-shape enables to reliably recover the
homology of the underlying space $X$, since it is known to be a
deformation retract of $\eee^\alpha$ \cite{e-ubids-95}. However, this
property is useless in high dimensions, since computing the
$\alpha$-shape requires to build the full-dimensional Delaunay
triangulation. It is therefore appealing to consider other filtrations
that are easy to compute in arbitrary dimensions, such as the Rips and
witness complex filtrations. Nevertheless, to the best of our
knowledge, there currently exists no equivalent of the result of
\cite{cl-sctis-07, ceh-spd-05} for such
filtrations. In this paper, we produce such a result, not only for
Rips and witness complexes, but more generally for any
filtration that is intertwined with the \v Cech filtration. Recall
that, for all $\alpha>0$, the \v Cech complex $\cecha(\eee)$ is the
nerve of the union of the open balls of same radius $\alpha$ about the
points of $\eee$, {\em i.e.}  the nerve of $\eee^\alpha$. It follows
from the nerve theorem \cite[Cor. 4G.3]{h-at-01} that $\cecha(\eee)$
and $\eee^\alpha$ are homotopy equivalent. However, despite the result
of \cite{cl-sctis-07, ceh-spd-05}, this is not sufficient to prove
that the persistent homology of
$\cecha(\eee)\hookrightarrow\cech{\alpha+2\e}(\eee)$ coincides with
the homology of $X$,
%although it is known from \cite{cl-sctis-07, ceh-spd-05} to be true for the offsets of $X$, 
mainly because it is not clear whether the homotopy equivalences
$\cecha(\eee)\rightarrow\eee^\alpha$ and
$\cech{\alpha+2\e}(\eee)\rightarrow\eee^{\alpha+2\e}$ provided by the
nerve theorem commute with the canonical inclusions
$\cecha(\eee)\hookrightarrow\cech{\alpha+2\e}(\eee)$ and
$\eee^\alpha\hookrightarrow\eee^{\alpha+2\e}$. Using standard
arguments of algebraic topology, we prove that there exist some
homotopy equivalences
% $\cecha(\eee)\rightarrow\eee^\alpha$ and $\cech{\alpha+2\e}(\eee)\rightarrow\eee^{\alpha+2\e}$ 
that do commute with the canonical inclusions, at least at homology
and homotopy levels. This enables us to extend the result of
\cite{cl-sctis-07, ceh-spd-05} to the \v Cech filtration, and 
from there to the Rips and witness complex filtrations.
%under similar sampling conditions.

\begin{figure}[htb]
\centering
%\begin{minipage}{0.32\linewidth}
%\includegraphics[width=\linewidth]{images/sampling_spiral}
%\end{minipage}
\includegraphics[scale=1]{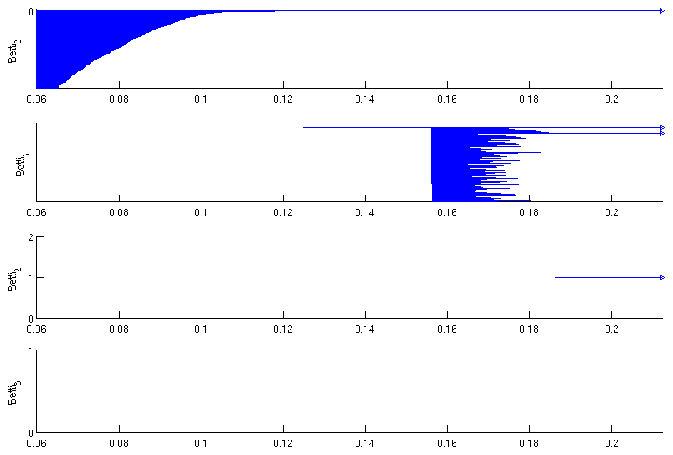}
\hspace{0.01\linewidth}
\includegraphics[scale=1]{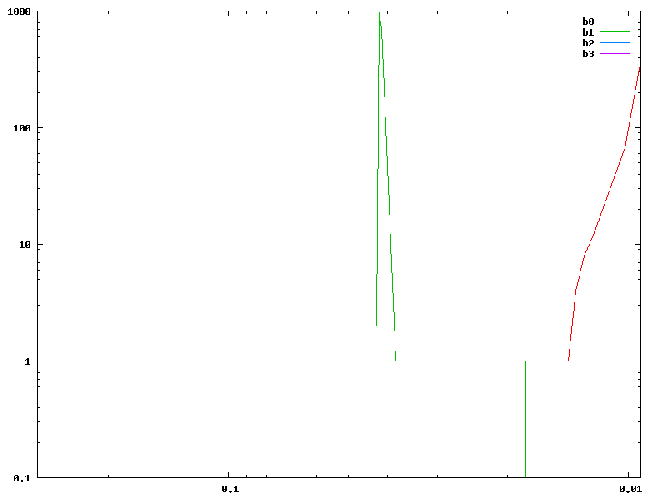}
\caption{Results obtained from a set $\wit$ of $10,000$
  points sampled uniformly at random from a helical curve drawn on the
  2d torus $(u,v)\mapsto \frac{1}{2}(\cos 2\pi u, \sin 2\pi u, \cos
  2\pi v, \sin 2\pi v)$ in $\R^4$ --- see \cite{go-ruwc-07}. Left:
  persistence barcode of the Rips filtration, built over a set of
  $900$ carefully-chosen landmarks. Right: result of our algorithm,
  applied blindly to the input $\wit$. Both methods highlight
  the two underlying structures: curve and torus.}
\label{fig-spiral}
\end{figure}

Another common concern in topological data analysis is the size of the
vertex set on top of which a filtration is built. In many practical
situations indeed, the point cloud $\wit$ given as input samples the
underlying shape very finely. In such situations, it makes sense to
build the filtration on top of a small subset $\eee$ of landmarks, to
avoid a waste of computational resources. However, building a
filtration on top of the sparse landmark set $\eee$ instead of the
dense point cloud $\wit$ can result in a significant degradation in
the quality of the persistence barcode. This is true in particular
with the \v Cech and Rips filtrations, whose barcodes can have
topological noise of amplitude depending directly on the density of
$\eee$. The introduction of the witness complex filtration appeared as
an elengant way of solving this issue \cite{cds-teuwc-04}. The witness
complex of $\eee$ relative to $\wit$, or $\winfw(\eee)$ for short, can
be viewed as a relaxed version of the Delaunay triangulation of
$\eee$, in which the points of $\wit\setminus\eee$ are used to drive
the construction of the complex \cite{ds-wdd-03}. Due to its special
nature, which takes advantage of the points of $\wit\setminus\eee$,
and due to its close relationship with the restricted Delaunay
triangulation, the witness complex filtration is likely to give
persistence barcodes whose topological noise depends on the density of
$\wit$ rather than on the one of $\eee$, as conjectured in
\cite{cds-teuwc-04}.  We prove in the paper that this statement is
only true to some extent, namely: whenever the points of $\wit$ are
sufficiently densely sampled from some smooth submanifold of $\R^d$,
the topological noise in the barcode can be arbitrarily small compared
to the density of $\eee$. Nevertheless, it cannot depend solely on the
density of $\wit$. This shows that the witness complex filtration does
provide cleaner persistence barcodes than \v Cech or Rips filtrations,
but maybe not as clean as expected.

Taking advantage of the above theoretical results on Rips and witness
complexes, we propose a novel approach to reconstruction that stands
somewhere in-between the classical reconstruction and topological
estimation paradigms. Our algorithm is a variant of the method of
\cite{bgo-mradwc-07, go-ruwc-07} that combines greedy refinement and
topological persistence. Specifically, given an input point cloud
$\wit$, the algorithm builds a subset $\eee$ of landmarks iteratively,
and in the meantime it maintains a nested pair of simplicial complexes
(which happen to be Rips or witness complexes) and computes its
persistent Betti numbers. The outcome of the algorithm is the sequence
of nested pairs maintained throughout the process, or rather the
diagram of evolution of their persistent Betti numbers. Using this
diagram, a user or software agent can determine a relevant scale at
which to process the data. It is then easy to rebuild the
corresponding set of landmarks, as well as its nested pair of
complexes. 
%We also show how to extract from this nested pair a single
%abstract complex whose homology coincides with the homology of $X$,
%and whose image in $\R^d$ lies close to $X$, except for a set of
%extraordinary vertices. 
Note that our method does not completely solve the classical
reconstruction problem, since it does not exhibit an embedded complex
that is close to $X$ topologically and geometrically. Nevertheless, it
comes with theoretical guarantees, it is easily implementable, and
above all it has reasonable complexity. Indeed, in the case where the
input point cloud is sampled from a smooth submanifold $\sss$ of
$\R^d$, we show that the complexity of our algorithm is bounded by
$c(m)n^5$, where $c(m)$ is a quantity depending solely on the
intrinsic dimension $m$ of $X$, while $n$ is the size of the input. To
the best of our knowledge, this is the first provably-good topological
estimation or reconstruction method whose complexity scales up with
the intrinsic dimension of the manifold. In the case where $X$ is a
more general compact set in $\R^d$, our complexity bound becomes
$c(d)n^5$.

The paper is organized as follows: after introducing the \v Cech,
Rips, and witness complex filtrations in Section
\ref{sec-def-complexes}, we prove our structural results in Sections
\ref{sec-struct-res-compact} and \ref{sec-smooth-case}, focusing on
the general case of compact subsets of $\R^d$
%filtrations that are intertwined with the \v Cech filtration 
in Section \ref{sec-struct-res-compact},
 and more specifically on the case of smooth submanifolds of $\R^d$
%witness complex filtration 
in Section \ref{sec-smooth-case}. Finally, we present our algorithm
and its analysis in Section \ref{sec-recons}.

%%%%%%%%%%%%%%%%%%%%%%%%%%%%%%%%%%%%%%%%%%%%%%%%%%%%%%%%%%%%%%%%%%%%%%%%%%
%
%\section{Background and definitions}
%\label{sec-defs}

%% From Steve: I removed this section because it no longer makes sense, 
%% things being now defined along the way, as we need them.

%%%%%%%%%%%%%%%%%%%%%
\section{Various complexes and their relationships}
\label{sec-def-complexes}

The definitions, results and proofs of this section hold in any
arbitrary metric space. However, for the sake of consistency with the
rest of the paper, we state them in the particular case of $\R^d$,
endowed with the Euclidean norm $\|p\|=\sqrt{\sum_{i=1}^{d}
  p_i^2}$. As a consequence, our bounds are not the tightest possible
for the Euclidean case, but they are for the general metric case.
Using specific properties of Euclidean spaces, it is indeed possible
to work out somewhat tighter bounds, but at the price of a loss of
simplicity in the statements.

For any compact set $X\subset\R^d$, we call $\diam(X)$ the
diameter of $X$, and $\diamcc(X)$ the {\em component-wise diameter} of
$X$, defined by: $\diamcc(X)=\inf_i \diam(X_i)$, where the
$X_i$ are the path-connected components of
$X$. Finally, given two compact sets $X,Y$ in $\R^d$, we call
$\disth(X,Y)$ their Hausdorff distance.

\paragraph{\v Cech complex.}
Given a finite set $\eee$ of points of $\R^d$ and a positive number
$\alpha$, we call $\eee^\alpha$ the union of the open balls of radius
$\alpha$ centered at the points of $\eee$: $\eee^\alpha = \bigcup_{x\in \eee}
B(x,\alpha)$. This definition makes sense only for $\alpha>0$, since
for $\alpha=0$ we get $\eee^\alpha=\emptyset$. We also denote by
$\{\eee^\alpha\}$ the open cover of $\eee^\alpha$ formed by the open balls
of radius $\alpha$ centered at the points of $\eee$. The \v Cech complex
of $\eee$ of parameter $\alpha$, or $\cecha(\eee)$ for short, is the {\em
  nerve} of this cover, {\em i.e.} it is the abstract simplicial
complex whose vertex set is $\eee$, and such that, for all $k\in\N$ and
all $x_0,\cdots,x_k\in \eee$, $[x_0,\cdots,x_k]$ is a $k$-simplex of
$\cecha(\eee)$ if and only if $B(x_0,\alpha)\cap\cdots\cap
B(x_k,\alpha)\neq\emptyset$.

\paragraph{Rips complex.}
Given a finite set $\eee\subset\R^d$ and a positive number $\alpha$,
the Rips complex of $\eee$ of parameter $\alpha$, or $\ripsa(\eee)$ for
short, is the abstract simplicial complex whose $k$-simplices
correspond to unordered $(k+1)$-tuples of points of $\eee$ which are
pairwise within Euclidean distance $\alpha$ of one another. The Rips
complex is closely related to the \v Cech complex, as stated in the
following standard lemma, whose proof is recalled for completeness:
\begin{lemma}\label{lem-cech-vs-rips}
For all finite set $\eee\subset\R^d$ and all $\alpha>0$, we have:
$\cech{\frac{\alpha}{2}}(\eee)\subseteq\ripsa(\eee)\subseteq\cecha(\eee)$.
\end{lemma}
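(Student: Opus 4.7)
The statement consists of two inclusions, and for both I would work directly from the definitions using nothing more than the triangle inequality. There is no genuine obstacle here; the lemma is a standard easy fact, and the only point that deserves a moment's attention is the role of open versus closed balls in the \v Cech definition, which the paper has fixed by using open balls $B(x,\alpha)$.

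For the left inclusion $\cech{\alpha/2}(\eee) \subseteq \ripsa(\eee)$, I would take an arbitrary simplex $[x_0,\ldots,x_k]$ of $\cech{\alpha/2}(\eee)$. By definition of the \v Cech complex, the balls $B(x_i,\alpha/2)$ have a common point $p \in \bigcap_{i=0}^k B(x_i,\alpha/2)$. Then for any two vertices $x_i,x_j$ of the simplex, the triangle inequality gives
\[
\|x_i - x_j\| \;\leq\; \|x_i - p\| + \|p - x_j\| \;<\; \frac{\alpha}{2} + \frac{\alpha}{2} \;=\; \alpha,
\]
so the vertices are pairwise within distance $\alpha$ of one another, and hence $[x_0,\ldots,x_k]$ is a simplex of $\ripsa(\eee)$.

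For the right inclusion $\ripsa(\eee) \subseteq \cecha(\eee)$, I would again start from an arbitrary simplex $[x_0,\ldots,x_k]$ of $\ripsa(\eee)$, which by definition means $\|x_i - x_j\| < \alpha$ for all $i,j$. The simplest witness for membership in $\cecha(\eee)$ is one of the vertices themselves: for each $i$, $\|x_0 - x_i\| < \alpha$ implies $x_0 \in B(x_i,\alpha)$, so $x_0 \in \bigcap_{i=0}^k B(x_i,\alpha)$, and the intersection is nonempty. This gives the required simplex in $\cecha(\eee)$.

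Both steps are immediate; if anything, the only thing to be careful about is consistency of conventions (strict versus non-strict inequality in the Rips definition matching the open-ball convention in the \v Cech definition). Everything else is just the triangle inequality applied once.
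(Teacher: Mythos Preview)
Your proof is correct and follows exactly the same approach as the paper: for the first inclusion you use a common point $p$ in the intersection of the $\alpha/2$-balls and the triangle inequality, and for the second you observe that the vertex $x_0$ itself lies in every $B(x_i,\alpha)$. The only difference is cosmetic (your strict inequalities versus the paper's non-strict ones), which, as you note yourself, is just a matter of aligning the open-ball and Rips conventions.
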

\begin{proof}
The proof is standard. Let $[x_0,\cdots, x_k]$ be an arbitrary
$k$-simplex of $\cech{\frac{\alpha}{2}}(\eee)$. The Euclidean balls of
same radius $\frac{\alpha}{2}$ centered at the $x_i$ have a non-empty
common intersection in $\R^d$. Let $p$ be a point in the
intersection. We then have: $\forall 0\leq i,j\leq k$,
$\|x_i-x_j\|\leq\|x_i-p\|+\|p-x_j\|\leq \alpha$. This implies that
$[x_0,\cdots,x_k]$ is a simplex of $\ripsa(\eee)$, which proves the
first inclusion of the lemma.

Let now $[x_0,\cdots, x_k]$ be an arbitrary $k$-simplex of
$\ripsa(\eee)$. We have $\|x_0-x_i\|\leq\alpha$ for all $i=0, \cdots,
k$. This means that $x_0$ belongs to all the Euclidean balls
$B(x_i,\alpha)$, which therefore have a non-empty common intersection
in $\R^d$. It follows that $[x_0,\cdots,x_k]$ is a simplex of
$\cecha(\eee)$, which proves the second inclusion of the lemma.
\end{proof}

\paragraph{Witness complex.}
Let $\eee$ be a finite subset of $\R^d$, referred to as the landmark
set, and let $\wit$ be another (possibly infinite) subset of $\R^d$,
identified as the witness set. Let also $\alpha\in [0,\infty)$.
\begin{pliste}
\item Given a point
$w\in\wit$ and a $k$-simplex $\simplex$ with vertices in $\eee$, $w$ is an
{\em $\alpha$-witness} of $\simplex$ (or, equivalently, $w$ {\em
  $\alpha$-witnesses} $\simplex$) if the vertices of $\simplex$ lie
within Euclidean distance $(\dist_k(w)+\alpha)$ of $w$, where
$\dist_k(w)$ denotes the Euclidean distance between $w$ and its
$(k+1)$th nearest landmark in the Euclidean metric. 
\item The {\em $\alpha$-witness complex} of $\eee$ {\em
  relative to} $\wit$, or $\winfaw(\eee)$ for short, is the maximum
  abstract simplicial complex, with vertices in $\eee$, whose faces
  are $\alpha$-witnessed by points of $\wit$.
\end{pliste}
When $\alpha=0$, the $\alpha$-witness
complex coincides with the standard witness complex $\winfw(\eee)$,
introduced in \cite{ds-wcd-07}.  The $\alpha$-witness
complex is also closely related to the \v Cech complex, though the
relationship is a bit more subtle than in the case of the Rips
complex:
\begin{lemma}\label{lem-cech-vs-winf}
Let $\eee,\wit\subseteq\R^d$ be such that $\eee$ is finite. If every
point of $\eee$ lies within Euclidean distance $l$ of $\wit$, then for
all $\alpha>l$ we have:
$\cech{\frac{\alpha-l}{2}}(\eee)\subseteq\winfaw(\eee)$. In addition,
if the Euclidean distance from any point of $\wit$ to its second
nearest neighbor in $\eee$ is at most $l'$, then for all $\alpha>0$ we
have: $\winfaw(\eee)\subseteq\cech{2(\alpha+l')}(\eee)$.
\end{lemma}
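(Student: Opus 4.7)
The plan is to treat the two inclusions by distinct but parallel arguments: for the first, I build a witness by choosing a point of $\wit$ near the common intersection of the \v Cech balls; for the second, I extract pairwise distance bounds from the \emph{edges} of a witness simplex and convert those into a \v Cech statement via Lemma \ref{lem-cech-vs-rips}. In both cases the work reduces to short triangle-inequality chains once the correct reference points are identified.

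For the first inclusion, let $\sigma = [x_0, \ldots, x_k]$ be a $k$-simplex of $\cech{\frac{\alpha-l}{2}}(\eee)$. By definition of the \v Cech complex there is a point $p$ common to all the balls $B(x_i, \frac{\alpha-l}{2})$. Using the hypothesis that every point of $\eee$ lies within distance $l$ of $\wit$, pick some $w \in \wit$ with $\|x_0 - w\| \leq l$. Two triangle inequalities then yield $\|x_i - w\| \leq \|x_i - p\| + \|p - x_0\| + \|x_0 - w\| \leq \frac{\alpha-l}{2} + \frac{\alpha-l}{2} + l = \alpha$ for every $i$. Since the quantity $\dist_k(w)$ appearing in the definition of $\alpha$-witness is nonnegative, this inequality already certifies that $w$ is an $\alpha$-witness of $\sigma$; applying the same argument to every face of $\sigma$ then gives $\sigma \in \winfaw(\eee)$.

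For the second inclusion, let $\sigma = [x_0, \ldots, x_k] \in \winfaw(\eee)$. Rather than use a witness of $\sigma$ itself, whose $\dist_k(w)$ term is not controlled by the hypothesis, I look at each $1$-face $[x_i, x_j]$ of $\sigma$. Since $\winfaw(\eee)$ is closed under taking faces, this edge admits some $\alpha$-witness $w_{ij} \in \wit$, and the hypothesis that $w_{ij}$'s second nearest landmark lies at distance at most $l'$ gives $\dist_1(w_{ij}) \leq l'$. Hence $\|x_i - w_{ij}\|$ and $\|x_j - w_{ij}\|$ are both at most $l' + \alpha$, so $\|x_i - x_j\| \leq 2(\alpha + l')$. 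All pairwise distances being bounded by $2(\alpha + l')$, the simplex $\sigma$ lies in $\rips{2(\alpha+l')}(\eee)$, hence in $\cech{2(\alpha+l')}(\eee)$ by Lemma \ref{lem-cech-vs-rips}.

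The only real subtlety is the asymmetric role played by the quantity $\dist_k(w)$ in the definition of $\alpha$-witness. In one direction it works for free, so any nearby witness suffices regardless of its nearest-landmark structure; in the other direction it could a priori be large for high-dimensional simplices, while the hypothesis only bounds the distance to the second nearest landmark. Localising on edges sidesteps this by replacing $\dist_k$ with $\dist_1$, at the cost of a factor $2$ that is absorbed by the Rips--\v Cech comparison.
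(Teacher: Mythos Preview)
Your proof is correct and follows essentially the same route as the paper's. The first inclusion is handled identically: pick a witness $w\in\wit$ near $x_0$ and use the triangle inequality plus $\dist_k(w)\geq 0$. For the second inclusion the paper uses only the edges $[x_0,x_i]$ emanating from a fixed vertex to obtain $\|x_0-x_i\|\leq 2(\alpha+l')$ and then observes directly that $x_0\in\bigcap_i B(x_i,2(\alpha+l'))$; your detour through $\rips{2(\alpha+l')}(\eee)$ and Lemma~\ref{lem-cech-vs-rips} amounts to the same computation, since that lemma's Rips-to-\v Cech step is precisely the ``$x_0$ lies in every ball'' argument.
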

\begin{proof}
Let $[x_0,\cdots,x_k]$ be a $k$-simplex of
$\cech{\frac{\alpha-l}{2}}(\eee)$. This means that $\bigcap_{i=0}^k
B(x_i,\frac{\alpha-l}{2})\neq\emptyset$, and as a result, that
$\|x_0-x_i\|\leq \alpha-l$ for all $i=0,\cdots,k$. Let $w$ be a point
of $\wit$ closest to $x_0$ in the Euclidean metric. By the hypothesis
of the lemma, we have $\|w-x_0\|\leq l$, therefore $x_0,\cdots,x_k$
lie within Euclidean distance $\alpha$ of $w$. Since the Euclidean
distances from $w$ to its nearest points of $\eee$ are non-negative,
$w$ is an $\alpha$-witness of $[x_0,\cdots,x_k]$ and of all its
faces. As a result, $[x_0,\cdots,x_k]$ is a simplex of
$\winfaw(\eee)$.

Consider now a $k$-simplex $[x_0,\cdots, x_k]$ of $\winfaw(\eee)$. If
$k=0$, then the simplex is a vertex $[x_0]$, and therefore it belongs
to $\cechap(\eee)$ for all $\alpha'>0$. Assume now that $k\geq
1$. Edges $[x_0,x_1],\cdots, [x_0,x_k]$ belong also to
$\winfaw(\eee)$, hence they are $\alpha$-witnessed by points of
$\wit$. Let $w_i\in\wit$ be an $\alpha$-witness of
$[x_0,x_i]$. Distances $\|w_i-x_0\|$ and $\|w_i-x_i\|$ are bounded
from above by $\dist_2(w_i)+\alpha$, where $\dist_2(w_i)$ is the
Euclidean distance from $w_i$ to its second nearest point of $\eee$,
which by assumption is at most $l'$. It follows that
$\|x_0-x_i\|\leq\|x_0-w_i\|+\|w_i-x_i\|\leq 2\alpha+2l'$. Since this
is true for all $i=0,\cdots,k$, we conclude that $x_0$ belongs to the
intersection $\bigcap_{i=0}^k B(x_i, 2(\alpha+l'))$, which is
therefore non-empty. As a result, $[x_0,\cdots,x_k]$ is a simplex of
$\cech{2(\alpha+l')}(\eee)$.
\end{proof}
\begin{corollary}\label{cor-cech-vs-winf}
Let $X$ be a compact subset of $\R^d$, and let
$\eee\subseteq\wit\subseteq\R^d$ be such that $\eee$ is finite. Assume
that $\disth(X, \wit)\leq\delta$ and that $\disth(\wit, \eee)\leq\e$,
with $\e+\delta<\frac{1}{4}\;\diamcc(X)$. Then, for all
$\alpha>\e$, we have:
$\cech{\frac{\alpha-\e}{2}}(\eee)\subseteq\winfaw(\eee)
\subseteq\cech{2\alpha+6(\e+\delta)}(\eee)$. In particular, if
$\delta\leq\e<\frac{1}{8}\;\diamcc(X)$, then, for all $\alpha\geq
2\e$ we have:
$\cech{\frac{\alpha}{4}}(\eee)\subseteq\winfaw(\eee)\subseteq
\cech{8\alpha}(\eee)$.
\end{corollary}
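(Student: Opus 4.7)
The plan is to derive both inclusions by applying Lemma~\ref{lem-cech-vs-winf} with carefully chosen values of $l$ and $l'$, then obtain the second ``in particular'' statement by direct substitution.

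For the first inclusion, I would observe that the inclusion $\eee\subseteq\wit$ allows me to take $l=0$ in Lemma~\ref{lem-cech-vs-winf}, since every point of $\eee$ lies at distance $0$ from $\wit$. This immediately gives $\cech{\alpha/2}(\eee)\subseteq\winfaw(\eee)$ for every $\alpha>0$, and since $\frac{\alpha-\e}{2}\leq\frac{\alpha}{2}$, the weaker claimed inclusion $\cech{\frac{\alpha-\e}{2}}(\eee)\subseteq\winfaw(\eee)$ follows.

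For the second inclusion, the goal is to verify $l'\leq 3(\e+\delta)$, i.e.\ that every $w\in\wit$ has a landmark distinct from its nearest one within Euclidean distance $3(\e+\delta)$. Given $w\in\wit$, I would take $x_1\in\eee$ nearest to $w$ (so $\|w-x_1\|\leq\e$) and $y\in X$ nearest to $w$ (so $\|w-y\|\leq\delta$), and let $X_0$ be the path-connected component of $X$ containing $y$. By assumption $\diam(X_0)\geq\diamcc(X)>4(\e+\delta)$, so there exist $a,b\in X_0$ with $\|a-b\|>4(\e+\delta)$; by the triangle inequality at least one of them, call it $y^{**}$, satisfies $\|w-y^{**}\|>2(\e+\delta)$. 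Fixing a path in $X_0$ from $y$ to $y^{**}$ and applying the intermediate value theorem to the continuous function $t\mapsto\|w-\gamma(t)\|$, I would extract a point $y^*\in X_0$ with $\|w-y^*\|=2(\e+\delta)$. From the Hausdorff hypotheses one gets $\disth(X,\eee)\leq\e+\delta$, so the landmark $x^*\in\eee$ closest to $y^*$ satisfies $\|y^*-x^*\|\leq\e+\delta$. This gives $\|w-x^*\|\leq 2(\e+\delta)+(\e+\delta)=3(\e+\delta)$, while $\|y^*-x_1\|\geq 2(\e+\delta)-\e=\e+2\delta>\e+\delta\geq\|y^*-x^*\|$ ensures $x^*\neq x_1$. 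Feeding $l'=3(\e+\delta)$ into Lemma~\ref{lem-cech-vs-winf} yields $\winfaw(\eee)\subseteq\cech{2\alpha+6(\e+\delta)}(\eee)$.

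The second part of the corollary is then a routine substitution: under $\delta\leq\e$ and $\alpha\geq 2\e$, one has $\frac{\alpha-\e}{2}\geq\frac{\alpha}{4}$, and $2\alpha+6(\e+\delta)\leq 2\alpha+12\e\leq 8\alpha$, so both inclusions specialize as stated.

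The main obstacle is the construction of $y^*$ in the second step: the bound $3(\e+\delta)$ is tight only if one measures distances from $w$ rather than from $y$ along the path, so the intermediate value theorem must be applied to $\|w-\cdot\|$, not to $\|y-\cdot\|$. Taking $\|w-y^*\|$ exactly equal to $2(\e+\delta)$ is precisely the value that simultaneously separates $y^*$ from $x_1$ (forcing $x^*\neq x_1$) and keeps the final triangle-inequality bound tight; any larger choice would inflate the constant. A minor technicality in the degenerate case $\delta=0$ (where the separation $\|y^*-x_1\|>\|y^*-x^*\|$ could become an equality) is handled by choosing $y^*$ slightly farther, at distance $2(\e+\delta)+\mu$ for arbitrarily small $\mu>0$, and using finiteness of $\eee$ to pass to $\mu=0$.
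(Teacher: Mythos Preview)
Your proof is correct and follows essentially the same approach as the paper: apply Lemma~\ref{lem-cech-vs-winf} with suitable $l$ and $l'$, establishing $l'\leq 3(\e+\delta)$ via a connectedness argument that locates a point of $X$ at distance $\approx 2(\e+\delta)$ from $w$ and projects it to a second nearby landmark. The only minor differences are that you take $l=0$ (exploiting $\eee\subseteq\wit$) where the paper takes $l=\e$, and you phrase the connectedness step as IVT along a path where the paper argues that the component must meet the sphere $\partial B(w,2(\e+\delta)+\lambda)$; your perturbation parameter $\mu$ plays exactly the role of the paper's $\lambda$.
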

\begin{proof}
Since $\disth(\wit, \eee)\leq\e$, every point of $\eee$ lies within
Euclidean distance $\e$ of $\wit$. As a result, the first inclusion of
Lemma \ref{lem-cech-vs-winf} holds with $l=\e$, that is:
$\cech{\frac{\alpha-\e}{2}}(\eee)\subseteq\winfaw(\eee)$. 

Now, for every point $w\in\wit$, there is a point $p\in\eee$ such that
$\|w-p\|\leq\e$. Moreover, there is a point $x\in X$ such that
$\|w-x\|\leq\delta$, since we assumed that $\disth(X,
\wit)\leq\delta$. Let $X_x$ be the path-connected component of $X$
that contains $x$. Take an arbitrary value $\lambda\in
\left(0,\frac{1}{2}\;\diamcc(X)-2(\e+\delta)\right)$, and consider
the open ball $B(w, 2(\e+\delta)+\lambda)$. This ball clearly
intersects $X_x$, since it contains $x$. Furthermore, $X_x$ is not
contained entirely in the ball, since otherwise we would have:
$\diamcc(X)\leq\diam(X_x)\leq 4(\e+\delta)+2\lambda$, hereby
contradicting the fact that
$\lambda<\frac{1}{2}\;\diamcc(X)-2(\e+\delta)$. Hence, there is a
point $y\in X$ lying on the bounding sphere of $B(w,
2(\e+\delta)+\lambda)$. Let $q\in\eee$ be closest to $y$. We have
$\|y-q\|\leq\e+\delta$, since our hypothesis implies that
$\disth(X,\eee)\leq\disth(X,\wit)+\disth(\wit,\eee)\leq\delta+\e$. It
follows then from the triangle inequality that $\|p-q\|\geq
\|w-y\|-\|w-p\|-\|y-q\|\geq
2(\e+\delta)+\lambda-(\e+\delta)-(\e+\delta)=\lambda>0$. Thus, $q$ is
different from $p$, and therefore the ball $B(w,
3(\e+\delta)+\lambda)$ contains at least two points of $\eee$. Since
this is true for arbitrarily small values of $\lambda$, the Euclidean
distance from $w$ to its second nearest neighbor in $\eee$ is at most
$3(\e+\delta)$.  It follows that the second inclusion of Lemma
\ref{lem-cech-vs-winf} holds with $l'=3(\e+\delta)$, that is:
$\winfaw(\eee)\subseteq\cech{2(\alpha+3(\e+\delta))}(\eee)$.
\end{proof}

As mentioned at the head of the section, slightly tighter bounds can
be worked out using specific properties of Euclidean spaces. For the
case of the Rips complex, this was done by de Silva and Ghrist
\cite{dg-csnph-07, g-bptd-07}. Their approach can be combined with
ours in the case of the witness complex.

%%%%%%%%%%%%%%%%%%%%%%%%%%%%%%%%%%%%%%%%%%%%%%%%%%%%%%%%%%%%%%%%%%%%%%%%%%

\section{Structural properties of filtrations over compact subsets of $\R^d$}
\label{sec-struct-res-compact}

Throughout this section, we use classical concepts of algebraic
topology, such as homotopy equivalences, deformation retracts,
or singular homology. We refer the reader to \cite{h-at-01} for a good
introduction to these concepts.  

Given a compact set $X\subset\R^d$, we denote by $\dist_X$ the {\em
  distance function} defined by $\dist_X(x) = \inf \{\|x-y\| : y \in X
\}$. Although $\dist_X$ is not differentiable, it is possible to define a
notion of critical point for distance functions and we denote by
$\wfs(X)$ the {\em weak feature size} of $X$, defined as the smallest
positive critical value of the distance function to $X$
\cite{cl-lma-05}.  We do not explicitly use the notion of critical
value in the following, but only its relationship with the topology of
the {\em offsets} $X^\alpha = \{ x \in \R^d : \dist_X(x) \leq \alpha \}$,
stressed in the following result from \cite{g-cptdf-93}:

\begin{lemma}[Isotopy Lemma] \label{lemma:isotopy}
If $0 < \alpha < \alpha'$ are such that there is no critical value of
$\dist_X$ in the closed interval $[\alpha,\alpha']$, then $X^{\alpha}$
and $X^{\alpha'}$ are homeomorphic (and even isotopic), and
$X^{\alpha'}$ deformation retracts onto $X^{\alpha}$.
\end{lemma}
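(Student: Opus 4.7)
The approach I would take is classical and essentially the one of Grove, to whom this lemma is attributed: even though $\dist_X$ is not differentiable, one can equip $\R^d\setminus X$ with a generalized gradient whose integral flow plays the role of a Morse gradient, and then run the standard ``no critical value implies isotopy'' argument.

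The first step is to define the generalized gradient. For each $x\in\R^d\setminus X$, let $\Gamma_X(x)\subset X$ be the (compact, non-empty) set of points of $X$ at distance $\dist_X(x)$ from $x$, and let $\Theta_X(x)$ be the center of the smallest enclosing ball of $\Gamma_X(x)$. Set
$$\nabla\dist_X(x) \;=\; \frac{x - \Theta_X(x)}{\dist_X(x)}.$$
A critical point of $\dist_X$ is one at which this vector vanishes, equivalently $x\in\conv\Gamma_X(x)$, and its image is a critical value. I would then show that $x\mapsto\|\nabla\dist_X(x)\|$ is lower semi-continuous, and combine this with the compactness of the closed slab $\{x:\alpha\le\dist_X(x)\le\alpha'\}$ and the hypothesis of no critical value in $[\alpha,\alpha']$ to extract a uniform lower bound $\mu>0$ for $\|\nabla\dist_X\|$ throughout the slab.

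The second step is to integrate this generalized gradient. Since $\nabla\dist_X$ is only lower semi-continuous, classical Cauchy--Lipschitz theory does not apply, and I would follow the construction of Lieutier (or of Grove): build a continuous flow $\Phi_t(x)$ on $\R^d\setminus X$ as a uniform limit of Euler-type polygonal approximations, and establish the distance-growth estimate
$$\dist_X(\Phi_t(x)) - \dist_X(x) \;\ge\; \mu\, t$$
as long as the orbit segment stays in the slab. Constructing this flow and proving its continuity and monotonicity, rather than the subsequent geometric reasoning, is the main technical obstacle of the proof, and is the reason most references quote the lemma rather than reprove it.

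The third step is to extract the two conclusions from the flow. For the strong deformation retraction, define $r_s(x)=\Phi_{-s\,\tau(x)}(x)$ for $x\in X^{\alpha'}\setminus\interior X^\alpha$, where $\tau(x)\ge 0$ is the first time the reversed flow reaches $\partial X^\alpha$, and $r_s(x)=x$ for $x\in X^\alpha$; the lower bound $\mu$ guarantees $\tau$ is finite and continuous, so $r$ realizes the deformation retraction of $X^{\alpha'}$ onto $X^\alpha$. For the ambient isotopy, multiply the vector field $\nabla\dist_X/\|\nabla\dist_X\|^2$ by a smooth cut-off that vanishes outside a compact neighborhood of the slab, integrate the resulting compactly supported field on $\R^d$, and observe that in unit time the corresponding isotopy carries $X^\alpha$ onto $X^{\alpha'}$; the homeomorphism of the two offsets then follows immediately.
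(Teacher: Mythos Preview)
The paper does not prove this lemma: it quotes it from Grove \cite{g-cptdf-93} and uses it as a black box throughout Section~\ref{sec-struct-res-compact}. Your sketch is correct and is precisely the classical argument one finds in the cited literature (Grove, and in the Euclidean setting Lieutier and Chazal--Cohen-Steiner--Lieutier): define the generalized gradient of $\dist_X$, use lower semi-continuity of its norm together with compactness of the slab to get a uniform lower bound, build the flow by Euler approximation, and read off the retraction and the isotopy.

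One small wrinkle in your third step: the field $\nabla\dist_X/\|\nabla\dist_X\|^2$ is itself not continuous, so multiplying it by a smooth cut-off does not yield something you can integrate classically; you should either reuse the Euler-scheme flow from step two (suitably reparametrized and damped outside the slab), or follow Grove's original route and first replace $\nabla\dist_X$ by a smooth gradient-like vector field on a neighborhood of the slab before integrating. Either fix is routine, and you have already correctly flagged the flow construction as the genuine technical content.
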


In particular the hypothesis of the lemma is satisfied when $0 <
\alpha_1 < \alpha_2 < \wfs(X)$. In other words, all the offsets of $X$
have the same topology in the interval $(0,\wfs(X))$.

\subsection{Results on homology}
\label{sec-res-homology}

We use singular homology with coefficients in an arbitrary field --
omitted in our notations. In the following, we repeatedly make use of
the following standard result of linear algebra:
\begin{lemma}[Sandwich Lemma]\label{sandwich-lemma}
Consider the following sequence of homomorphisms between
finite-dimensional vector spaces over a same field:
$
A\rightarrow
B\rightarrow
C\rightarrow
D\rightarrow
E\rightarrow
F.
$
Assume that $\rank (A \rightarrow F)=\rank(C \rightarrow D)$. Then,
this quantity also equals the rank of $B \rightarrow E$. In the same way, if 
$
A\rightarrow
B\rightarrow
C\rightarrow
E\rightarrow
F
$ 
is a sequence of homomorphisms such that $\rank (A \rightarrow F)=
\dim C$, then $\rank(B \rightarrow E) = \dim C$.
\end{lemma}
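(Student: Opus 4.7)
The plan is to reduce everything to the elementary fact that, for a composition $g \circ f$ of linear maps, $\rank(g \circ f) \leq \min(\rank f, \rank g)$, and in particular $\rank(g\circ f)$ is bounded above by the rank of \emph{any} intermediate map through which $g\circ f$ factors. Once this is in hand, both statements of the lemma follow by a pinching (``sandwich'') argument on a chain of such inequalities.

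For the first statement, I would observe that the composed map $A\to F$ can be factored through $B\to E$ in the obvious way ($A\to B\to E\to F$), which immediately gives $\rank(A\to F)\leq\rank(B\to E)$. Symmetrically, the map $B\to E$ factors as $B\to C\to D\to E$, so $\rank(B\to E)\leq\rank(C\to D)$. Chaining these yields
\[
\rank(A\to F)\;\leq\;\rank(B\to E)\;\leq\;\rank(C\to D),
\]
and the hypothesis that the two outer quantities coincide forces the middle one to agree with them as well.

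For the second statement, the same factoring argument gives $\rank(A\to F)\leq\rank(B\to E)$. On the other hand, $B\to E$ factors through $C$, so its image is contained in the image of $C\to E$; hence $\rank(B\to E)\leq\dim C$. Combining these with the assumption $\rank(A\to F)=\dim C$ sandwiches $\rank(B\to E)$ between $\dim C$ and $\dim C$.

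There is essentially no obstacle here: the lemma is purely formal, and the only thing to be careful about is making the factorizations explicit and invoking the rank-of-composition inequality in the right direction. The statement is designed as a bookkeeping tool for later arguments in which one has a chain of induced maps between homology groups and wants to transfer rank information from the outer pair (typically a ``long'' inclusion whose persistent rank is known) to an inner pair (the maps one actually wants to control).
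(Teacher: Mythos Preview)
Your argument is correct and matches the paper's proof almost verbatim: both rely on the inequality $\rank(g\circ f)\leq\min\{\rank f,\rank g\}$ to obtain the chain $\rank(A\to F)\leq\rank(B\to E)\leq\rank(C\to D)$ and then pinch. The only cosmetic difference is that the paper handles the second statement by inserting the identity $C\to C$ as the map $C\to D$ and invoking the first statement, whereas you argue it directly; the content is the same.
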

\begin{proof}
Observe that, for any sequence of homomorphisms
$F\stackrel{f}{\rightarrow}G\stackrel{g}{\rightarrow}H$, we have
$\rank (g\circ f)\leq\min\{\rank f, \rank g\}$. Applying this fact to
maps $A\rightarrow F$, $B\rightarrow E$, and $C\rightarrow D$, which
are nested in the sequence of the lemma, we get: $\rank (A\rightarrow
F)\leq\rank (B\rightarrow E)\leq\rank (C\rightarrow D)$, which proves
the first statement of the lemma. As for the second statement, it is
obtained from the first one by letting $D = C$ and taking
$C\rightarrow D$ to be the identity map.
\end{proof}

%%%%%%%%%%%%%%%%%%%%%%%%
\subsubsection{\v Cech filtration}
\label{sec-prop-cech}

Since the \v Cech complex is the nerve of a union of balls, its
topological invariants can be read from the structure of its dual
union. It turns out that unions of balls have been extensively studied
in the past \cite{ccl-pc-06, cl-sctis-07, ceh-spd-05}. Our analysis
relies particularly on the following result, which is an easy
extension of Theorem 4.7 of \cite{cl-sctis-07}:
\begin{lemma} \label{lem-prop-union-of-balls}
Let $X$ be a compact set and $\eee$ a finite set in $\R^d$, such that
$\disth(X,\eee) < \e$ for some $\e < \frac{1}{4}\;\wfs(X)$.  Then, for
all $\alpha,\alpha'\in\left[\e, \wfs(X)-\e\right]$ such that
$\alpha'-\alpha \geq 2\e$, and for all $\lambda\in\left(0,\wfs(X)\right)$,
we have: $\forall k\in\N$, $H_k(X^\lambda)\isom \im i_*$, where
$i_*:H_k(\eee^\alpha)\rightarrow H_k(\eee^{\alpha'})$ is the
homomorphism between homology groups induced by the canonical
inclusion $i:\eee^\alpha\hookrightarrow\eee^{\alpha'}$.  Given an
arbitrary point $x_0 \in X$, the same conclusion holds for homotopy
groups with base-point $x_0$.
\end{lemma}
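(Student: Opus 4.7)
The plan is to prove the lemma by sandwiching the pair $(\eee^\alpha,\eee^{\alpha'})$ between offsets of $X$ using the Hausdorff condition, and then exploiting the Isotopy Lemma to control the outer terms.

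\textbf{Step 1 (geometric sandwich).} First I would use $\disth(X,\eee)<\e$ to produce the chain of inclusions
\[
X^{\alpha-\e}\ \subseteq\ \eee^{\alpha}\ \subseteq\ X^{\alpha+\e}\ \subseteq\ \eee^{\alpha'}\ \subseteq\ X^{\alpha'+\e},
\]
the only non-obvious one being $X^{\alpha+\e}\subseteq\eee^{\alpha'}$, which follows from $X\subseteq\eee^{\e}$ together with the hypothesis $\alpha'-\alpha\geq 2\e$. The hypotheses $\alpha,\alpha'\in[\e,\wfs(X)-\e]$ and $\alpha'\geq\alpha+2\e\geq 3\e$ ensure that the offset parameters $\alpha-\e$, $\alpha+\e$, $\alpha'+\e$ all lie in $[0,\wfs(X)]$, with only $\alpha-\e$ potentially equal to $0$.

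\textbf{Step 2 (apply the functor).} Applying the singular homology functor $H_k(\,\cdot\,)$ yields a sequence
\[
H_k(X^{\alpha-\e})\xrightarrow{f_1}H_k(\eee^{\alpha})\xrightarrow{f_2}H_k(X^{\alpha+\e})\xrightarrow{f_3}H_k(\eee^{\alpha'})\xrightarrow{f_4}H_k(X^{\alpha'+\e}),
\]
whose composition $f_4\circ f_3\circ f_2\circ f_1$ is induced by the inclusion $X^{\alpha-\e}\hookrightarrow X^{\alpha'+\e}$. By the Isotopy Lemma, since the interval $(0,\wfs(X))$ contains no critical value of $\dist_X$, this inclusion (and the sub-inclusions $X^{\alpha-\e}\hookrightarrow X^{\alpha+\e}$ and $X^{\alpha+\e}\hookrightarrow X^{\alpha'+\e}$) is a homotopy equivalence. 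Hence $f_2\circ f_1$ and $f_4\circ f_3$ are both isomorphisms, which forces $f_2$ to be surjective and $f_3$ to be injective.

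\textbf{Step 3 (identifying the image).} Since the inclusion $\eee^{\alpha}\hookrightarrow\eee^{\alpha'}$ factors through $X^{\alpha+\e}$, the induced map $i_*$ equals $f_3\circ f_2$. Combining surjectivity of $f_2$ with injectivity of $f_3$ gives
\[
\im i_* \ =\ \im(f_3\circ f_2)\ =\ f_3\bigl(H_k(X^{\alpha+\e})\bigr)\ \isom\ H_k(X^{\alpha+\e})\ \isom\ H_k(X^\lambda),
\]
the last isomorphism being another application of the Isotopy Lemma (with the caveat that when $\alpha-\e=0$ one invokes the standard fact that $X^\beta$ deformation retracts onto $X$ for any $\beta\in(0,\wfs(X))$). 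This is the required conclusion for homology. Note that one could also phrase this dimension count through the Sandwich Lemma, but the surjective-then-injective factorization gives the isomorphism directly without needing that $H_k$ be finite-dimensional.

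\textbf{Step 4 (homotopy version and caveats).} The same argument transports verbatim to homotopy groups based at an arbitrary $x_0\in X$, because the key input is purely functorial: homotopy equivalences of pairs induce isomorphisms on $\pi_k$, and the factorization $i_*=f_3\circ f_2$ with $f_2$ surjective and $f_3$ injective still yields $\im i_*\isom\pi_k(X^{\alpha+\e})\isom\pi_k(X^\lambda)$. The main delicate point, which I expect to be the principal obstacle, is carefully handling the boundary parameter $\alpha-\e=0$ so that $X^{\alpha-\e}$ can still be declared homotopy equivalent to $X^\lambda$; this requires invoking the deformation-retract version of the Isotopy Lemma rather than its statement as written, and it is the reason why the result is an extension of, rather than a direct consequence of, Theorem~4.7 of \cite{cl-sctis-07}.
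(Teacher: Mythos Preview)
Your argument is correct in its essentials and shares the same geometric sandwich as the paper, but the algebraic step is genuinely different. The paper feeds the five-term sequence into its Sandwich Lemma, a rank argument over a field, to get $\rank i_*=\dim H_k(X^\lambda)$; it then explicitly notes that this rank argument does not transfer to homotopy groups and defers to the proof of Theorem~4.7 in \cite{cl-sctis-07} for the $\pi_k$ statement. Your surjective-then-injective factorization $i_*=f_3\circ f_2$ is cleaner: it uses no finite-dimensionality or field hypothesis, and it applies verbatim to $\pi_k$, so you obtain both conclusions in one pass. That is a real gain.

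Your boundary handling, however, is where the paper's device is needed rather than your proposed one. First, not only $\alpha-\e$ but also $\alpha'+\e$ may sit on the boundary (equal to $\wfs(X)$), so the Isotopy Lemma as stated may fail at either end. Second, your fix for $\alpha-\e=0$ --- that $X^\beta$ deformation retracts onto $X$ for $\beta\in(0,\wfs(X))$ --- is not a standard fact for arbitrary compact $X$; the whole point of stating the lemma for $X^\lambda$ with $\lambda>0$ rather than for $X$ is that small offsets need not be homotopy equivalent to $X$ itself. The paper avoids both issues at once by replacing $\e$ with any $\e'\in(\disth(X,\eee),\e)$, which makes all the relevant inequalities strict and keeps every offset parameter in the open interval $(0,\wfs(X))$. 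With that one-line adjustment your proof goes through, and is arguably more transparent than the original.
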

\begin{proof}
We can assume without loss of generality that $\e < \alpha < \alpha' -
2\e < \wfs(X) - 3\e$, since otherwise we can replace $\e$ by any $\e'
\in (d_H(X,\eee),\e)$. From the hypothesis
%that $\disth(X,\eee)<\e$ 
we deduce the following sequence of inclusions:
\begin{equation}\label{eq-sandwich-union-of-balls}
X^{\alpha - \e} \hookrightarrow \eee^\alpha
\hookrightarrow X^{\alpha + \e}
%\hookrightarrow X^{\alpha + \e}
\hookrightarrow \eee^{\alpha'}
\hookrightarrow X^{{\alpha'} + \e}
\end{equation}
%
%where $X^{\alpha + \e}\hookrightarrow X^{\alpha + \e}$ is the identity map, inserted in the sequence on purpose. 
By the Isotopy Lemma \ref{lemma:isotopy}, for all
$0<\beta<\beta'<\wfs(X)$, the canonical inclusion
$X^\beta\hookrightarrow X^{\beta'}$ is a homotopy equivalence. As a
consequence, Eq. (\ref{eq-sandwich-union-of-balls}) induces a sequence
of homomorphisms between homology groups, such that all homomorphisms
between homology groups of $X^{\alpha-\e}, X^{\alpha+\e},
X^{\alpha'+\e}$ are isomorphisms. It follows then from the Sandwich
Lemma \ref{sandwich-lemma} that $i_*: H_k(\eee^\alpha)\rightarrow
H_k(\eee^{\alpha'})$ has same rank as these isomorphisms. Now, this
rank is equal to the dimension of $H_k(X^\lambda)$, since the
$X^\beta$ are homotopy equivalent to $X^\lambda$ for all
$0<\beta<\wfs(X)$. It follows that $\im i_*\isom\dim H_k(X^\lambda)$,
since our ring of coefficients is a field.  The case of homotopy
groups is a little trickier, since replacing homology groups by
homotopy groups does not allow us to use the above rank
argument. However, we can use the same proof as in Theorem 4.7 of
\cite{cl-sctis-07} to conclude.
\end{proof}

Observe that Lemma \ref{lem-prop-union-of-balls} does not guarantee
the retrieval of the homology of $X$. Instead, it deals with
sufficiently small offsets of $X$, which are homotopy equivalent to
one another but possibly not to $X$ itself. In the special case where
$X$ is a smooth submanifold
%\marginpar{Can we replace smooth submanifolds by topological submanifolds of $\R^d$ here?} 
of $\R^d$ however, $X^\lambda$ and $X$ are homotopy equivalent, and
therefore the theorem guarantees the retrieval of the homology of $X$.
From an algorithmic point of view, the main drawback of Lemma
\ref{lem-prop-union-of-balls} is that computing the homology of a
union of balls or the image of the homomorphism $i_*$ is usually
awkward. As mentionned in \cite{cl-sctis-07,ceh-spd-05} this can be
done by computing the persistence of the $\alpha$-shape or
$\lambda$-medial axis filtrations associated to $L$ but there do not
exist efficient algorithms to compute these filtrations in dimension
more than $3$. In the following we show that we can still reliably
obtain the homology of $X$ from easier to compute filtrations, namely
the Rips and Witness complexes filtrations.

Consider now the \v Cech complex $\cecha(\eee)$, for any value
$\alpha>0$. By definition, $\cecha(\eee)$ is the nerve of the open
cover $\{\eee^\alpha\}$ of $\eee^\alpha$. Since the elements of
$\{\eee^\alpha\}$ are open Euclidean balls, they are convex, and
therefore their intersections are either empty or convex. It follows
that $\{\eee^\alpha\}$ satisfies the hypotheses of the {\em nerve
  theorem}, which implies that $\cecha(\eee)$ and $\eee^\alpha$ are
homotopy equivalent -- see {\em e.g.}  \cite[Corollary
  4G.3]{h-at-01}. 
%Combining this standard fact with the result of
%Lemma \ref{lem-prop-union-of-balls}, 
We thus get the following diagram,
where horizontal arrows are canonical inclusions, and vertical arrows
are homotopy equivalences provided by the nerve theorem:
\begin{equation}\label{eq-cech-vs-union-of-balls}
\begin{array}{ccc}
\eee^\alpha&\hookrightarrow&\eee^{\alpha'}\\
\uparrow&~&\uparrow\\
\cecha(\eee)&\hookrightarrow&\cech{\alpha'}(\eee)
\end{array}
\end{equation}

Determining whether this diagram commutes is not straightforward. The
following result, based on standard arguments of algebraic topology,
shows that there exist homotopy equivalences between the union of
balls and the \v Cech complex that make the above diagram commutative
at homology and homotopy levels:
\begin{lemma} \label{lemma:parametric_nerve}
Let $L$ be a finite set of points in $\R^d$ and let $0 <\alpha <
\alpha'$. Then, there exist homotopy equivalences $\cecha(\eee)
\rightarrow L^\alpha$ and $\cechap(\eee)\rightarrow L^{\alpha'}$ 
%that make the diagram of Eq. (\ref{eq-cech-vs-union-of-balls}) commute at
%homology and homotopy levels.
such that, for all $k \in \N$, the diagram of Eq.
(\ref{eq-cech-vs-union-of-balls}) induces the following commutative diagrams:
\begin{equation*}
\begin{array}{ccccccc}
H_k(\eee^\alpha) & \rightarrow & H_k(\eee^{\alpha'}) & \hskip1cm &
\pi_k(\eee^\alpha) & \rightarrow &
\pi_k(\eee^{\alpha'})\\ \uparrow&~&\uparrow & \mbox{\rm and} &
\uparrow&~&\uparrow
\\ H_k(\cecha(\eee))&\rightarrow&H_k(\cech{\alpha'}(\eee)) & \hskip1cm
& \pi_k(\cecha(\eee)) & \rightarrow & \pi_k(\cech{\alpha'}(\eee))
\end{array}
\end{equation*}
where vertical arrows are isomorphisms.
\end{lemma}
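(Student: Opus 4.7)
\smallskip\noindent \textbf{Proof proposal.} The plan is to factor the nerve theorem through an auxiliary space that carries strict naturality with respect to enlarging the parameter, and only invert homotopy equivalences at the very end. Concretely, for each $\alpha>0$ I would introduce the Mayer--Vietoris blowup (homotopy colimit) of the cover $\{B(v,\alpha)\}_{v\in L}$, namely
\[
\Delta^\alpha \;=\; \bigsqcup_{\sigma\in\cecha(L)} |\sigma|\times \Bigl(\bigcap_{v\in\sigma} B(v,\alpha)\Bigr) \;/\!\sim,
\]
with the usual face identifications. This space carries two natural continuous projections: $p_\alpha:\Delta^\alpha\to |\cecha(L)|$ onto the simplex coordinate, and $q_\alpha:\Delta^\alpha\to L^\alpha$ onto the space coordinate.

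The first step is to observe that both $p_\alpha$ and $q_\alpha$ are homotopy equivalences. For $q_\alpha$ this is the standard Mayer--Vietoris (or Segal) version of the nerve theorem: since every non-empty intersection of open balls in $\R^d$ is convex and hence contractible, the cover $\{B(v,\alpha)\}_{v\in L}$ is a good cover of $L^\alpha$, and the projection $q_\alpha$ is a homotopy equivalence by the standard contractible-fiber/CW argument used to prove Corollary~4G.3 in \cite{h-at-01}. For $p_\alpha$, each fiber over the open star of a simplex $\sigma$ is the intersection $\bigcap_{v\in\sigma} B(v,\alpha)$, again convex and hence contractible, so $p_\alpha$ is also a homotopy equivalence.

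The second step is naturality. For $\alpha<\alpha'$ the inclusions of balls yield a canonical continuous map $\tilde i:\Delta^\alpha\to\Delta^{\alpha'}$ which makes the following diagram commute \emph{strictly}:
\[
\begin{array}{ccc}
L^\alpha & \hookrightarrow & L^{\alpha'}\\
\uparrow q_\alpha & & \uparrow q_{\alpha'}\\
\Delta^\alpha & \xrightarrow{\tilde i} & \Delta^{\alpha'}\\
\downarrow p_\alpha & & \downarrow p_{\alpha'}\\
|\cecha(L)| & \hookrightarrow & |\cechap(L)|
\end{array}
\]
Applying $H_k$ (or $\pi_k$ with a basepoint lifted to $\Delta^\alpha$ and pushed into $|\cecha(L)|$ via $p_\alpha$) turns the four vertical arrows into isomorphisms, while keeping the two squares commutative.

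The third step is to extract the desired homotopy equivalences. Choose homotopy inverses $s_\alpha:|\cecha(L)|\to\Delta^\alpha$ and $s_{\alpha'}:|\cechap(L)|\to\Delta^{\alpha'}$ of $p_\alpha$ and $p_{\alpha'}$, and define $\phi_\alpha=q_\alpha\circ s_\alpha$ and $\phi_{\alpha'}=q_{\alpha'}\circ s_{\alpha'}$. Each is a homotopy equivalence. To check commutativity up to homotopy of the resulting square, write $i:|\cecha(L)|\hookrightarrow|\cechap(L)|$ and $j:L^\alpha\hookrightarrow L^{\alpha'}$. Then
\[
\phi_{\alpha'}\circ i \;=\; q_{\alpha'}\circ s_{\alpha'}\circ i \;\simeq\; q_{\alpha'}\circ s_{\alpha'}\circ p_{\alpha'}\circ \tilde i\circ s_\alpha \;\simeq\; q_{\alpha'}\circ \tilde i\circ s_\alpha \;=\; j\circ q_\alpha\circ s_\alpha \;=\; j\circ \phi_\alpha,
\]
using $i\circ p_\alpha=p_{\alpha'}\circ\tilde i$ in the first homotopy and $s_{\alpha'}\circ p_{\alpha'}\simeq\id$ in the second. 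Passing to $H_k$ or $\pi_k$ yields the commutative diagrams of the statement, with the vertical arrows isomorphisms.

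The one delicate point will be the usual care with the nerve theorem: checking that $p_\alpha$ is indeed a homotopy equivalence requires a good-cover argument on $\Delta^\alpha$ rather than on $L^\alpha$, and for homotopy groups one needs to track basepoints consistently through $s_\alpha$ and $\tilde i$. Both are standard, but they are where the technical effort of the proof concentrates.
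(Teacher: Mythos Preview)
Your proposal is correct and follows essentially the same route as the paper: both factor through the Mayer--Vietoris blowup (what the paper calls $\Delta L^\alpha$, the realization of the complex of spaces associated to the cover, following Hatcher~4G), use that the two projections to $L^\alpha$ and to the nerve are homotopy equivalences commuting strictly with the inclusions in $\alpha$, and then invert one of them. Your write-up is in fact a bit more explicit than the paper's about the final step of choosing homotopy inverses and verifying commutativity up to homotopy, which the paper leaves implicit when it says ``combined with (\ref{eq-blowup-diag1}) and (\ref{eq-blowup-diag2})''.
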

\begin{proof}
Our approach consists in a quick review of the proof of the nerve
theorem provided in Section 4G of \cite{h-at-01}, and in a simple
extension of the main arguments to our context. 

As mentioned earlier, the open cover $\{\eee^\alpha\}$ satisfies the
conditions of the nerve theorem, namely: for all points $x_0,\cdots,
x_k\in\eee$, $\bigcap_{l=0}^k B(x_l, \alpha)$ is either empty, or
convex and therefore contractible. From this cover we construct a
topological space $\Delta L^\alpha$ as follows: let $\Delta^n$ denote
the standard $n$-simplex, where $n=\#\eee-1$. To each non-empty subset
$S$ of $\eee$ we associate the face $[S]$ of $\Delta^n$ spanned by the
elements of $S$, as well as the space $B_S(\alpha)=\bigcap_{s\in S}
B(s, \alpha)\subseteq\eee^\alpha$.  $\Delta L^\alpha$ is then the
subspace of $\eee^\alpha\times\Delta^n$ defined by:
$$
\Delta\eee^\alpha = \bigcup_{\emptyset\neq S\subseteq\eee} B_S(\alpha) \times [S]
$$ The space $\Delta L^{\alpha'}$ is built similarly. The product
structures of $\Delta\eee^\alpha$ and $\Delta L^{\alpha'}$ imply the
existence of canonical projections
$p_\alpha:\Delta\eee^\alpha\rightarrow \eee^\alpha$ and
$p_{\alpha'}:\Delta\eee^{\alpha'}\rightarrow \eee^{\alpha'}$. These
projections commute with the canonical inclusions $\Delta
L^\alpha\hookrightarrow\Delta L^{\alpha'}$ and
$L^\alpha\hookrightarrow L^{\alpha'}$, which implies that the
following diagram:
\begin{equation}\label{eq-blowup-diag1}
\begin{array}{ccc}
\eee^\alpha&\hookrightarrow&\eee^{\alpha'}\\
p_\alpha \uparrow&~&\uparrow p_{\alpha'}\\
\Delta\eee^\alpha&\hookrightarrow&\Delta\eee^{\alpha'}
\end{array}
\end{equation}
induces commutative diagrams at homology and homotopy levels.
Moreover, since $\{\eee^\alpha\}$ is an open cover of $\eee^\alpha$,
which is paracompact, $p_\alpha$ is a homotopy equivalence
\cite[Prop. 4G.2]{h-at-01}. The same holds for $p_{\alpha'}$, and
therefore $p_\alpha$ and $p_{\alpha'}$ induce isomorphisms at homology
and homotopy levels.

We now show that, similarly, there exist homotopy equivalences
$\Delta\eee^\alpha \rightarrow \cech{\alpha}(L)$ and
$\Delta\eee^{\alpha'} \rightarrow \cech{\alpha'}(L)$ that commute with
the canonical inclusions
$\Delta\eee^\alpha\hookrightarrow\Delta\eee^{\alpha'}$ and
$\cech{\alpha}(L)\hookrightarrow\cechap(\eee)$.  This follows in fact
from the proof of Corollary 4G.3 of \cite{h-at-01}. Indeed, using the
notion of {\em complex of spaces} introduced in \cite[Section
  4G]{h-at-01}, it can be shown that $\Delta L^\alpha$ is the
realization of the complex of spaces associated with the cover
$\{L^\alpha\}$ --- see the proof of \cite[Prop. 4G.2]{h-at-01}. Its
base is the barycentric subdivision $\Gamma^\alpha$ of
$\cech{\alpha}(L)$, where each vertex corresponds to a non-empty
finite intersection $B_S(\alpha)$ for some $S \subseteq L$, and where
each edge connecting two vertices $S\subset S'$ 
corresponds to the canonical
%the maps
%associated to the edges of the $1$-skeleton of $\Gamma$ are the
%
inclusion $B_{S'}(\alpha)\hookrightarrow B_S(\alpha)$.
%of inclusion between the intersections of balls associated with its
%vertices. 
In the same way, $\Delta\eee^{\alpha'}$ is the realization of a
complex of spaces built over the barycentric subdivision
$\Gamma^{\alpha'}$ of $\cech{\alpha'}(L)$. Now, since the non-empty
finite intersections $B_S(\alpha)$ (resp. $B_S(\alpha')$) are
contractible, the map $q_\alpha : \Delta L^\alpha \rightarrow
\Gamma^\alpha$ (resp. $q_{\alpha'} : \Delta\eee^{\alpha'}
\rightarrow \Gamma^{\alpha'}$) induced by sending each open set
$B_S(\alpha)$ (resp. $B_S(\alpha')$) to a point is a homotopy
equivalence \cite[Prop. 4G.1 and Corol. 4G.3]{h-at-01}.
%, by Proposition 4G.1 of \cite{h-at-01} --- see also Corollary 4G.3 therein. 
Furthermore, by construction, $q_\alpha$ is the
restriction of $q_{\alpha'}$ to $\Delta L^\alpha$. Therefore,
\begin{equation}\label{eq-blowup-diag2}
\begin{array}{ccc}
\Delta\eee^\alpha&\hookrightarrow&\Delta\eee^{\alpha'}\\
q_\alpha \downarrow&~&\downarrow q_{\alpha'}\\
\Gamma^\alpha&\hookrightarrow&\Gamma^{\alpha'}\\
\end{array}
\end{equation}
is a commutative diagram where vertical arrows are homotopy
equivalences. Now, it is well-known that $\Gamma^\alpha$ and
$\Gamma^{\alpha'}$ are homeomorphic to $\cech{\alpha}(L)$ and
$\cech{\alpha'}(L)$ respectively, and that the homeomorphisms commute
with the inclusion. Combined with (\ref{eq-blowup-diag1}) and
(\ref{eq-blowup-diag2}), this fact proves Lemma~\ref{lemma:parametric_nerve}.
\end{proof}

Combining Lemmas \ref{lem-prop-union-of-balls} and
\ref{lemma:parametric_nerve}, we obtain the following key result:
\begin{theorem}\label{th-prop-cech}
Let $X$ be a compact set and $\eee$ a finite set in $\R^d$, such that
$\disth(X,\eee) < \e$ for some $\e < \frac{1}{4}\;\wfs(X)$.  Then, for
all $\alpha,\alpha'\in\left[\e, \wfs(X)-\e\right]$ such that
$\alpha'-\alpha>2\e$, and for all $\lambda\in\left(0,\wfs(X)\right)$,
we have: $\forall k\in\N$, $H_k(X^\lambda)\isom \im j_*$, where
$j_*:H_k(\cecha(\eee))\rightarrow H_k(\cech{\alpha'}(\eee))$ is the
homomorphism between homology groups induced by the canonical
inclusion $j:\cecha(\eee)\hookrightarrow\cech{\alpha'}(\eee)$. Given
an arbitrary point $x_0 \in X$, the same result holds for homotopy
groups with base-point $x_0$.
\end{theorem}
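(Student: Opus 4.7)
The plan is to combine the two preceding lemmas in a direct diagram chase. Lemma \ref{lem-prop-union-of-balls} already provides the target isomorphism at the level of unions of balls: under the stated hypotheses on $\e,\alpha,\alpha',\lambda$, the image of the inclusion-induced map $i_* : H_k(\eee^\alpha) \to H_k(\eee^{\alpha'})$ is isomorphic to $H_k(X^\lambda)$. The remaining task is to transport this conclusion from unions of balls to the corresponding \v Cech complexes, via Lemma \ref{lemma:parametric_nerve}.

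Concretely, I would first invoke Lemma \ref{lemma:parametric_nerve} to obtain homotopy equivalences $\cecha(\eee) \to \eee^\alpha$ and $\cech{\alpha'}(\eee) \to \eee^{\alpha'}$ making the square
\begin{equation*}
\begin{array}{ccc}
H_k(\eee^\alpha) & \stackrel{i_*}{\rightarrow} & H_k(\eee^{\alpha'})\\
\uparrow & ~ & \uparrow\\
H_k(\cecha(\eee)) & \stackrel{j_*}{\rightarrow} & H_k(\cech{\alpha'}(\eee))
\end{array}
\end{equation*}
commute, with both vertical arrows being isomorphisms. Commutativity together with the fact that vertical arrows are isomorphisms immediately implies $\im j_* \cong \im i_*$: if $v$ is the left vertical isomorphism and $u$ the right one, then $i_* = u \circ j_* \circ v^{-1}$, so $u$ restricts to an isomorphism from $\im j_*$ onto $\im i_*$. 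Combining this with Lemma \ref{lem-prop-union-of-balls} yields $\im j_* \cong H_k(X^\lambda)$, which is exactly the statement of the theorem.

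For the homotopy-group version with base-point $x_0 \in X$, the argument is identical once one chooses compatible base-points in $\cecha(\eee)$, $\eee^\alpha$, $X^\lambda$, etc. Lemma \ref{lemma:parametric_nerve} provides the analogous commutative square on $\pi_k$ with vertical isomorphisms, and Lemma \ref{lem-prop-union-of-balls} provides the based-homotopy version at the union-of-balls level; the same diagram chase then gives $\im j_\# \cong \pi_k(X^\lambda, x_0)$.

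I do not anticipate a serious obstacle: all the real work has been absorbed into Lemmas \ref{lem-prop-union-of-balls} and \ref{lemma:parametric_nerve}. The only subtle point is making sure that one has actual commutativity of the square and not merely a commutativity up to homotopy that might fail to give an isomorphism between images; but Lemma \ref{lemma:parametric_nerve} is stated precisely so that commutativity holds at the level of induced maps on $H_k$ (and $\pi_k$), which is all that is needed. The argument therefore reduces to a one-line application of the Sandwich Lemma \ref{sandwich-lemma} style reasoning on the combined diagram obtained by stacking the square above with the inclusion sequence (\ref{eq-sandwich-union-of-balls}).
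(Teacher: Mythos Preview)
Your proposal is correct and follows exactly the approach the paper intends: the paper's own proof is simply the sentence ``Combining Lemmas \ref{lem-prop-union-of-balls} and \ref{lemma:parametric_nerve}, we obtain the following key result,'' and you have spelled out precisely that combination via the commutative square with vertical isomorphisms. The diagram chase you give (showing $\im j_* \cong \im i_*$ from commutativity plus invertibility of the vertical maps, then invoking Lemma \ref{lem-prop-union-of-balls}) is the right way to make this one-liner rigorous.
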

Using the terminology of \cite{zc-cph-05}, this result means that the
homology of $X^\lambda$ can be deduced from the persistent homology of
the filtration $\{\cecha(\eee)\}_{\alpha\geq 0}$ by removing the
cycles of persistence less than $2\e$. Equivalently, the amplitude of
the {\em topological noise} in the persistence barcode of
$\{\cecha(\eee)\}_{\alpha\geq 0}$ is bounded by $2\e$, {\em i.e.} the
intervals of length at least $2\e$ in the barcode give the homology of
$X^\lambda$.

%%%%%%%%%%%%%%%%%%%%%%%%
\subsubsection{Filtrations intertwined with the \v Cech filtration}
\label{sec-prop-rips-winf}

Using Lemma \ref{lem-cech-vs-rips} and Theorem
\ref{th-prop-cech}, we get the following guarantees on the Rips
filtration:
\begin{theorem}\label{th-prop-rips}
Let $X\subset\R^d$ be a compact set, and $\eee\subset\R^d$ a finite
set such that $\disth(X,\eee)<\e$ for some
$\e<\frac{1}{9}\;\wfs(X)$. Then, for all $\alpha\in
\left[2\e,\;\frac{1}{4}\left(\wfs(X)-\e\right)\right]$ and all $\lambda\in
(0,\wfs(X))$, we have: $\forall k\in\N$, $H_k(X^\lambda)\isom \im
j_*$, where $j_*:H_k(\ripsa(\eee))\rightarrow
H_k(\rips{4\alpha}(\eee))$ is the homomorphism between homology groups
induced by the canonical inclusion
$j:\ripsa(\eee)\hookrightarrow\rips{4\alpha}(\eee)$.
\end{theorem}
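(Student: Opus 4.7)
The plan is to apply the Sandwich Lemma \ref{sandwich-lemma} to a chain that interleaves Rips and \v Cech complexes between scales $\alpha/2$ and $4\alpha$, and then to invoke Theorem \ref{th-prop-cech} to control the ranks of the flanking \v Cech maps. Using Lemma \ref{lem-cech-vs-rips} at scales $\alpha$ and $4\alpha$, together with the obvious inclusion $\cecha(\eee)\subseteq\cech{2\alpha}(\eee)$, one obtains
$$
\cech{\alpha/2}(\eee)\subseteq\ripsa(\eee)\subseteq\cecha(\eee)\subseteq\cech{2\alpha}(\eee)\subseteq\rips{4\alpha}(\eee)\subseteq\cech{4\alpha}(\eee).
$$
Applying the functor $H_k$ produces a six-term sequence of homomorphisms of precisely the shape required by Lemma \ref{sandwich-lemma}, in which the middle map $B\to E$ is exactly the map $j_*:H_k(\ripsa(\eee))\to H_k(\rips{4\alpha}(\eee))$ under study.

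The next step is to verify the hypotheses of Theorem \ref{th-prop-cech} for both the outermost map $H_k(\cech{\alpha/2}(\eee))\to H_k(\cech{4\alpha}(\eee))$ and the inner \v Cech map $H_k(\cecha(\eee))\to H_k(\cech{2\alpha}(\eee))$, so that the rank of each equals $\dim H_k(X^\lambda)$. The required conditions are that $\alpha/2,\alpha,2\alpha,4\alpha\in[\e,\wfs(X)-\e]$ and that the corresponding gaps $4\alpha-\alpha/2=7\alpha/2$ and $2\alpha-\alpha=\alpha$ exceed $2\e$. The hypothesis $\alpha\geq 2\e$ gives $\alpha/2\geq\e$ and $7\alpha/2\geq 7\e>2\e$, while $\alpha\leq\tfrac14(\wfs(X)-\e)$ gives $4\alpha\leq\wfs(X)-\e$. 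The side condition $\e<\wfs(X)/9$ is precisely what is needed to ensure that the interval $[2\e,\tfrac14(\wfs(X)-\e)]$ is non-empty, so that such $\alpha$ exists at all.

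Once both flanking ranks are known to equal $\dim H_k(X^\lambda)$, Lemma \ref{sandwich-lemma} forces $\rank j_* = \dim H_k(X^\lambda)$. Since homology is taken with coefficients in a field, $\im j_*$ is a vector space and its dimension determines it up to isomorphism, yielding the desired $H_k(X^\lambda)\isom \im j_*$. The main source of friction is the numerical bookkeeping around the boundary of the interval: the inner gap $2\alpha-\alpha=\alpha$ is only bounded below by $2\e$ rather than strictly greater by the hypothesis, so at the extreme $\alpha=2\e$ one must either appeal to the non-strict variant of Lemma \ref{lem-prop-union-of-balls} on which Theorem \ref{th-prop-cech} ultimately rests, or perturb $\alpha$ slightly and take a limit. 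Beyond this small technicality no new topological input is needed: the argument distills Lemma \ref{lem-cech-vs-rips}, Theorem \ref{th-prop-cech}, and the Sandwich Lemma.
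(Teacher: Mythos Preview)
Your proposal is correct and follows essentially the same approach as the paper: the same six-term interleaving of \v Cech and Rips complexes, the same appeal to Theorem \ref{th-prop-cech} for the flanking \v Cech maps, and the same conclusion via the Sandwich Lemma and field coefficients. Your remark about the boundary case $\alpha=2\e$ (where the inner gap equals $2\e$ rather than strictly exceeding it) is in fact more careful than the paper's own write-up, which silently uses the non-strict inequality available in Lemma \ref{lem-prop-union-of-balls}.
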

\begin{proof}
From Lemma \ref{lem-cech-vs-rips} we deduce the following sequence of
inclusions:
\begin{equation}\label{eq-sandwich-rips}
\cech{\frac{\alpha}{2}}(\eee)\hookrightarrow\ripsa(\eee)\hookrightarrow
\cecha(\eee)\hookrightarrow
\cech{2\alpha}(\eee)\hookrightarrow\rips{4\alpha}(\eee)\hookrightarrow
\cech{4\alpha}(\eee)
\end{equation}
Since $\alpha \geq 2\e$, Theorem \ref{th-prop-cech} implies
that Eq. (\ref{eq-sandwich-rips}) induces a sequence of homomorphisms
between homology groups, such that 
$H_k(\cech{\frac{\alpha}{2}}(\eee))\rightarrow
H_k(\cech{4\alpha}(\eee))$ and $H_k(\cecha(\eee))\rightarrow
H_k(\cech{2\alpha}(\eee))$ have ranks equal to $\dim
H_k(X^\lambda)$. Therefore, by the Sandwich Lemma
\ref{sandwich-lemma}, $\rank j_*$ is also equal to $\dim
H_k(X^\lambda)$. It follows that $\im j_*\isom\dim H_k(X^\lambda)$,
since our ring of coefficients is a field.
\end{proof}

\noindent Similarly, Corollary \ref{cor-cech-vs-winf} provides the
following sequence of inclusions:
$$
\cech{\frac{\alpha}{4}}(\eee)\hookrightarrow\winfaw(\eee)\hookrightarrow
\cech{8\alpha}(\eee)\hookrightarrow\cech{9\alpha}(\eee)\hookrightarrow
\winfrw{36\alpha}(\eee)\hookrightarrow
\cech{288\alpha}(\eee),
$$
from which follows a result similar to Theorem \ref{th-prop-rips} on
the witness complex, by the same proof:
\begin{theorem}\label{th-prop-winf}
Let $X$ be a compact subset of $\R^d$, and let
  $\eee\subseteq\wit\subseteq\R^d$ be such that $\eee$ is
  finite. Assume that $\disth(X, \wit)\leq\delta$ and that
  $\disth(\wit, \eee)\leq\e$, with
  $\delta\leq\e<\min\left\{\frac{1}{8}\;\diamcc(X),\;
  \frac{1}{1153}\;\wfs(X)\right\}$. Then, for all
  $\alpha\in
  \left[4\e,\;\frac{1}{288}\left(\wfs(X)-\e\right)\right]$
  and all $\lambda\in (0,\wfs(X))$, we have: $\forall k\in\N$,
  $H_k(X^\lambda)\isom \im j_*$, where
  $j_*:H_k(\winfaw(\eee))\rightarrow H_k(\winfrw{36\alpha}(\eee))$ is
  the homomorphism between homology groups induced by the canonical
  inclusion $j:\winfaw(\eee)\hookrightarrow\winfrw{36\alpha}(\eee)$.
\end{theorem}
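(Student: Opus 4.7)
The plan is to mirror the proof of Theorem \ref{th-prop-rips} step by step, substituting Corollary \ref{cor-cech-vs-winf} for Lemma \ref{lem-cech-vs-rips}. Under the hypotheses of the theorem, the condition $\delta\leq\e<\frac{1}{8}\,\diamcc(X)$ holds, so Corollary \ref{cor-cech-vs-winf} applies and yields $\cech{\alpha/4}(\eee)\subseteq\winfaw(\eee)\subseteq\cech{8\alpha}(\eee)$ for every $\alpha\geq 2\e$. Together with the analogous inclusions for the witness complex at parameter $36\alpha$, this produces exactly the six-term chain of inclusions displayed just above the statement of the theorem, namely $\cech{\alpha/4}(\eee)\hookrightarrow\winfaw(\eee)\hookrightarrow\cech{8\alpha}(\eee)\hookrightarrow\cech{9\alpha}(\eee)\hookrightarrow\winfrw{36\alpha}(\eee)\hookrightarrow\cech{288\alpha}(\eee)$.

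Next, I would apply the singular homology functor $H_k$ to this chain and invoke Theorem \ref{th-prop-cech} on the outer pair $\cech{\alpha/4}(\eee)\hookrightarrow\cech{288\alpha}(\eee)$ and on the inner pair $\cech{8\alpha}(\eee)\hookrightarrow\cech{9\alpha}(\eee)$. For that I must check the parameter conditions of Theorem \ref{th-prop-cech}: the range condition $\alpha/4\geq\e$ is exactly $\alpha\geq 4\e$ from the hypothesis, and $288\alpha\leq\wfs(X)-\e$ is the upper bound on $\alpha$ in the statement. These two are jointly satisfiable precisely when $\wfs(X)-\e\geq 1152\e$, i.e.\ $\wfs(X)>1153\e$, which is the reason for the constant $1153$ in the hypothesis. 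The gap conditions $9\alpha-8\alpha>2\e$ and $288\alpha-\alpha/4>2\e$ both follow from $\alpha\geq 4\e$. Hence Theorem \ref{th-prop-cech} gives that $H_k(\cech{\alpha/4}(\eee))\to H_k(\cech{288\alpha}(\eee))$ and $H_k(\cech{8\alpha}(\eee))\to H_k(\cech{9\alpha}(\eee))$ both have rank equal to $\dim H_k(X^\lambda)$.

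Finally, I would apply the Sandwich Lemma \ref{sandwich-lemma} to the induced sequence $H_k(\cech{\alpha/4}(\eee))\to H_k(\winfaw(\eee))\to H_k(\cech{8\alpha}(\eee))\to H_k(\cech{9\alpha}(\eee))\to H_k(\winfrw{36\alpha}(\eee))\to H_k(\cech{288\alpha}(\eee))$, taking the outer and the central Čech-to-Čech arrows as the two maps whose common rank equals $\dim H_k(X^\lambda)$. The lemma forces the middle map $j_*$ to have the same rank, and since the coefficient ring is a field, $\im j_*\cong H_k(X^\lambda)$.

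The only delicate part is arithmetic bookkeeping: one must pick the intermediate Čech parameters $\alpha/4$, $8\alpha$, $9\alpha$, $288\alpha$ so that the resulting sandwich is valid and the range/gap requirements of Theorem \ref{th-prop-cech} are simultaneously met; the constants $36$ and $288$, and the threshold $1153$ on $\wfs(X)/\e$, are exactly tuned to make this work. Once the chain of inclusions is written down correctly, the remainder of the argument is a verbatim reuse of the proof of Theorem \ref{th-prop-rips}.
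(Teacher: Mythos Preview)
Your proposal is correct and follows exactly the paper's approach: the paper displays precisely this six-term chain of inclusions immediately before the theorem and then says the result follows ``by the same proof'' as Theorem~\ref{th-prop-rips}, which is the Sandwich-Lemma argument you have spelled out in full. One small bookkeeping remark: when you invoke Theorem~\ref{th-prop-cech}, the sampling parameter for $\eee$ relative to $X$ is $\disth(X,\eee)\leq\e+\delta\leq 2\e$ rather than $\e$ itself, so the range and gap checks should be carried out with that value; the paper glosses over this in the same way.
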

%
%\begin{proof}
%From Corollary
%\ref{cor-cech-vs-winf} we deduce the following sequence of inclusions:
%%
%\begin{equation}\label{eq-sandwich-winf}
%\cech{\frac{\alpha}{4}}(\eee)\hookrightarrow\winfaw(\eee)\hookrightarrow
%\cech{8\alpha}(\eee)\hookrightarrow\cech{9\alpha}(\eee)\hookrightarrow
%\winfrw{36\alpha}(\eee)\hookrightarrow
%\cech{288\alpha}(\eee)
%\end{equation}
%%
%Since $\alpha>2\e$, it follows from Theorem \ref{th-prop-cech} that
%Eq. (\ref{eq-sandwich-winf}) induces a sequence of homomorphisms
%between homology groups, such that the homomorphisms between homology
%groups of $\cech{\frac{\alpha}{4}}(\eee)$ and $\cech{288\alpha}(\eee)$
%and homology groups of $\cech{8\alpha}(\eee)$ and
%$\cech{9\alpha}(\eee)$, have ranks equal to $\dim
%H_k(X^\lambda)$. Therefore, by the Sandwich Lemma
%\ref{sandwich-lemma}, $\rank j_*$ is also equal to $\dim
%H_k(X^\lambda)$. It follows that $\im j_*\isom\dim H_k(X^\lambda)$,
%since our ring of coefficients is a field.
%\end{proof}
%
More generally, the above arguments show that the homology of
$X^\lambda$ can be recovered from the persistence barcode of any
filtration $\{F_\alpha\}_{\alpha\geq 0}$ that is intertwined with the
\v Cech filtration in the sense of Lemmas \ref{lem-cech-vs-rips} and
\ref{lem-cech-vs-winf}.
Note however that Theorems \ref{th-prop-rips} and \ref{th-prop-winf}
suggest a different behavior of the barcode in this case, since its
topological noise might scale up with $\alpha$ (specifically, it might
be up to linear in $\alpha$), whereas it is uniformly bounded by a
constant in the case of the \v Cech filtration. This difference of
behavior is easily explained by the way $\{F_\alpha\}_{\alpha\geq 0}$
is intertwined with the \v Cech filtration.
% but it is still unclear
%whether this behavior is inherent to the filtration or rather
%an artefact of our proofs. In any case, 
A trick to get a uniformly-bounded noise is to represent
the barcode of $\{F_\alpha\}_{\alpha\geq 0}$ on a logarithmic scale,
that is, with $\log_2\alpha$ instead of $\alpha$ in abcissa.

\subsection{Results on homotopy}

The results on homology obtained in Section \ref{sec-res-homology}
follow from simple algebraic arguments. Using a more geometric
approach, we can get similar results on homotopy. From now on, $x_0
\in X$ is a fixed point and all the homotopy groups $\pi_k(X) =
\pi_k(X,x_0)$ are assumed to be with base-point $x_0$.  Theorems
\ref{th-prop-rips} and \ref{th-prop-winf} can be extended to homotopy
in the following way:

\begin{theorem} \label{th-prop-rips-homotopy}
Under the same hypotheses as in Theorem \ref{th-prop-rips}, 
we have: $\forall k\in\N$, $\pi_k(X^\lambda)\isom \im
j_*$, where $j_*:\pi_k(\ripsa(\eee))\rightarrow
\pi_k(\rips{4\alpha}(\eee))$ is the homomorphism between homotopy groups
induced by the canonical inclusion
$j:\ripsa(\eee)\hookrightarrow\rips{4\alpha}(\eee)$.
\end{theorem}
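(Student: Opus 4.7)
The plan is to mimic the proof of Theorem \ref{th-prop-rips}, but with the rank-based Sandwich Lemma \ref{sandwich-lemma} replaced by a group-theoretic sandwich argument; this is necessary because homotopy groups are not in general abelian and their ``size'' is not measured by a numerical rank. The main leverage is that the homotopy half of Theorem \ref{th-prop-cech} is already in place, having been established from the homotopy statements of Lemmas \ref{lem-prop-union-of-balls} and \ref{lemma:parametric_nerve}.

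First I would apply $\pi_k(\cdot,x_0)$ to the chain of canonical inclusions of Eq.~(\ref{eq-sandwich-rips}):
\begin{equation*}
\pi_k(\cech{\alpha/2}(\eee)) \to \pi_k(\ripsa(\eee)) \to \pi_k(\cecha(\eee)) \to \pi_k(\cech{2\alpha}(\eee)) \to \pi_k(\rips{4\alpha}(\eee)) \to \pi_k(\cech{4\alpha}(\eee)).
\end{equation*}
The hypotheses on $\alpha$ in Theorem \ref{th-prop-rips} place each of $\alpha/2,\alpha,2\alpha,4\alpha$ inside $[\e,\wfs(X)-\e]$ with the required gaps, so the homotopy part of Theorem \ref{th-prop-cech} applies both to the outer pair $(\alpha/2,4\alpha)$ and to the inner pair $(\alpha,2\alpha)$. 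This yields that the long composition $\pi_k(\cech{\alpha/2}(\eee))\to\pi_k(\cech{4\alpha}(\eee))$ and the shorter one $\pi_k(\cecha(\eee))\to\pi_k(\cech{2\alpha}(\eee))$ both have image isomorphic to $\pi_k(X^\lambda)$, and Lemma \ref{lemma:parametric_nerve} ensures that these two identifications sit inside a single commutative diagram tying each \v Cech complex to its union of balls $\eee^\beta$.

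Next I would run the sandwich at the group level. Factoring $j_*$ through $\pi_k(\cecha)\to\pi_k(\cech{2\alpha})$ exhibits $\operatorname{im} j_*$ as a homomorphic image of a subgroup of $\operatorname{im}(\pi_k(\cecha)\to\pi_k(\cech{2\alpha}))\cong\pi_k(X^\lambda)$; factoring the long \v Cech-to-\v Cech map through $j_*$ exhibits $\pi_k(X^\lambda)$ as a homomorphic image of $\operatorname{im} j_*$. The conclusion then follows by combining these two facts inside the diagram of Lemma \ref{lemma:parametric_nerve}, which transports the whole chain onto the underlying unions of balls and pins the middle image down to the image of the canonical inclusion $\pi_k(\eee^{\alpha/2})\to\pi_k(\eee^{4\alpha})$, known to equal $\pi_k(X^\lambda)$ by the homotopy half of Lemma \ref{lem-prop-union-of-balls}.

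The hard part is exactly this last identification step. In the homology setting the sandwich is immediate from ranks, but in the homotopy setting two abstractly isomorphic subgroups of the same ambient group need not coincide. The remedy is to pass through the unions of balls and use the compatible nerve equivalences: once the entire diagram lives in the world of the $\eee^\beta$, the group of interest becomes the image of an explicit canonical inclusion, so its identification with $\pi_k(X^\lambda)$ is forced, not just its isomorphism class. The same plan works verbatim for the witness complex filtration and yields the corresponding homotopy analogue of Theorem \ref{th-prop-winf}.
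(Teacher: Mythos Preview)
Your diagnosis of the difficulty is correct, and you explicitly name it: two subgroups that are abstractly isomorphic need not coincide, so a purely group-theoretic sandwich cannot conclude. The problem is that your proposed remedy does not actually close this gap.

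You say that ``once the entire diagram lives in the world of the $\eee^\beta$, the group of interest becomes the image of an explicit canonical inclusion.'' But the group of interest is $\im j_*\subseteq\pi_k(\rips{4\alpha}(\eee))$, and $\rips{4\alpha}(\eee)$ is \emph{not} homotopy equivalent to any $\eee^\beta$; the nerve equivalences of Lemma~\ref{lemma:parametric_nerve} only apply to the \v Cech terms. Transporting along the map $\pi_k(\rips{4\alpha}(\eee))\to\pi_k(\cech{4\alpha}(\eee))\cong\pi_k(\eee^{4\alpha})$ sends $\im j_*$ to the image of $\pi_k(\ripsa(\eee))\to\pi_k(\eee^{4\alpha})$, but that map need not be injective on $\im j_*$, and even its image is not obviously equal to $\im\big(\pi_k(\eee^{\alpha/2})\to\pi_k(\eee^{4\alpha})\big)$: from $\ripsa(\eee)\subset\cecha(\eee)$ you only get containment in the image of $\pi_k(\eee^\alpha)$, and you still need to know that the images of $\pi_k(\eee^{\alpha/2})$ and of $\pi_k(\eee^\alpha)$ inside $\pi_k(\eee^{4\alpha})$ agree. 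That equality is exactly the non-formal step.

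The paper handles this not by an abstract sandwich but by two geometric arguments that your outline omits. First, it uses the Isotopy Lemma~\ref{lemma:isotopy} to show that the surjection $F\to G$ between the images of $\pi_k(\cech{\alpha/2}(\eee))$ inside $\pi_k(\cech{2\alpha}(\eee))$ and $\pi_k(\cech{4\alpha}(\eee))$ is injective: a $k$-loop that becomes null-homotopic in $\eee^{4\alpha}\subset X^{4\alpha+\e}$ is retracted, via the deformation retraction of offsets of $X$, to a null-homotopy inside a smaller offset contained in $\eee^{2\alpha}$. This gives the injectivity of $F\hookrightarrow\pi_k(\rips{4\alpha}(\eee))$. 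Second, and crucially, it shows that $\im\big(\pi_k(\ripsa(\eee))\to\pi_k(\cech{2\alpha}(\eee))\big)=F$, again by pushing a representative $k$-loop in $\eee^\alpha$ down via a deformation retraction of $X^{2\alpha+\e}$ onto a very small offset $X^{\alpha_0}\subset\eee^{\alpha/2}$, and then invoking Lemma~\ref{lemma:k-loop} to promote the homotopy in $X^{2\alpha+\e}$ to one in $\eee^{2\alpha}$. These two ingredients --- the Isotopy Lemma and Lemma~\ref{lemma:k-loop} --- are what replace the Sandwich Lemma in the homotopy setting; without them the argument does not close.
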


\begin{theorem} \label{th-prop-winf-homotopy}
Under the same hypotheses as in Theorem \ref{th-prop-winf}, 
we have: $\forall k\in\N$,
  $\pi_k(X^\lambda)\isom \im j_*$, where
  $j_*:\pi_k(\winfaw(\eee))\rightarrow \pi_k(\winfrw{36\alpha}(\eee))$ is
  the homomorphism between homotopy groups induced by the canonical
  inclusion $j:\winfaw(\eee)\hookrightarrow\winfrw{36\alpha}(\eee)$.
\end{theorem}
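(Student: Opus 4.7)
The plan is to mirror the proof of the homology version, Theorem~\ref{th-prop-winf}, reusing its chain of inclusions but replacing every appeal to the rank-based Sandwich Lemma~\ref{sandwich-lemma} with the pointed, geometric argument already used for the homotopy part of Theorem~\ref{th-prop-cech}. Starting from Corollary~\ref{cor-cech-vs-winf}, I would first write down
\[
\cech{\alpha/4}(\eee) \hookrightarrow \winfaw(\eee) \hookrightarrow \cech{8\alpha}(\eee) \hookrightarrow \cech{9\alpha}(\eee) \hookrightarrow \winfrw{36\alpha}(\eee) \hookrightarrow \cech{288\alpha}(\eee),
\]
and check that, under the hypotheses on $\e$ and $\alpha$, every \v{C}ech parameter above lies in $[\e,\wfs(X)-\e]$ and both the outer pair $(\alpha/4, 288\alpha)$ and the middle pair $(8\alpha, 9\alpha)$ have gap larger than $2\e$, so that Theorem~\ref{th-prop-cech} applies to each of them.

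Next, I would pick a landmark of $\eee$ at distance at most $\e+\delta$ from $x_0 \in X$ and, via the pointed homotopy equivalences of Lemma~\ref{lemma:parametric_nerve}, use it as a consistent base-point across the whole chain. Applying $\pi_k$ then yields a sequence of six pointed group homomorphisms in which $j_* : \pi_k(\winfaw(\eee)) \to \pi_k(\winfrw{36\alpha}(\eee))$ is the middle three-fold composition. The homotopy part of Theorem~\ref{th-prop-cech}, applied twice, gives two pieces of information: the total composition $\pi_k(\cech{\alpha/4}(\eee)) \to \pi_k(\cech{288\alpha}(\eee))$ and the inclusion $\pi_k(\cech{8\alpha}(\eee)) \to \pi_k(\cech{9\alpha}(\eee))$ both have image isomorphic to $\pi_k(X^\lambda)$. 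Because $j_*$ is sandwiched between them, its image contains (the pushforward of) the image of the long \v{C}ech composition and factors through the image of the middle \v{C}ech inclusion, and both of these are copies of $\pi_k(X^\lambda)$.

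The hard step is turning this sandwich into an actual group isomorphism: the Sandwich Lemma~\ref{sandwich-lemma} is stated for vector-space ranks and does not extend verbatim to the possibly non-abelian groups $\pi_k$, especially $\pi_1$. I would handle this exactly as the homotopy half of Theorem~\ref{th-prop-cech} is handled inside Lemma~\ref{lem-prop-union-of-balls}: transport the problem to unions of balls via the pointed equivalences of Lemma~\ref{lemma:parametric_nerve}, then follow the proof of Theorem~4.7 of~\cite{cl-sctis-07}. Concretely, every pointed $k$-sphere in $\winfaw(\eee)$ is homotoped, inside $\winfrw{36\alpha}(\eee)$, via the radial deformation of the nested offsets onto a $k$-sphere in $X^\lambda$; conversely, every pointed $k$-sphere in $X^\lambda$ is realized by such a sphere in $\winfaw(\eee)$. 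Both constructions commute with all the inclusions of the chain, so they identify $\im j_*$ with the corresponding subgroup realized at the \v{C}ech level, yielding $\pi_k(X^\lambda) \isom \im j_*$. An entirely analogous argument gives Theorem~\ref{th-prop-rips-homotopy}.
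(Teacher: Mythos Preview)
Your plan is essentially the paper's own: the paper states that the proofs of Theorems~\ref{th-prop-rips-homotopy} and~\ref{th-prop-winf-homotopy} are ``mostly identical'' and then carries out the Rips case, using exactly the chain of inclusions you wrote and the same observation that the Sandwich Lemma~\ref{sandwich-lemma} does not apply to homotopy groups. So the strategy is right.

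Where your sketch is loose is in the ``concretely'' paragraph. You cannot literally homotope a $k$-sphere in $\winfaw(\eee)$ inside $\winfrw{36\alpha}(\eee)$ ``via the radial deformation of the nested offsets onto a $k$-sphere in $X^\lambda$'': the offsets $X^\beta$ and the witness complexes live in different worlds, and $X^\lambda$ is not a subspace of $\winfrw{36\alpha}(\eee)$. The paper handles this carefully by (i) pushing everything to the \v Cech side and then to unions of balls via Lemma~\ref{lemma:parametric_nerve}, (ii) introducing the images $E,F,G$ of $\pi_k(\cech{\alpha/4}(\eee))$ in $\pi_k(\cech{8\alpha}(\eee))$, $\pi_k(\cech{9\alpha}(\eee))$, $\pi_k(\cech{288\alpha}(\eee))$, and (iii) proving two separate facts: that $F\to G$ is injective (hence an isomorphism, so $F$ injects into $\pi_k(\winfrw{36\alpha}(\eee))$), and that the image of $\pi_k(\winfaw(\eee))\to\pi_k(\cech{9\alpha}(\eee))$ is exactly $F$. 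Both facts are proved using Lemma~\ref{lemma:k-loop} (the $k$-loop lemma, generalizing Prop.~4.1 of~\cite{cl-sctis-07}) together with the Isotopy Lemma~\ref{lemma:isotopy}, rather than by a direct invocation of Theorem~4.7 of~\cite{cl-sctis-07}. Your proposal points at the right ingredients but skips this two-step structure; when you flesh it out, that is where the actual work lies.
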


The proofs of these two results being mostly identical, we focus
exclusively on the Rips complex. We will use the following
lemma, which is an immediate generalization of Proposition 4.1 of
\cite{cl-sctis-07}:

\begin{lemma} \label{lemma:k-loop}
Let $X$ be a compact set and $\eee$ a finite set in $\R^d$, such that
$\disth(X,\eee) < \e$ for some $\e < \frac{1}{4}\;\wfs(X)$. Let
$\alpha,\alpha'\in\left[\e, \wfs(X)-\e\right]$ be such that
$\alpha'-\alpha \geq 2\e$.  Given $k \in \N$, two {\em k-loops} $\sigma_1,
\sigma_2 : \mathbb{S}^k \rightarrow (L^\alpha,x_0)$ in $L^\alpha$ are
homotopic in $X^{\alpha'+\e}$ if and only if they are homotopic in
$L^{\alpha'}$.
\end{lemma}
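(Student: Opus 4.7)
The plan is to prove the two directions of the equivalence separately. The ($\Leftarrow$) direction is immediate: the hypothesis $\disth(X,L) < \e$ yields the inclusion $L^{\alpha'}\subseteq X^{\alpha'+\e}$, so any based homotopy in $L^{\alpha'}$ composed with this inclusion is a based homotopy in $X^{\alpha'+\e}$ (the basepoint $x_0\in X\subseteq L^\alpha\subseteq L^{\alpha'}$ is preserved).

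The main content is the ($\Rightarrow$) direction, which rests on the chain of inclusions
\[
L^\alpha \;\subseteq\; X^{\alpha+\e} \;\subseteq\; L^{\alpha'} \;\subseteq\; X^{\alpha'+\e},
\]
obtained by combining $\disth(X,L)<\e$ with the gap $\alpha'-\alpha\geq 2\e$. I would show that the inclusion $X^{\alpha+\e}\hookrightarrow X^{\alpha'+\e}$ is a based homotopy equivalence, and then use injectivity on $\pi_k$ to convert a homotopy in $X^{\alpha'+\e}$ into one in $X^{\alpha+\e}$, which sits inside $L^{\alpha'}$.

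Concretely, both parameters $\alpha+\e$ and $\alpha'+\e$ lie in $(0,\wfs(X))$ (if $\alpha'+\e=\wfs(X)$, first replace $\e$ by some $\e'\in(\disth(X,L),\e)$, exactly as in the proof of Lemma \ref{lem-prop-union-of-balls}), so the Isotopy Lemma \ref{lemma:isotopy} supplies a deformation retraction of $X^{\alpha'+\e}$ onto $X^{\alpha+\e}$. By definition this retraction fixes $X^{\alpha+\e}$ pointwise, and in particular fixes $x_0\in X\subseteq X^{\alpha+\e}$, so the based inclusion $(X^{\alpha+\e},x_0)\hookrightarrow(X^{\alpha'+\e},x_0)$ induces isomorphisms on all $\pi_k(\cdot,x_0)$. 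The loops $\sigma_1,\sigma_2$ take values in $L^\alpha\subseteq X^{\alpha+\e}$, hence represent classes in $\pi_k(X^{\alpha+\e},x_0)$; the assumption that they are homotopic in $X^{\alpha'+\e}$ identifies their images under this isomorphism, so by injectivity they are already homotopic in $X^{\alpha+\e}$. Composing such a homotopy with the inclusion $X^{\alpha+\e}\hookrightarrow L^{\alpha'}$ gives the desired homotopy in $L^{\alpha'}$.

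No serious obstacle is expected: the proof is essentially a topological sandwich driven by the Isotopy Lemma. The only delicate points are the boundary case $\alpha'+\e=\wfs(X)$, handled by the slack in the strict inequality $\disth(X,L)<\e$, and the basepoint compatibility, which is automatic because a deformation retract fixes its target pointwise and $x_0\in X\subseteq X^{\alpha+\e}$.
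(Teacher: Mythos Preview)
Your argument is correct and is precisely the expected one: the paper does not spell out a proof of this lemma but refers to Proposition~4.1 of \cite{cl-sctis-07}, whose proof is exactly the sandwich $L^\alpha\subseteq X^{\alpha+\e}\subseteq L^{\alpha'}\subseteq X^{\alpha'+\e}$ combined with the deformation retraction of $X^{\alpha'+\e}$ onto $X^{\alpha+\e}$ supplied by the Isotopy Lemma. Your handling of the boundary case and of the basepoint is also in line with how these issues are treated elsewhere in the paper.
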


\begin{proofarg}{of Theorem \ref{th-prop-rips-homotopy}}
As mentionned at the begining of the proof of Lemma
\ref{lem-prop-union-of-balls}, we can assume without loss of
generality that $2\e < \alpha < \frac{1}{4}(\wfs(X) - \e)$.  Consider
the following sequence of inclusions:
$$ \cech{\frac{\alpha}{2}}(L) \subset \rips{\alpha}(L) \subset
\cech{\alpha}(L) \subset \cech{2\alpha}(L) \subset \rips{4\alpha}(L)
\subset \cech{4\alpha}(L)$$ We use the homotopy equivalences
$h_\beta:L^\beta\rightarrow \cech{\beta}(L)$ provided by Lemma
\ref{lemma:parametric_nerve} for all values $\beta>0$, which commute
with inclusions at homotopy level. Note that, for any element $\sigma$
of $\pi_k(\cech{\beta}(L))$, there exists a $k$-loop in $L^\beta$ that
is mapped through $h_\beta$ to a $k$-loop representing the
homotopy class $\sigma$.  In the following, we denote by $\sigma_g$
such a $k$-loop.  Let $E, F$ and $G$ be the images of
$\pi_k(\cech{\frac{\alpha}{2}}(L))$ in $\pi_k(\cech{\alpha}(L))$,
$\pi_k(\cech{2\alpha}(L))$ and $\pi_k(\cech{4\alpha}(L))$ respectively,
through the homomorphisms induced by inclusion.  We thus have a sequence of
surjective homomorphisms:
$$\pi_k(\cech{\frac{\alpha}{2}}(L)) \rightarrow E \rightarrow F \rightarrow G$$
Note that, by Theorem \ref{th-prop-cech}, $F$ and $G$ are
isomorphic to $\pi_k(X^\lambda)$.
%Note also that $E \rightarrow F$ is surjective. 
%Note also that $F \rightarrow G$ is surjective.  
Let $\sigma \in F$ be a homotopy class. Since $F$ is the image of
$\pi_k(\cech{\frac{\alpha}{2}}(L))$, we can assume without loss of
generality that $\sigma_g \subset L^{\frac{\alpha}{2}}$. Assume that
the image of $\sigma$ in $G$ is equal to $0$. This means that
$\sigma_g$ is null-homotopic in $L^{4\alpha}$ and, since $L^{4\alpha}
\subset X^{4\alpha+\e}$, $\sigma_g$ is also null-homotopic in
$X^{4\alpha+\e}$. But $\sigma_g \subset L^{\frac{\alpha}{2}} \subset
X^{\frac{\alpha}{2}+\e}$, and $X^{2\alpha+\e}$ deformation retracts
onto $X^{\frac{\alpha}{2}+\e}$, by the Isotopy Lemma
\ref{lemma:isotopy}. As a consequence, $\sigma_g$ is null-homotopic in
$X^{\frac{\alpha}{2}+\e}$, which is contained in $L^{2\alpha}$ since
$\frac{\alpha}{2}+2\e < 2\alpha$. Hence, $\sigma_g$ is
null-homotopic in $L^{2\alpha}$, namely: $\sigma = 0$ in $F$. So, the
homomorphism $F \rightarrow G$ is injective, and thus it is an
isomorphism.
%In the same way, $F \rightarrow G$ is an isomorphism. 
As a consequence, $F \rightarrow \pi_k(\rips{4\alpha}(L))$ is
injective, and it is now sufficient to prove that the image of
$\phi_*: \pi_k(\rips{\alpha}(L)) \rightarrow \pi_k(\cech{2\alpha}(L))$
induced by the inclusion is equal to $F$.

Obviously, $F$ is contained in the image of $\phi_*$. Now, let $\sigma
\in \pi_k(\rips{\alpha}(L))$ and let $\phi_*(\sigma)_g$ be a $k$-loop
in $L^{2\alpha}$ that is mapped through $h_{2\alpha}$ to a $k$-loop
representing the homotopy class $\phi_*(\sigma)$.  Since
$\phi_*(\sigma)$ is in the image of $\phi_*$, and since
$\rips{\alpha}(L) \subset \cech{\alpha}(L)$, we can assume that
$\phi_*(\sigma)_g$ is contained in $L^\alpha$. Let $\tilde \sigma_g$
be the image of $\phi_*(\sigma)_g$ through a deformation retraction of
$X^{2\alpha+\e}$ onto $X^{\alpha_0}$, where $0< \alpha_0 <
\frac{\alpha}{2}$ is such that $\frac{\alpha}{2}-\alpha_0 >
\e$. Obviously, $\tilde \sigma_g$ and $\phi_*(\sigma)_g$ are homotopic
in $X^{2\alpha+\e}$. It follows then from Lemma \ref{lemma:k-loop}
that $\tilde \sigma_g$ and $\phi_*(\sigma)_g$ are homotopic in
$L^{2\alpha}$. And since $\tilde \sigma_g$ is contained in
$X^{\alpha_0} \subset L^{\frac{\alpha}{2}}$, the equivalence class of
$h_{\frac{\alpha}{2}}(\tilde\sigma_g)$ in
$\pi_k(\cech{\frac{\alpha}{2}}(L))$ is mapped to
$\phi_*(\sigma)\in\pi_k(\cech{2\alpha}(L))$ through the homomorphism
induced by
$\cech{\frac{\alpha}{2}}(\eee)\hookrightarrow\cech{2\alpha}(\eee)$,
which commutes with the homotopy equivalences.  As a result,
$\phi_*(\sigma)$ belongs to $F$, which is therefore equal to
$\im\phi_*$.
\end{proofarg}

%%%%%%%%%%%%%%%%%%%%%%%%%%%%%%%%%%%%%%%%%%%%%%%%%%%%%%%%%%%%%%%%%%%%%%%%%%
%
\section{The case of smooth submanifolds of $\R^d$}
\label{sec-smooth-case}

In this section, we consider the case of submanifolds $\sss$ of $\R^d$
that have positive {\em reach}. Recall that the reach of $\sss$, or
$\rch(\sss)$ for short, is the minimum distance between the points of
$\sss$ and the points of its medial axis \cite{ab-srvf-99}. A point
cloud $\eee\subset\sss$ is an {\em $\e$-sample} of $\sss$ if every
point of $\sss$ lies within distance $\e$ of $\eee$. In addition,
$\eee$ is {\em $\e$-sparse} if its points lie at least $\e$ away from
one another.

Our main result is a first attempt at quantifying a conjecture of
Carlsson and de Silva \cite{cds-teuwc-04}, according to which the
witness complex filtration should have {\em cleaner} persistence
barcodes than the \v Cech and Rips filtrations, at least on smooth
submanifolds of $\R^d$. By {\em cleaner} is meant that the amplitude
of the topological noise in the barcodes should be smaller, and also
that the long intervals should appear earlier. We prove this latter
statement correct, at least to some extent:
\begin{theorem}\label{th-prop2-winf}
There exist a constant $\varrho>0$ and a continuous, non-decreasing
map $\wpb:[0,\varrho)\rightarrow[0,\frac{1}{2})$, such that, for any
    submanifold $\sss$ of $\R^d$, for all $\e,\delta$ satisfying
    $0<\delta\leq\e<\varrho\;\rch(\sss)$, for any $\delta$-sample
    $\wit$ of $\sss$ and any $\e$-sparse $\e$-sample $\eee$ of $\wit$,
    $\winfaw(\eee)$ contains a subcomplex $\cal D$ homeomorphic to
    $\sss$ and such that the canonical inclusion ${\cal
      D}\hookrightarrow\winfaw(\eee)$ induces an injective homomorhism
    between homology groups, provided that $\alpha$ satisfies:
    $\frac{8}{3}(\delta+\wpb(\frac{\e}{\rch(\sss)})^2\e) \leq\alpha <
    \frac{1}{2}\;\rch(\sss)- (3+\frac{\sqrt{2}}{2})(\e+\delta)$.
\end{theorem}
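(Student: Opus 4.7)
The plan is to take $\cal D$ to be the restricted Delaunay complex $\del_{|\sss}(\eee)$ of $\eee$ with respect to $\sss$, that is, the nerve of the cover of $\sss$ by the restrictions of the Voronoi cells of $\eee$ to $\sss$. Under the sparse $\e$-sampling hypothesis with $\e$ smaller than some fixed fraction of $\rch(\sss)$, classical results on restricted Delaunay triangulations of smooth submanifolds (Amenta--Bern, Boissonnat--Oudot, and the like) ensure that $\cal D$ is homeomorphic to $\sss$, the homeomorphism being given by the natural projection of the geometric realization of $\cal D$ in $\R^d$ onto $\sss$. The quantitative sampling threshold coming from these results gives the constant $\varrho$ and controls the function $\wpb$ through its angular and dihedral estimates.

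Next I would show the combinatorial inclusion $\cal D\subseteq\winfaw(\eee)$ by producing, for every $k$-simplex $\sigma=[x_0,\ldots,x_k]$ of $\cal D$, an $\alpha$-witness in $\wit$. By definition of $\del_{|\sss}(\eee)$ there is a point $c\in\sss$ equidistant from $x_0,\ldots,x_k$ at some distance $R$, with every other landmark of $\eee$ at Euclidean distance at least $R$ from $c$. Since $\wit$ is a $\delta$-sample of $\sss$, pick $w\in\wit$ with $\|w-c\|\leq\delta$, so $\|w-x_i\|\leq R+\delta$ for every $i$, while the $(k+1)$-th nearest landmark of $w$ lies at distance at least $R-\delta$ minus a curvature correction. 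The circumradius $R$ itself is bounded above, using the fact that $\sss$ is well approximated by its tangent space at $c$ up to a second-order error, by a quantity of order $\e+\wpb(\e/\rch(\sss))^2\e$. Plugging these estimates into the definition of $\alpha$-witness and rearranging yields the lower bound on $\alpha$ stated in the theorem, whereas the upper bound ensures that all the small perturbation balls used in the argument remain inside the reach tube.

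For the injectivity of the induced homomorphism $j_*\colon H_k(\cal D)\to H_k(\winfaw(\eee))$, I would factor $j$ through a \v Cech complex by means of Corollary~\ref{cor-cech-vs-winf}: $\cal D\hookrightarrow\winfaw(\eee)\hookrightarrow\cech{8\alpha}(\eee)$. By Lemma~\ref{lemma:parametric_nerve}, $\cech{8\alpha}(\eee)$ is homotopy equivalent to $\eee^{8\alpha}$, and since the upper bound on $\alpha$ forces $8\alpha<\rch(\sss)$ while $\eee$ is close to $\sss$ in Hausdorff distance, a standard reach argument shows that $\eee^{8\alpha}$ deformation retracts onto $\sss$ via the nearest-point projection. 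Chasing the resulting commutative square at homology level, the composition $\cal D\hookrightarrow\cech{8\alpha}(\eee)\simeq\sss$ becomes homotopic to the homeomorphism $\cal D\to\sss$ of the first step, and is in particular injective on homology. The Sandwich Lemma~\ref{sandwich-lemma} then forces $j_*$ itself to be injective.

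The main obstacle is the quantitative geometry in the second paragraph: controlling from above the circumradius $R$ of a restricted Delaunay simplex and from below the distance from $w$ to its $(k+1)$-th nearest landmark, both in terms of $\e$, $\delta$ and $\rch(\sss)$. This is where the curvature correction $\wpb(\e/\rch(\sss))^2\e$ enters and where the constants $\tfrac{8}{3}$ and $3+\tfrac{\sqrt{2}}{2}$ of the theorem are fixed. A secondary subtlety is to check that the nerve-theorem homotopy equivalence $\cech{8\alpha}(\eee)\to\eee^{8\alpha}$ sends the geometric realization of $\cal D$ to a set that retracts onto $\sss$ through the identity, up to homotopy; this follows from the fact that the equivalence of Lemma~\ref{lemma:parametric_nerve} can be chosen so as to send each simplex into the convex hull of its own vertices, which sits well inside the reach tube of $\sss$ under our hypotheses.
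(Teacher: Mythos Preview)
Your proposal has a genuine gap at the very first step. You take $\mathcal{D}$ to be the \emph{unweighted} restricted Delaunay triangulation and invoke Amenta--Bern / Boissonnat--Oudot to conclude that it is homeomorphic to $\sss$. Those results are for curves and surfaces; for submanifolds of dimension $m\geq 3$ the unweighted restricted Delaunay triangulation can fail to be homeomorphic to $\sss$ regardless of sampling density, because slivers persist. The paper points this out explicitly right after Theorem~\ref{th-wdels-homeo-S}, citing~\cite{o-ntrdwchd-07}. The fix, and the actual origin of the map $\wpb$ in the statement, is to take $\mathcal{D}=\wdels(\eee)$ for a suitable assignment of weights $\omega$ of relative amplitude at most $\wpb(\e/\rch(\sss))$, supplied by the Cheng--Dey--Ramos result (Theorem~\ref{th-wdels-homeo-S}). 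The term $\wpb(\e/\rch(\sss))^2\e$ in the lower bound on $\alpha$ is then not a curvature correction on a circumradius as you suggest, but precisely the penalty paid for these weights when proving the inclusion $\wdels(\eee)\subseteq\winfaw(\eee)$: this is Lemma~\ref{lem-wdels-in-winf}, where converting \emph{weighted} equidistance at a point $c\in\sss$ into Euclidean near-equidistance at a nearby witness $w\in\wit$ produces an excess of $\frac{2}{1-\wpb^2}(\delta+\wpb^2\e)\leq\frac{8}{3}(\delta+\wpb^2\e)$.

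Your injectivity argument also does not reach the stated upper bound. Factoring through $\cech{8\alpha}(\eee)\simeq\eee^{8\alpha}$ and then retracting onto $\sss$ requires $8\alpha$ plus $\disth(\eee,\sss)$ to be below $\rch(\sss)$, so at best $\alpha\lesssim\tfrac{1}{8}\rch(\sss)$, whereas the theorem allows $\alpha$ up to roughly $\tfrac{1}{2}\rch(\sss)$. The paper bypasses \v Cech complexes entirely here: starting from the Edelsbrunner--Shah homeomorphism $h_0:\wdels(\eee)\to\sss$, it extends $h_0$ simplex by simplex to a continuous map $\tilde h_0:\winfaw(\eee)\to\R^d$ via Dugundji's extension theorem, shows (Lemma~\ref{lem-no-intersection-with-ma}) that the image of every simplex lies in a ball of radius $2\alpha+(6+\sqrt{2})(\e+\delta)$ about one of its vertices and hence avoids the medial axis whenever this radius is below $\rch(\sss)$, and finally sets $h=h_0^{-1}\circ\projS\circ\tilde h_0$. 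Since $h\circ j$ is literally the identity on $\wdels(\eee)$, $j_*$ is injective. This direct retraction is what produces the constants $\tfrac{1}{2}$ and $3+\tfrac{\sqrt{2}}{2}$ in the statement.
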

This theorem guarantees that, for values of $\alpha$ ranging from
$O(\delta+\wpb(\frac{\e}{\rch(\sss)})^2\e)$ to $\Omega(\rch(\sss))$,
the topology of $\sss$ is captured by a subcomplex $\cal D$ that
injects itself suitably in $\winfaw(\eee)$. As a result, long
intervals showing the homology of $\sss$ appear around
$\alpha=O(\delta+\wpb(\frac{\e}{\rch(\sss)})^2\e)$ in the persistence
barcode of the witness complex filtration.
% $\{\winfaw(\eee)\}_{\alpha\geq 0}$. 
This can be much sooner than the time $\alpha=2\e$ prescribed by
Theorem \ref{th-prop-winf}, since $\wpb(\frac{\e}{\rch(\sss)})$ can be
arbitrarily small. Specifically, the denser the landmark set $\eee$,
the smaller the ratio $\frac{\e}{\rch(\sss)}$, and therefore the
smaller $\delta+\wpb(\frac{\e}{\rch(\sss)})^2\e$ compared to $2\e$. We
have reasons to believe that this upper bound on the appearance time
of long bars is tight. In particular, the bound
cannot depend solely on $\delta$, since otherwise, in the limit case
where $\delta=0$, we would get that the homology groups of $\sss$ can
be injected into the ones of the standard witness complex
$\winfw(\eee)$, which is known to be false \cite{go-ruwc-07,
  o-ntrdwchd-07}. The same argument implies that the amplitude of the
topological noise in the barcode cannot depend solely on $\delta$
either. However, whether the upper bound $O(\e)$ on the
amplitude of the noise can be improved or not is still an open question.

Our proof of Theorem \ref{th-prop2-winf} generalizes and argument used
in \cite{ggow-gdtp-08} for the planar case, which stresses the close
relationship that exists between the $\alpha$-witness complex and the
so-called {\em weighted restricted Delaunay triangulation}
$\wdels(\eee)$. Given a submanifold $\sss$ of $\R^d$, a finite
landmark set $\eee\subset\R^d$, and an assignment of non-negative
weights to the landmarks, specified through a map
$\weight:\eee\rightarrow [0,\infty)$, $\wdels(\eee)$ is the nerve of
  the restriction to $\sss$ of the {\em power diagram}\footnote{More
    on power diagrams and on restricted triangulations can be found in
    \cite{a-vdsfg-91} and \cite{es-tts-97} respectively.} of the
  weighted set $\eee$. Under the hypotheses of the theorem, we show
  that $\winfaw(\eee)$ contains $\wdels(\eee)$, which, by a result of
  Cheng {\em et al.}  \cite{cdr-mrps-2005}
(see Theorem \ref{th-wdels-homeo-S} below),
is homeomorphic to $\sss$. The main point of the proof is then to
show that $\wdels(\eee)$ injects itself {\em nicely} into
$\winfaw(\eee)$.

%Under a sufficient density condition prescribed by constant
%$\varrho$ of Theorem \ref{th-prop2-winf}, and given a suitable
%assignment of weights $\weight:\eee\rightarrow [0,\infty)$,
%  $\wdels(\eee)$ is known to be homeomorphic to the sampled manifold
%  $\sss$ . Furthermore, the denser the landmark
%  set $\eee$, the smaller the weights can be, as specified by map
%  $\wpb$ of Theorem \ref{th-prop2-winf}. It follows that, under the
%  hypotheses of Theorem \ref{th-prop2-winf}, . Letting ${\cal
%    D}=\wdels(\eee)$, the main point of the proof of the theorem

The rest of the section is devoted to the proof of
Theorem \ref{th-prop2-winf}.  After introducing the weighted
restricted Delaunay triangulation in Section \ref{sec-restr-del} and
stressing its relationship with the $\alpha$-witness complex in
Section \ref{sec-dels-winf}, we detail the proof of Theorem
\ref{th-prop2-winf} in Section \ref{sec-proof-th-prop2-winf}.

\subsection {The weighted restricted Delaunay triangulation}
\label{sec-restr-del}
Given a finite point set $\eee\subset\R^{d}$, an {\em assignment of
  weights over $\eee$} is a non-negative real-valued function
$\weight: \eee\rightarrow [0,\infty)$. The quantity $\max_{u\in\eee,
    v\in\eee\setminus\{u\}} \frac{\weight(u)}{\|u-v\|}$ is called the
  {\em relative amplitude} of $\weight$. Given $p\in\R^{d}$, the {\em
    weighted distance} from $p$ to some weighted point $v\in\eee$ is
  $\|p-v\|^2-\weight(v)^2$. This is actually not a metric, since it is
  not symmetric.
Given a finite point set $\eee$ and an assignment of weights $\weight$
over $\eee$, we denote by $\wvor(\eee)$ the power diagram of the
weighted set $\eee$, and by $\wdel(\eee)$ its nerve, also known as 
the weighted Delaunay triangulation. If the
relative amplitude of $\weight$ is at most $\frac{1}{2}$, then the
points of $\eee$ have non-empty cells in $\wvor(\eee)$, and in fact
each point of $\eee$ belongs to its own cell \cite{cdeft-se-00}. For
any simplex $\simplex$ of $\wdel(\eee)$, $\wdual{\simplex}$ denotes
the face of $\wvor(\eee)$ dual to $\simplex$.
%
%We say that the weighted points of $\eee$ lie in {\em general
%  position} if no point of $\R^{d}$ is equidistant to $d+2$ points
%of $\eee$ in the weighted metric, and if no $d+1$ points on the
%convex hull of $\eee$ are coplanar. Note that it is always possible to
%perturb the points of $\eee$ or their weights infinitesimally, so as
%to make them lie in general position. Therefore, in the rest of the
%section we assume implicitely that the weighted point set $\eee$ is in
%general position. Under this assumption, every $d+2$ cells of $\wvor(\eee)$
%have an empty common intersection, and hence every simplex
%of $\wdel(\eee)$ has dimension at most $d$.

Given a subset $X$ of $\R^{d}$, we call $\wvorsarg{X}(\eee)$ the
restriction of $\wvor(\eee)$ to $X$, and we denote by
$\wdelsarg{X}(\eee)$ its nerve, also known as the weighted Delaunay
triangulation of $\eee$ restricted to $X$. Observe that
$\wdelsarg{X}(\eee)$ is a subcomplex of $\wdel(\eee)$. In the special
case where all the weights are equal, $\wvor(\eee)$ and $\wdel(\eee)$
coincide with their standard Euclidean versions, $\vor(\eee)$ and
$\del(\eee)$. Similarly, $\wdual{\simplex}$ becomes $\dual{\simplex}$,
and $\wvorsarg{X}(\eee)$ and $\wdelsarg{X}(\eee)$ become respectively
$\vorsarg{X}(\eee)$ and $\delsarg{X}(\eee)$.
\begin{theorem}[Lemmas 13, 14, 18 of \cite{cdr-mrps-2005}, 
see also Theorem 2.5 of \cite{bgo-mradwc-07}]
\label{th-wdels-homeo-S}
There exist\footnote{Note that $\varrho$ and $\wpb$ are the same as in
  Theorem \ref{th-prop2-winf}. In fact, these quantities come from
  Theorem \ref{th-wdels-homeo-S}.} a constant $\varrho>0$
and a non-decreasing continuous map
$\wpb:[0,\varrho)\rightarrow[0,\frac{1}{2})$, such that, for any
    manifold $\sss$ and any $\e$-sparse $2\e$-sample $\eee$ of $\sss$,
    with $\e< \varrho\;\rch(\sss)$, there is an assignment of weights
    $\weight$ of relative amplitude at most
    $\wpb\left(\frac{\e}{\rch(\sss)}\right)$ such that $\wdels(\eee)$
    is homeomorphic to~$\sss$.
%, noted $\wdels(\eee)\homeo\sss$.
\end{theorem}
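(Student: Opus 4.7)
The plan is to follow the sliver-exudation strategy of Cheng--Dey--Ramos, adapted to the submanifold setting. The overall argument has three stages: (i) local geometric estimates showing that the landmarks are well-distributed in tangent space at the scale of $\e$; (ii) a weight-assignment step that uses an averaging (pigeonhole) argument over weight parameter space to eliminate bad simplex configurations; and (iii) a verification of the closed-ball property, which via the nerve theorem promotes $\wdels(\eee)$ to a space homeomorphic to $\sss$.

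For stage (i), I would fix a small constant $\varrho>0$ to be chosen at the end and assume $\e<\varrho\,\rch(\sss)$. Standard reach-based estimates (comparisons between Euclidean and geodesic distances, tangent-space approximation, bounds on the deviation of $\sss$ from $\tangent_v\sss$ at scale $\e$) give that, for every landmark $v\in\eee$, the neighborhood of $v$ on $\sss$ of radius $O(\e)$ is a topological disc that projects onto $\tangent_v\sss$ with distortion $O(\e/\rch(\sss))$. The $\e$-sparseness combined with the $2\e$-sample hypothesis then bounds the number of landmarks in any $O(\e)$-ball by a dimension-dependent constant.

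For stage (ii), I would parameterize weight assignments by $\weight:\eee\to[0,\wpb\,\e]$, viewing each weight as ranging over an interval whose length is controlled by a single parameter $\wpb$. Adapting Cheng--Dey--Edelsbrunner--Facello--Teng, a sliver simplex $\simplex$ on vertices of $\eee$ can appear in $\wdels(\eee)$ only when the weights of its vertices lie in a thin ``bad'' region of parameter space, whose measure is small compared to $\wpb\,\e$. Summing this bad measure over the finitely many candidate sliver configurations within the local neighborhood of any landmark (bounded by (i)) and a volume argument give the existence of a $\weight$ producing no sliver. Simultaneously, since the relative amplitude stays below $\frac{1}{2}$, each landmark $v$ retains its own cell in $\wvor(\eee)$ and in $\wvorsarg{\sss}(\eee)$.

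Stage (iii) verifies the closed-ball property for $\wvorsarg{\sss}(\eee)$: every non-empty face $\wdual{\simplex}\cap\sss$ is a closed ball of dimension $\dim\sss-\dim\simplex$, and boundaries of cells meet transversely. This is done locally by projecting each restricted cell into $\tangent_v\sss$, where it becomes a perturbed power cell against nearby landmarks; the sliver-free condition from (ii) ensures that the dual faces of $\wdel(\eee)$ meet $\sss$ in round discs, while the curvature bound from (i) ensures the projection faithfully captures the topology. With the closed-ball property in hand, the nerve theorem applied to $\{\wvorsarg{\sss}(v)\}_{v\in\eee}$ gives the desired homeomorphism $\wdels(\eee)\homeo\sss$. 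The main obstacle is quantitative bookkeeping: $\varrho$ must be small enough that second-order tangent deviations ($O(\e^2/\rch(\sss))$) are dominated by the weight perturbations ($O(\wpb(\e/\rch(\sss))\cdot\e)$), while $\wpb(\e/\rch(\sss))$ must stay strictly below $\frac{1}{2}$; the continuity and monotonicity of $\wpb$ then fall out of the monotonic dependence of the sliver-exudation bounds on $\e/\rch(\sss)$.
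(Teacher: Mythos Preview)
Your outline is essentially the Cheng--Dey--Ramos argument, and that is exactly what the paper invokes: Theorem~\ref{th-wdels-homeo-S} is not proved in the paper at all but is imported from \cite{cdr-mrps-2005} (with a pointer to \cite{bgo-mradwc-07}). The only commentary the paper adds is the paragraph following the statement, which summarizes the mechanism just as you do: the weight assignment forces $\wvors(\eee)$ to satisfy the closed-ball property, and then Edelsbrunner--Shah \cite{es-tts-97} yields the homeomorphism. So your three stages (local tangent-space estimates, sliver exudation by a measure/pigeonhole argument over weight space, verification of the closed-ball property) line up with the cited proof.

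One small correction worth making: in stage~(iii) you appeal to ``the nerve theorem'' to promote the closed-ball property to a homeomorphism $\wdels(\eee)\homeo\sss$. The nerve theorem as usually stated (and as used elsewhere in this paper) only gives a homotopy equivalence. What actually delivers a \emph{homeomorphism} here is the Edelsbrunner--Shah theorem \cite{es-tts-97}, which is specific to restrictions of CW decompositions satisfying the closed-ball property; the paper is explicit about this, and it matters later because the proof of Theorem~\ref{th-prop2-winf} uses not just the existence of $h_0$ but the additional geometric control that $h_0(\simplex)\subseteq\bigcup_{v\in\simplex}\wdual{v}\cap\sss$, a property that comes out of the Edelsbrunner--Shah construction rather than from an abstract nerve argument.
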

This theorem guarantees that the topology of $\sss$ is captured by
$\wdels(\eee)$ provided that the landmarks are sufficiently densely
sampled on $\sss$, and that they are assigned suitable
weights. Observe that the denser the landmark set, the smaller the
weights are required to be, as specified by the map $\wpb$. In the
particular case where $\sss$ is a curve or a surface, $\wpb$ can be
taken to be the constant zero map, since $\dels(\eee)$ is homeomorphic
to $\sss$ \cite{ab-srvf-99, abe-cbscc-98}.  On higher-dimensional
manifolds though, positive weights are required, since $\dels(\eee)$
may fail to capture the topological invariants of $X$
\cite{o-ntrdwchd-07}.

The proof of the theorem given in \cite{cdr-mrps-2005} shows that
$\wvors(\eee)$ satisfies the so-called {\em closed ball property},
which states that every face of the weighted Voronoi diagram
$\wvor(\eee)$ intersects the manifold $\sss$ along a topological ball
of proper dimension, if at all. Under this condition, there exists a
homeomorphism $h_0$ between the nerve $\wdels(\eee)$ and $\sss$, as
proved by Edelsbrunner and Shah \cite{es-tts-97}. Furthermore, $h_0$
sends every simplex of $\wdels(\eee)$ to a subset of the union of the
restricted Voronoi cells of its vertices, that is: $\forall
\simplex\in\wdels(\eee)$, $h_0(\simplex)\subseteq\bigcup_{v\mbox{
    \scriptsize vertex of }\simplex} \wdual{v}\cap\sss$. This fact
will be instrumental in the proof of Theorem \ref{th-prop2-winf}.

\subsection{Relationship between $\wdels(\eee)$ and $\winfaw(\eee)$}
\label{sec-dels-winf}
As mentioned in introduction, the use of the witness complex
filtration for topological data analysis is motivated by its close
relationship with the weighted restricted Delaunay triangulation:
\begin{lemma}\label{lem-wdels-in-winf}
Let $\sss$ be a compact subset of $\R^d$, $\wit\subseteq\sss$ a
$\delta$-sample of $\sss$, and $\eee\subseteq\wit$ an $\e$-sparse
$\e$-sample of $\wit$. Then, for all assignment of weights $\weight$
of relative amplitude $\wpb\leq\frac{1}{2}$, $\wdels(\eee)$ is
included in $\winfaw(\eee)$ whenever $\alpha\geq
\frac{2}{1-\wpb^2}\left(\delta+\wpb^2\e\right)$.
\end{lemma}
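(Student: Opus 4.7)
The plan is to show that any $k$-simplex $\sigma=[p_0,\ldots,p_k]$ of $\wdels(\eee)$ admits an $\alpha$-witness in $\wit$, and hence lies in $\winfaw(\eee)$. Since $\sigma\in\wdels(\eee)$, the set $\bigcap_{i=0}^k \wdual{p_i}\cap\sss$ is non-empty; pick any $x$ in it. By definition of the weighted Voronoi cells, there is $R\geq 0$ with $\|x-p_i\|^2-\weight(p_i)^2=R^2$ for every $i$, while $\|x-q\|^2-\weight(q)^2\geq R^2$ for every other $q\in\eee$. Using that $\wit$ is a $\delta$-sample of $\sss$, I will pick $w\in\wit$ with $\|w-x\|\leq\delta$ and argue that $w$ $\alpha$-witnesses $\sigma$ as soon as $\alpha\geq\frac{2}{1-\wpb^2}(\delta+\wpb^2\e)$.

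The next step is a small combinatorial reduction. The witness condition at $w$ reads $\max_i\|w-p_i\|\leq \dist_k(w)+\alpha$, and it follows from the uniform inequality
\[
\|w-p_i\|-\|w-q\|\;\leq\;\alpha\qquad\text{for every }i\in\{0,\ldots,k\}\text{ and every }q\in\eee,
\]
by specialising $q$ to the $(k+1)$th nearest landmark of $w$ (whether or not that landmark lies in $\sigma$). So it suffices to bound this difference uniformly.

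The core estimate proceeds by triangle inequality through $x$. Letting $W=\max_i\weight(p_i)$, the identities above yield $\|w-p_i\|\leq\delta+\sqrt{R^2+\weight(p_i)^2}$ and $\|w-q\|\geq\sqrt{R^2+\weight(q)^2}-\delta$, hence
\[
\|w-p_i\|-\|w-q\|\;\leq\;2\delta+\sqrt{R^2+W^2}-R.
\]
To control the remaining weight-dependent term I will rewrite $\sqrt{R^2+W^2}-R=W^2/(R+\sqrt{R^2+W^2})$ and feed the triangle inequality $\|p_i-p_j\|\leq\sqrt{R^2+\weight(p_i)^2}+\sqrt{R^2+\weight(p_j)^2}$ into the relative-amplitude hypothesis $\weight(p_i)\leq\wpb\|p_i-p_j\|$. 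This produces a self-referential inequality on $W$ which, combined with the elementary upper bound $R\leq\e+\delta$ coming from $\eee$ being an $(\e+\delta)$-sample of $\sss$, collapses to the announced threshold on $\alpha$.

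The main obstacle is pinning down the tight constant in the final step. A pure triangle-inequality use of the self-referential estimate naturally produces a denominator of the form $1-4\wpb^2$ instead of the sharper $1-\wpb^2$ appearing in the statement. Recovering the correct factor will require exploiting the $\e$-sparsity of $\eee$ to couple the intrinsic scale of $\sigma$ more tightly to $R$ than a raw diameter bound allows, turning the quadratic-in-$W$ relation into an essentially linear one; this is the step where both halves of the sampling hypothesis (sparsity of $\eee$, not only $\delta$-denseness of $\wit$) are genuinely used.
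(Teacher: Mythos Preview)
Your opening plan --- pick $c$ in the restricted weighted Voronoi face dual to $\sigma$, then take $w\in\wit$ with $\|w-c\|\leq\delta$ and show that $w$ $\alpha$-witnesses $\sigma$ --- is exactly right, and coincides with the paper's approach. The combinatorial reduction to $\|w-p_i\|-\|w-q\|\leq\alpha$ for all $q\in\eee$ is also fine.

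The gap is in how you control the weight term. You bound $W=\max_i\weight(p_i)$ by applying the relative-amplitude hypothesis to a pair of \emph{vertices of $\sigma$}, getting $W\leq 2\wpb\sqrt{R^2+W^2}$ and hence a denominator $1-4\wpb^2$. You then propose to repair this via the $\e$-sparsity of $\eee$. That is not the right fix: the paper's argument never uses sparsity, and there is no clean way to recover the sharp constant along your route.

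The key observation you are missing is that the relative-amplitude bound $\weight(v)\leq\wpb\|v-p\|$ holds for \emph{every} landmark $p$, not only for the other vertices of $\sigma$. So, for a fixed vertex $v$ of $\sigma$ and an arbitrary $p\in\eee$, combine the weighted Voronoi inequality $\|v-c\|^2\leq\|p-c\|^2+\weight(v)^2$ with $\weight(v)^2\leq\wpb^2\|v-p\|^2\leq\wpb^2(\|v-c\|+\|p-c\|)^2$. This yields a quadratic in $\|v-c\|$ whose positive root gives directly
\[
\|v-c\|\;\leq\;\frac{1+\wpb^2}{1-\wpb^2}\,\|p-c\|.
\]
Now specialise $p=p_w$, the nearest landmark to $w$: using $\|p_w-c\|\leq\|p_w-w\|+\delta$ and $\|p_w-w\|\leq\e$ you get
\[
\|v-w\|\leq\|v-c\|+\delta\leq\frac{1+\wpb^2}{1-\wpb^2}(\|p_w-w\|+\delta)+\delta
=\|p_w-w\|+\frac{2}{1-\wpb^2}\bigl(\delta+\wpb^2\e\bigr),
\]
which is exactly the witness condition (in fact, simultaneously for all faces of $\sigma$, since $\|p_w-w\|\leq\dist_k(w)$ for every $k$). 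No self-referential estimate on $W$, no use of sparsity.
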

This result implies in particular that $\dels(\eee)$ is included in
$\winfaw(\eee)$ whenever $\alpha\geq 2\delta$, since $\dels(\eee)$ is
nothing but $\wdels(\eee)$ for an assignment of weights of relative
amplitude zero.

%\begin{proofarg}{of Lemma \ref{lem-wdels-in-winf}}
\begin{proof}
Let $\simplex$ be a simplex of $\wdels(\eee)$. If $\simplex$ is a
vertex, then it clearly belongs to $\winfaw(\eee)$ for all $\alpha\geq
0$, since $\eee\subseteq\wit$. Assume now that $\simplex$ has positive
dimension, and consider a point $c\in\wdual\simplex\cap\sss$. For any
vertex $v$ of $\simplex$ and any point $p$ of $\eee$ (possibly equal
to $v$), we have: $\|v-c\|^2-\weight(v)^2\leq\|p-c\|^2-\weight(p)^2$,
which yields: $\|v-c\|^2\leq \|p-c\|^2+
\weight(v)^2-\weight(p)^2$. Now, $\weight(p)^2$ is non-negative, while
$\weight(v)^2$ is at most $\wpb^2\|v-p\|^2$, which gives:
$\|v-c\|^2\leq \|p-c\|^2+\wpb^2\|v-p\|^2$. Replacing $\|v-p\|$ by
$\|v-c\|+\|p-c\|$, we get a
semi-algebraic expression of degree 2 in $\|v-c\|$, namely:
$
(1-\wpb^2)\|v-c\|^2
-2\wpb^2\|p-c\|\|v-c\|-(1+\wpb^2)\|p-c\|^2\leq 0.
$
It follows that $\|v-c\|\leq\frac{1+\wpb^2}{1-\wpb^2}\;\|p-c\|$. Let
now $w$ be a point of $\wit$ closest to $c$ in the Euclidean
metric. Using the triangle inequality and the fact that
$\|w-c\|\leq\delta$, we get: $\|v-w\|\leq\|v-c\|+\|w-c\|\leq
\frac{1+\wpb^2}{1-\wpb^2}\;\|p-c\| + \delta$. This holds for any point
$p\in\eee$, and in particular for the nearest neighbor $p_w$ of $w$ in
$\eee$. Therefore, we have $\|v-w\|\leq
\frac{1+\wpb^2}{1-\wpb^2}\;\|p_w-c\|+\delta$, which is at most
$\frac{1+\wpb^2}{1-\wpb^2}\;(\|p_w-w\|+\delta)+\delta\leq\|p_w-w\|+
\frac{2}{1-\wpb^2}\left(\delta+\wpb^2\e\right)$ because
$\|w-c\|\leq\delta$ and $\|w-p_w\|\leq\e$. Since this inequality holds
for any vertex $v$ of $\simplex$, and since the Euclidean distances
from $w$ to all the landmarks are at least $\|p_w-w\|$, $w$ is an
$\alpha$-witness of $\simplex$ and of all its faces as soon as
$\alpha\geq \frac{2}{1-\wpb^2}\left(\delta+\wpb^2\e\right)$. Since
this holds for every simplex $\simplex$ of $\wdels(\eee)$, the lemma
follows.
\end{proof}
%\end{proofarg}

\subsection{Proof of Theorem \ref{th-prop2-winf}}
\label{sec-proof-th-prop2-winf}
The proof is mostly algebraic, but it relies on two technical
results. The first one is Dugundji's extension theorem
\cite{d-ett-51}, which states that, given an abstract simplex
$\simplex$ and a continuous map $f:\partial\simplex\rightarrow\R^d$,
$f$ can be extended to a continuous map $f:\simplex\rightarrow\R^d$
such that $f(\simplex)$ is included in the Euclidean convex hull of
$f(\partial\simplex)$, noted $\conv(f(\partial\simplex))$. This
convexity property of $f$ is used in the proof of the second technical
result, stated as Lemma \ref{lem-no-intersection-with-ma} and proved
at the end of the section.

\begin{proofarg}{of Theorem \ref{th-prop2-winf}}
Since $\delta\leq\e$, $\eee$ is an $\e$-sparse $2\e$-sample of $\sss$,
with $\e<\varrho\;\rch(\sss)$. Therefore, by Theorem
\ref{th-wdels-homeo-S}, there exists an assignment of weights
$\weight$ over $\eee$, of relative amplitude at most
$\wpb\left(\frac{\e}{\rch(\sss)}\right)$, such that $\wdels(\eee)$ is
homeomorphic to $\sss$. Taking ${\cal D}=\wdels(\eee)$, we then have:
$\forall k\in\N$, $H_k(X)\isom H_k({\cal D})$. Moreover, by Lemma
\ref{lem-wdels-in-winf}, we know that ${\cal D}=\wdels(\eee)$ is
included in $\winfaw(\eee)$, since $\alpha\geq
\frac{8}{3}\left(\wpb\left(\frac{\e}{\rch(\sss)}\right)^2\e+\delta\right)\geq
\frac{2}{1-\wpb\left(\frac{\e}{\rch(\sss)}\right)^2}
\left(\wpb\left(\frac{\e}{\rch(\sss)}\right)^2\e+\delta\right)$. There
 remains to show that the inclusion map
$j:\wdels(\eee)\hookrightarrow\winfaw(\eee)$ induces injective
homomorphisms $j_*$ between the homology groups of $\wdels(\eee)$ and
$\winfaw(\eee)$, which will conclude the proof of the theorem.

Our approach to showing the injectivity of $j_*$ consists in building
a continuous map\footnote{Note that this map does not need to be
  simplicial, since we are using singular homology.}
$h:\winfaw(\eee)\rightarrow\wdels(\eee)$ such that $h\circ j$ is
homotopic to the identity in $\wdels(\eee)$. This implies that
$h_*\circ j_*:H_k(\wdels(\eee))\rightarrow H_k(\wdels(\eee))$ is an
isomorphism (in fact, it is the identity map), and thus that $j_*$ is
injective.

We begin our construction with the homeomorphism
$h_0:\wdels(\eee)\rightarrow\sss$ provided by the theorem of
Edelsbrunner and Shah \cite{es-tts-97}. Taking $h_0$ as a map
$\wdels(\eee)\rightarrow\R^d$, we extend
%\footnote{Recall that
%  $\winfaw(\eee)$ contains $\wdels(\eee)$, since $\alpha$ is assumed
%  to be at least $\frac{16}{3}\;\delta$.} 
it to a continuous map $\tilde h_0:\winfaw(\eee)\rightarrow\R^d$ by
the following iterative process: while there exists a simplex
$\simplex\in\winfaw(\eee)$ such that $\tilde h_0$ is defined over the
boundary of $\simplex$ but not over its interior, apply Dugundji's
extension theorem, which extends $\tilde h_0$ to the
entire simplex $\simplex$.
\begin{lemma}\label{lem-iterative-proc}
The above iterative process extends $h_0$ to a map $\tilde
h_0:\winfaw(\eee)\rightarrow\R^d$.
\end{lemma}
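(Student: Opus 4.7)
The plan is to induct on the dimension of simplices, extending $\tilde h_0$ one skeleton at a time; since $\eee$ is finite, $\winfaw(\eee)$ has bounded dimension and the procedure terminates after finitely many applications of Dugundji's theorem.

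I would first handle the base case by checking that every vertex of $\winfaw(\eee)$ is already a vertex of $\wdels(\eee)$, so that the $0$-skeleton is covered by the initial map $h_0$. Indeed, every $v\in\eee$ lies in $\sss$, since $\eee\subseteq\wit\subseteq\sss$. Moreover, the relative amplitude of $\weight$ is at most $\wpb\!\left(\e/\rch(\sss)\right)\leq\frac{1}{2}$, which by the standard fact on weighted Voronoi diagrams recalled in Section~\ref{sec-restr-del} forces $v\in\wdual{v}$, and therefore $v\in\wdual{v}\cap\sss$. Hence $\{v\}$ is a simplex of $\wdels(\eee)$, and $\tilde h_0(v)=h_0(v)$ is defined at the outset.

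For the inductive step, assume $\tilde h_0$ is continuous on the $(k-1)$-skeleton of $\winfaw(\eee)$, and consider any $k$-simplex $\simplex\in\winfaw(\eee)$. If $\simplex\in\wdels(\eee)$, nothing needs to be done, since $\tilde h_0=h_0$ is already continuous on $\simplex$ and consistent with the boundary values. Otherwise, $\partial\simplex$ lies in the $(k-1)$-skeleton, so $\tilde h_0|_{\partial\simplex}:\partial\simplex\to\R^d$ is continuous by the inductive hypothesis, and Dugundji's extension theorem produces a continuous extension $\tilde h_0:\simplex\to\R^d$ with image in $\conv(\tilde h_0(\partial\simplex))$. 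Processing each $k$-simplex once extends $\tilde h_0$ to the entire $k$-skeleton.

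The main subtlety is ensuring global continuity of the resulting map. Two distinct $k$-simplices may share a face, but that face lies in the $(k-1)$-skeleton, on which $\tilde h_0$ was fixed before either $k$-simplex was processed; hence the extensions agree automatically on the common boundary. Since a map from a finite simplicial complex is continuous iff its restrictions to the closed simplices are continuous and agree on overlaps, the iteratively constructed $\tilde h_0$ is a continuous map on all of $\winfaw(\eee)$. The only step that invokes the specific hypotheses of Theorem~\ref{th-prop2-winf}, as opposed to a purely topological argument, is the base case; once the $0$-skeleton is handled, the remainder of the construction proceeds mechanically.
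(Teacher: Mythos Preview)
Your proof is correct and follows essentially the same approach as the paper: both hinge on the observation that every vertex of $\winfaw(\eee)$ is already a vertex of $\wdels(\eee)$ (because $v\in\eee\subseteq\sss$ and the relative amplitude is below $\tfrac12$, so $v\in\wdual{v}\cap\sss$), after which the extension proceeds dimension by dimension---the paper phrases this as a minimal-counterexample argument rather than an explicit induction, but the content is the same. Your added discussion of global continuity across shared faces is a welcome detail that the paper leaves implicit.
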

\begin{proof}
We only need to prove that the process visits every simplex of
$\winfaw(\eee)$. Assume for a contradiction that the process
terminates while there still remain some unvisited simplices of
$\winfaw(\eee)$. Consider one such simplex $\simplex$ of minimal
dimension. Either $\simplex$ is a vertex, or there is at least one
proper face of $\simplex$ that has not yet been visited -- since
otherwise the process could visit $\simplex$. In the former case,
$\simplex$ is a point of $\eee$, and as such it is a
vertex\footnote{Indeed, every point $p\in\eee$ lies on $\sss$ and
  belongs to its own cell, since $\weight$ has relative amplitude less
  than $\frac{1}{2}$. Therefore, $\wdual{p}\cap\sss\neq\emptyset$,
  which means that $p$ is a vertex of $\wdels(\eee)$.} of
$\wdels(\eee)$, which means that $h_0$ is already defined over
$\simplex$ (contradiction). In the latter case, we get a contradiction
with the fact that $\simplex$ is of minimal dimension.
\end{proof}

%Among the many extension results that exist in the literature,
%Dugundji's is of particular interest to us since it ensures that
%$\tilde h_0(\simplex)$ is included in the Euclidean convex hull of
%$\tilde h_0(\partial\simplex)$. This property will be instrumental in
%the proof of Lemma \ref{lem-no-intersection-with-ma} below.

Now that we have built a map $\tilde
h_0:\winfaw(\eee)\rightarrow\R^d$, our next step is to turn it into a
map $\winfaw(\eee)\rightarrow\sss$. To do so, we compose it with the
projection $\projS$ that maps every point of $\R^d$ to its nearest
neighbor on $\sss$, if the latter is unique. This projection is
known to be well-defined and continuous over $\R^d\setminus\ma$, where
$\ma$ denotes the medial axis of $\sss$ \cite{f-gm-59}.
\begin{lemma}\label{lem-no-intersection-with-ma}
Let $\sss, \wit, \eee, \delta, \e$ satisfy the hypotheses of Theorem
\ref{th-prop2-winf}. Then, $\tilde
h_0(\winfaw(\eee))\cap\ma=\emptyset$ as long as
$\alpha<\frac{1}{2}\;\rch(\sss)-
\left(3+\frac{\sqrt{2}}{2}\right)(\e+\delta)$.
\end{lemma}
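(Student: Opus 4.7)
The plan is to show that the image $\tilde h_0(\simplex)$ of every simplex $\simplex \in \winfaw(\eee)$ lies in the open tubular neighborhood $\{x \in \R^d : \dists(x) < \rch(\sss)\}$ of $\sss$, which by definition of the reach is disjoint from the medial axis $\ma$. The essential structural fact is that at each step of the iterative extension defining $\tilde h_0$, Dugundji's theorem gives $\tilde h_0(\simplex) \subseteq \conv(\tilde h_0(\partial \simplex))$, so by induction on $\dim \simplex$ one obtains
\[
\tilde h_0(\simplex) \;\subseteq\; \conv\bigl(h_0(V(\simplex))\bigr),
\]
where $V(\simplex) \subseteq \eee$ denotes the vertex set of $\simplex$. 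The base case is covered by the footnote in Lemma~\ref{lem-iterative-proc}: the relative amplitude bound $\wpb < \tfrac{1}{2}$ ensures that every landmark is already a vertex of $\wdels(\eee)$, so $h_0$ is well-defined on all of $V(\simplex)$.

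Since each $h_0(v)$ lies on $\sss$, every point of $\conv(h_0(V(\simplex)))$ is within $\diam(h_0(V(\simplex)))$ of $\sss$, so it suffices to show that this diameter is strictly less than $\rch(\sss)$. The triangle inequality splits this into two contributions:
\[
\diam\bigl(h_0(V(\simplex))\bigr) \;\leq\; \diam(V(\simplex)) \;+\; 2 \max_{v \in V(\simplex)} \|h_0(v) - v\|.
\]
The first contribution is bounded by reproducing the argument in the proof of Corollary~\ref{cor-cech-vs-winf}: each edge $[v_i,v_j]$ of $\simplex$ lies in $\winfaw(\eee)$, hence is $\alpha$-witnessed by some $w_{ij} \in \wit \subseteq \sss$, and the distance from $w_{ij}$ to its second-nearest landmark is at most $3(\e+\delta)$, giving $\|v_i - v_j\| \leq 2\alpha + 6(\e+\delta)$ for all vertex pairs.

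For the second contribution, $h_0(v) \in \wdual{v} \cap \sss$ yields the power-cell inequality $\|h_0(v)-v\|^2 \leq \|h_0(v)-u\|^2 + \weight(v)^2 - \weight(u)^2$ for every $u \in \eee$. Applying this with $u$ the landmark closest to $h_0(v)$ (which lies within distance $\e+\delta$ of $h_0(v)$ since $\eee$ is an $(\e+\delta)$-sample of $\sss$), and carefully combining the outcome with the bound $\weight(v) \leq \wpb(\tfrac{\e}{\rch(\sss)}) \cdot \min_{u' \neq v}\|u'-v\|$ coming from the relative-amplitude hypothesis and the $\e$-sparseness of $\eee$, one obtains an estimate of the form $\|h_0(v)-v\| \leq \tfrac{\sqrt{2}}{2}(\e+\delta)$. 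Summing the two estimates yields $\diam(h_0(V(\simplex))) \leq 2\alpha + (6+\sqrt{2})(\e+\delta)$, and requiring this to be strictly less than $\rch(\sss)$ recovers the stated bound $\alpha < \tfrac{1}{2}\rch(\sss) - (3+\tfrac{\sqrt{2}}{2})(\e+\delta)$. The main obstacle is the sharpness of the bound on $\|h_0(v)-v\|$: the naive estimate $\sqrt{(\e+\delta)^2 + \weight(v)^2}$ falls short of the claimed constant, so one has to exploit more finely the geometry of the weighted Voronoi cell together with the interplay between sparseness and local density of $\eee$ on the manifold in order to squeeze out the factor $\tfrac{\sqrt{2}}{2}$.
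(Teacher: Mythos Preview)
Your induction for the inclusion $\tilde h_0(\simplex)\subseteq\conv\bigl(h_0(V(\simplex))\bigr)$ does not go through. The iterative extension of $h_0$ to $\tilde h_0$ does \emph{not} start from the vertices: it starts from $h_0$ already defined on all of $\wdels(\eee)$, and Dugundji is only invoked on simplices of $\winfaw(\eee)\setminus\wdels(\eee)$. For a positive-dimensional simplex $\simplex\in\wdels(\eee)$ one has $\tilde h_0(\simplex)=h_0(\simplex)$, which is a curved patch on the manifold $\sss$ contained in $\bigcup_{v\in V(\simplex)}\wdual{v}\cap\sss$ (the Edelsbrunner--Shah property recalled after Theorem~\ref{th-wdels-homeo-S}); there is no reason for this patch to lie inside the convex hull of the finitely many points $\{h_0(v):v\in V(\simplex)\}$. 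Hence the base of your induction already fails on $\wdels(\eee)$, and the failure propagates to every simplex of $\winfaw(\eee)$ whose boundary meets $\wdels(\eee)$ in a positive-dimensional face.

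The paper's argument avoids this by carrying through the induction the larger set $\conv\bigl(\bigcup_{v\in V(\simplex)}\wdual{v}\cap\sss\bigr)$, which is exactly what the Edelsbrunner--Shah property supplies as the base case. The metric estimate is then a \emph{radius} bound from a single fixed vertex $v\in\eee\subseteq\sss$: the other vertices $u$ lie within $2\alpha+6(\e+\delta)$ of $v$ by the witness argument you reproduce, and the entire cell $\wdual{u}\cap\sss$ lies within $\sqrt{2}(\e+\delta)$ of $u$ by the straightforward power estimate $\|p-u\|^2\leq\|p-u'\|^2+\weight(u)^2\leq 2(\e+\delta)^2$. Summing gives radius $2\alpha+(6+\sqrt{2})(\e+\delta)$ about a point of $\sss$, and no constant sharper than $\sqrt{2}$ is needed. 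Your struggle to reach $\tfrac{\sqrt{2}}{2}$ is an artifact of bounding a diameter---which doubles the displacement term---rather than a radius centred at a point already known to lie on $\sss$.
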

Since by Lemma \ref{lem-no-intersection-with-ma} we have $\tilde
h_0(\winfaw(\eee))\cap\ma=\emptyset$, the map $\projS\circ\tilde
h_0:\winfaw(\eee)\rightarrow\sss$ is well-defined and continuous. Our
final step is to compose it with $h_0^{-1}$, to get a continuous map
$h=h_0^{-1}\circ\projS\circ\tilde h_0:
\winfaw(\eee)\rightarrow\wdels(\eee)$. The restriction of $h$ to
$\wdels(\eee)$ is simply $h_0^{-1}\circ\projS\circ h_0$, which
coincides with $h_0^{-1}\circ h_0=\id$ since
$h_0(\wdels(\eee))=\sss$. It follows that $h\circ j$ is homotopic to
the identity in $\wdels(\eee)$ (in fact, it is the identity), and
therefore that the induced map $h_*\circ j_*$ is the identity. This
implies that $j_*:H_k(\wdels(\eee))\rightarrow H_k(\winfaw(\eee))$ is
injective, which concludes the proof of Theorem \ref{th-prop2-winf}.
\end{proofarg}

We end the section by providing the proof of Lemma
\ref{lem-no-intersection-with-ma}:

\begin{proofarg}{of Lemma \ref{lem-no-intersection-with-ma}}
First, we claim that the image through $\tilde h_0$ of any simplex of
$\winfaw(\eee)$ is included in the Euclidean convex hull of the
restricted Voronoi cells of its simplices, that is: $\forall
\simplex\in\winfaw(\eee)$, $\tilde h_0(\simplex)\subseteq \conv
\left(\bigcup_{v\mbox{ \scriptsize vertex of }\simplex}
\wdual{v}\cap\sss\right)$. This is clearly true if $\simplex$ belongs
to $\wdels(\eee)$, since in this case we have $\tilde
h_0(\simplex)=h_0(\simplex)\subseteq \bigcup_{v\mbox{ \scriptsize
    vertex of }\simplex} \wdual{v}\cap\sss$, as mentioned after
Theorem \ref{th-wdels-homeo-S}. Now, if the property holds for all the
proper faces of a simplex $\simplex\in\winfaw(\eee)$, then by
induction it also holds for the simplex itself. Indeed, for each
proper face $\tau\subset\simplex$, we have $\tilde
h_0(\tau)\subseteq \conv\left(\bigcup_{v\mbox{ \scriptsize vertex of
  }\tau} \wdual{v}\cap\sss\right)\subseteq \conv\left(\bigcup_{v\mbox{
    \scriptsize vertex of }\simplex}
\wdual{v}\cap\sss\right)$. Therefore, $\conv\left(\bigcup_{v\mbox{
    \scriptsize vertex of }\simplex} \wdual{v}\cap\sss\right)$
contains $\conv \left(\tilde h_0(\partial\simplex)\right)$, which, by
Dugundji's extension theorem, contains $\tilde
h_0(\simplex)$. Therefore, the property holds for every simplex of
$\winfaw(\eee)$.

We can now prove that the image through $\tilde h_0$ of any arbitrary
simplex $\simplex$ of $\winfaw(\eee)$ does not intersect the medial
axis of $\sss$. This is clearly true if $\simplex$ is a simplex of
$\wdels(\eee)$, since in this case $\tilde
h_0(\simplex)=h_0(\simplex)$ is included in $\sss$. Assume now that
$\simplex\notin\wdels(\eee)$. In particular, $\simplex$ is not a
vertex. Let $v$ be an arbirtary vertex of $\simplex$. Consider any
other vertex $u$ of $\simplex$. Edge $[u,v]$ is $\alpha$-witnessed by
some point $w_{uv}\in\wit$. We then have
$\|v-u\|\leq\|v-w_{uv}\|+\|w_{uv}-u\|\leq 2\dist_2(w_{uv})+2\alpha$,
where $\dist_2(w_{uv})$ stands for the Euclidean distance from
$w_{uv}$ to its second nearest landmark. According to Lemma 3.4 of
\cite{bgo-mradwc-07}, we have $\dist_2(w)\leq 3(\e+\delta)$, since
$\eee$ is an $(\e+\delta)$-sample of $\sss$. Thus, all the vertices of
$\simplex$ are included in the Euclidean ball $B(v,
2\alpha+6(\e+\delta))$.  Moreover, for any vertex $u$ of $\simplex$
and any point $p\in\wdual{u}\cap\sss$, we have
$\|p-u'\|\leq\e+\delta$, where $u'$ is a landmark closest to $p$ in
the Euclidean metric. Combined with the fact that
$\|p-u\|^2-\weight(u)^2\leq \|p-u'\|^2-\weight(u')^2$, we get:
$\|p-u\|^2\leq \|p-u'\|^2 +\weight(u)^2\leq 2(\e+\delta)^2$, since by
Lemma 3.3 of \cite{bgo-mradwc-07} we have $\weight(u)\leq
2\;\wpb\left(\frac{\e}{\rch(\sss)}\right)(\e+\delta)\leq\e+\delta$. Hence,
$\wdual{u}\cap\sss$ is included in $B(u,\sqrt{2}(\e+\delta))\subset
B(v, 2\alpha +(6+\sqrt{2})(\e+\delta))$. Since this is true for every
vertex $u$ of $\simplex$, we get: $\tilde h_0(\simplex)\subseteq \conv
\left(\bigcup_{u\mbox{ \scriptsize vertex of }\simplex}
\wdual{u}\cap\sss\right) \subseteq B(v,
2\alpha+(6+\sqrt{2})(\e+\delta))$. Now, $v$ belongs to
$\eee\subseteq\wit\subseteq\sss$, and by assumption we have
$2\alpha+(6+\sqrt{2})(\e+\delta) <\rch(\sss)$, therefore $\tilde
h_0(\simplex)$ does not intersect the medial axis of $\sss$.
\end{proofarg}

%%%%%%%%%%%%%%%%%%%%%%%%%%%%%%%%%%%%%%%%%%%%%%%%%%%%%%%%%%%%%%%%%%%%%%%%%%
%
\section{Application to reconstruction}
\label{sec-recons}

Taking advantage of the structural results of Section
\ref{sec-struct-res-compact}, we devise a very simple yet
provably-good algorithm for constructing nested pairs of complexes
that can capture the homology of a large class of compact subsets of
$\R^d$. This algorithm is a variant of the greedy refinement technique
of \cite{go-ruwc-07}, which builds a set $\eee$ of landmarks
iteratively, and in the meantime maintains a suitable data
structure. In our case, the data structure is composed of a nested
pair of simplicial complexes, which can be either
$\ripsa(\eee)\hookrightarrow\ripsap(\eee)$ or
$\winfaw(\eee)\hookrightarrow\winfapw(\eee)$, for specific values
$\alpha<\alpha'$. Both variants of the algorithm can be used in
arbitrary metric spaces, with similar theoretical guarantees, although
the variant using witness complexes is likely to be more effective in
practice. In the sequel we focus on the variant using Rips complexes
because its analysis is somewhat simpler.

%%%%%%%%%%%%%%%%%%%%%
\subsection{The algorithm}
\label{sec-algo}

The input is a finite point set $\wit$ drawn from an arbitrary metric
space, together with the pairwise distances $l(w,w')$ between the
points of $\wit$. In the sequel, $\wit$ is identified as the set of
witnesses.  

Initially, $\eee=\emptyset$ and $\e=+\infty$.  At each iteration, the
point of $\wit$ lying furthest away\footnote{At the first iteration,
  since $\eee$ is empty, an arbitrary point of $\wit$ is chosen.} from
$\eee$ in the metric $l$ is inserted in $\eee$, and $\e$ is set to
$\max_{w\in\wit}\min_{v\in\eee} l(w,v)$. Then, $\ripsqe(\eee)$ and
$\ripsse(\eee)$ are updated, and the persistent homology of
$\ripsqe(\eee)\hookrightarrow\ripsse(\eee)$ is computed using the
persistence algorithm \cite{zc-cph-05}.
The algorithm terminates when $\eee=\wit$. The output is the diagram
showing the evolution of the persistent Betti numbers versus $\e$,
which have been maintained throughout the process. As we will see in
Section \ref{sec-guarantees} below, with the help of this diagram the
user can determine a relevant scale at which to process the data: it
is then easy to generate the corresponding subset $\eee$ of landmarks
(the points of $\wit$ have been sorted according to their order of
insertion in $\eee$ during the process), and to rebuild
$\ripsqe(\eee)$ and $\ripsse(\eee)$.  The pseudo-code of the algorithm
is given in Figure \ref{fig-algo}.

\begin{figure}[htb]
\centering
\fbox{
\begin{minipage}{0.8\linewidth}
{\bf Input:} $\wit$ finite, together with distances $l(w,w')$ for all
$w,w'\in\wit$.

{\bf Init:} Let $\eee := \emptyset$, $\e:=+\infty$;

While $\eee\subsetneq\wit$ do

\sind\begin{minipage}{\linewidth}

Let $p:={\rm argmax}_{w\in\wit} \min_{v\in\eee} l(w,v)$;
~{\em // $p$ chosen arbitrarily in $\wit$ if $\eee=\emptyset$}

$\eee := \eee\cup\{p\}$;

$\e:=\max_{w\in\wit} \min_{v\in\eee} l(w,v)$;

Update $\ripsqe(\eee)$ and $\ripsse(\eee)$;

Compute persistent homology of
$\ripsqe(\eee)\hookrightarrow\ripsse(\eee)$;

\end{minipage}\\

End\_while

{\bf Output:} diagram showing the evolution of persistent Betti numbers
versus $\e$.
\end{minipage}
}
\caption{Pseudo-code of the algorithm.}
\label{fig-algo}
\end{figure}

%%%%%%%%%%%%%%%%%%%%%
\subsection{Guarantees on the output}
\label{sec-guarantees}

For any $i>0$, let $\eeei$ and $\ei$ denote respectively $\eee$ and
$\e$ at the end of the $i$th iteration of the main loop of the
algorithm.  Since $\eeei$ keeps growing with $i$, $\ei$ is a
decreasing function of $i$. In addition, $\eeei$ is an $\ei$-sample of
$\wit$, by definition of $\ei$.
%since the algorithm always
%inserts in $\eee$ the point of $\wit$ lying furthest away from $\eee$,
%it is easily seen that $\eeei$ is an $\ei$-sparse $\ei$-sample of
%$\wit$ \cite[Lemma~4.1]{go-ruwc-07}. 
Hence, if $\wit$ is a $\delta$-sample of some compact set
$\sss\subset\R^d$, then $\eeei$ is a ($\delta+\ei$)-sample of
$\sss$. This quantity is less than $2\ei$ whenever $\ei>\delta$.
Therefore, Theorem \ref{th-prop-rips} provides us with the following
theoretical guarantee:
\begin{theorem}\label{th-guarantees}
Assume that the input point set $\wit$ is a $\delta$-sample of some
compact set $X\subset\R^d$, with $\delta<\frac{1}{18}\wfs(X)$.
Then, at each iteration $i$ such that
$\delta<\ei<\frac{1}{18}\wfs(X)$, the persistent homology groups
of $\ripsqi(\eeei)\hookrightarrow\ripssi(\eeei)$ are isomorphic to the
homology groups of $X^\lambda$, for all $\lambda\in
(0,\wfs(X))$.
\end{theorem}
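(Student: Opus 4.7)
The plan is to reduce the statement directly to Theorem \ref{th-prop-rips}, since the two nested Rips complexes maintained at iteration $i$ are precisely $\rips{4\ei}(\eeei)$ and $\rips{16\ei}(\eeei) = \rips{4\cdot(4\ei)}(\eeei)$, which matches the form $\ripsa(\eee) \hookrightarrow \rips{4\alpha}(\eee)$ with $\alpha = 4\ei$.

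First, I would establish a Hausdorff-distance bound between $\eeei$ and $X$. By construction, $\eeei \subseteq \wit$, so every point of $\eeei$ lies within distance $\delta$ of $X$. Conversely, $\eeei$ is an $\ei$-sample of $\wit$ by definition of $\ei$, and $\wit$ is a $\delta$-sample of $X$, so the triangle inequality gives that every point of $X$ lies within distance $\delta + \ei$ of $\eeei$. Hence $\disth(X, \eeei) \leq \delta + \ei$. Setting $\e := \delta + \ei$, the condition $\ei > \delta$ together with $\ei, \delta < \tfrac{1}{18}\wfs(X)$ yields $\e < 2\ei$ and $\e < \tfrac{1}{9}\wfs(X)$, so the standing assumption of Theorem \ref{th-prop-rips} is met.

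Next I would check that the parameter $\alpha := 4\ei$ lies in the admissible interval $\bigl[\,2\e,\ \tfrac{1}{4}(\wfs(X)-\e)\,\bigr]$. The lower bound $4\ei \geq 2\e = 2\delta + 2\ei$ reduces to $\ei \geq \delta$, which holds by hypothesis. For the upper bound, I would use $\delta < \tfrac{1}{18}\wfs(X)$ to obtain $\wfs(X) - \delta > \tfrac{17}{18}\wfs(X)$, so that $\tfrac{\wfs(X)-\delta}{17} > \tfrac{1}{18}\wfs(X) > \ei$; rearranging gives exactly $16\ei \leq \wfs(X) - \delta - \ei$, i.e. $4\ei \leq \tfrac{1}{4}(\wfs(X) - \e)$. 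With both endpoints verified, Theorem \ref{th-prop-rips} applies, yielding $H_k(X^\lambda) \cong \im j_*$ for the inclusion-induced $j_*: H_k(\ripsqi(\eeei)) \to H_k(\ripssi(\eeei))$ and any $\lambda \in (0, \wfs(X))$.

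There is no real obstacle here beyond bookkeeping: the whole point of choosing the constants $4$ and $16$ in the filtration, and the threshold $\tfrac{1}{18}\wfs(X)$ in the algorithm, is to make the arithmetic in the previous paragraph work. The only mild subtlety worth spelling out in the final write-up is the distinction between the Hausdorff-distance parameter $\e$ appearing in Theorem \ref{th-prop-rips} and the algorithm's quantity $\ei$; conflating them would obscure why the factor of $2$ (as in $\e < 2\ei$) is exactly what is needed to satisfy $\alpha = 4\ei \geq 2\e$.
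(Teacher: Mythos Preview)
Your proposal is correct and follows exactly the approach the paper takes: the paper's proof is essentially the one-sentence remark preceding the theorem, noting that $\eeei$ is a $(\delta+\ei)$-sample of $X$ with $\delta+\ei<2\ei$, and then invoking Theorem~\ref{th-prop-rips}. You have simply spelled out the verification of the hypotheses of Theorem~\ref{th-prop-rips} in more detail (the only nicety worth tightening is that Theorem~\ref{th-prop-rips} requires the strict inequality $\disth(X,\eeei)<\e$, so in the final write-up take $\e$ to be any value strictly between $\delta+\ei$ and $2\ei$ rather than exactly $\delta+\ei$; all your estimates go through unchanged since the remaining inequalities are strict).
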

This theorem ensures that, when the input point cloud $\wit$ is
sufficiently densely sampled from a compact set $X$, there exists a
range of values of $\e(i)$ such that the persistent Betti numbers of
$\ripsqi(\eeei)\hookrightarrow\ripssi(\eeei)$ coincide with the ones
of sufficiently small offsets $X^\lambda$. This means that a plateau
appears in the diagram of persistent Betti numbers, showing the Betti
numbers of $X^\lambda$. In view of Theorem \ref{th-guarantees}, the
width of the plateau is at least $\frac{1}{18}\wfs(X)-\delta$. The
theorem also tells where the plateau is located in the diagram, but in
practice this does not help since neither $\delta$ nor $\wfs(X)$ are
known. However, when $\delta$ is small enough compared to $\wfs(X)$,
the plateau is large enough to be detected (and thus the homology of
small offsets of $X$ inferred) by the user or a software agent. In
cases where $\wit$ samples several compact sets with different weak
feature sizes, Theorem \ref{th-guarantees} ensures that several
plateaus appear in the diagram, showing plausible reconstructions at
various scales -- see Figure \ref{fig-spiral} (right). These
guarantees are similar to the ones provided with the low-dimensional
version of the algorithm \cite{go-ruwc-07}.

Once one or more plateaus have been detected, the user can choose a
relevant scale at which to process the data: as mentioned in Section
\ref{sec-algo} above, it is then easy to generate the corresponding
set of landmarks and to rebuild $\ripsqe(\eee)$ and
$\ripsse(\eee)$. Differently from the algorithm of \cite{go-ruwc-07},
the outcome is not a single embedded simplicial complex, but a nested
pair of abstract complexes whose images in $\R^d$ lie at Hausdorff
distance\footnote{Indeed, every simplex of $\ripsse(\eee)$ has all its
  vertices in $\sss^{\e+\delta}\subseteq \sss^{2\e}$, and the lengths
  of its edges are at most~$16\e$.} $O(\e)$ of $X$, such that the
persistent homology of the nested pair coincides with the homology of
$X^\lambda$.

%%%%%%%%%%%%%%%%%%%%%
\subsection{Update of $\ripsqe(\eee)$ and $\ripsse(\eee)$}
\label{sec-update}

We will now describe how to maintain $\ripsqe(\eee)$ and
$\ripsse(\eee)$. In fact, we will settle for describing how to rebuild
$\ripsse(\eee)$ completely at each iteration, which is sufficient for
achieving our complexity bounds. In practice, it would be much
preferable to use more local rules to update the simplicial complexes,
in order to avoid a complete rebuilding at each iteration.

Consider the one-skeleton graph $G$ of $\ripsse(\eee)$. The vertices
of $G$ are the points of $\eee$, and its edges are the sets
$\{p,q\}\subseteq\eee$ such that $\|p-q\|\leq 16\e$. Now, by
definition, a simplex that is not a vertex belongs to $\ripsse(\eee)$
if and only if all its edges are in $\ripsse(\eee)$. Therefore, the
simplices of $\ripsse(\eee)$ are precisely the cliques of $G$. The
simplicial complex can then be built as follows:
\begin{pliste}
\item[1.] build graph $G$, 
\item[2.] find all maximal cliques in $G$,
\item[3.] report the maximal cliques and all their subcliques.
\end{pliste}
Step 1. is performed within 
%$O(|G|)$ space and 
$O(|\eee|^2)$ time by checking the distances between all pairs of
landmarks. Here, $|G|$ denotes the size of $G$ and $|\eee|$ the size
of $\eee$. To perform Step 2., we use the output-sensitive algorithm
of \cite{tias-77}, which finds all the maximal cliques of $G$ in
$O(k\;|\eee|^3)$ time, where $k$ is the size of the answer. 
%and $O(|G|)$ space\footnote{This quantity does
%  not take the size of the output into account.}. 
Finally, reporting all the subcliques of the maximal cliques is done
in
%space and 
time linear in the total number of cliques, which is also the size of
$\ripsse(\eee)$. Therefore,
\begin{corollary}\label{cor-update-ripsa}
At each iteration of the algorithm, $\ripsqe(\eee)$ and
$\ripsse(\eee)$ are rebuilt within 
%$O(|\ripsse(\eee)|)$ space and
$O(|\ripsse(\eee)|\;|\eee|^3)$ time, where $|\ripsse(\eee)|$ is the size of
$\ripsse(\eee)$ and $|\eee|$ the size of~$\eee$.
\end{corollary}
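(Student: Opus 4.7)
The plan is to justify each of the three construction steps sketched just above the corollary and to observe that $\ripsqe(\eee)$ can be recovered essentially for free from $\ripsse(\eee)$. The key structural fact I will use is the clique-simplex correspondence: a subset of $\eee$ is a simplex of $\ripsse(\eee)$ if and only if it is a clique of the one-skeleton graph $G$ (and similarly for $\ripsqe(\eee)$ with its own one-skeleton graph). In particular, every maximal clique of $G$ is a maximal simplex of $\ripsse(\eee)$, so the number $k$ of maximal cliques is at most $|\ripsse(\eee)|$.

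First, I would show that building $G$ takes $O(|\eee|^2)$ time: since the pairwise distances $l(w,w')$ are provided as part of the input, one simply iterates over the $\binom{|\eee|}{2}$ ordered pairs of landmarks and keeps the ones at distance at most $16\e$. Second, enumerating all maximal cliques of $G$ is done in $O(k\,|\eee|^3)$ time by the algorithm of \cite{tias-77}, which by the bound $k \leq |\ripsse(\eee)|$ is at most $O(|\ripsse(\eee)|\,|\eee|^3)$. Third, reporting every subclique of every maximal clique yields precisely the simplex set of $\ripsse(\eee)$, and this enumeration costs time proportional to the total output size $|\ripsse(\eee)|$ (with at most a logarithmic overhead for deduplication if one insists on listing each simplex once, which is absorbed into the bound above).

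Adding up the three contributions, the $O(|\ripsse(\eee)|\,|\eee|^3)$ term from Step~2 dominates both $O(|\eee|^2)$ and $O(|\ripsse(\eee)|)$, giving the stated complexity for rebuilding $\ripsse(\eee)$. Finally, $\ripsqe(\eee)$ is obtained either by running exactly the same procedure on the subgraph of $G$ obtained by keeping only edges of length at most $4\e$ (which is bounded by the same quantity, since $\ripsqe(\eee) \subseteq \ripsse(\eee)$ implies that its size and its number of maximal cliques are both no larger), or equivalently by filtering the simplices of $\ripsse(\eee)$ according to their maximum edge length in time $O(|\ripsse(\eee)|\,|\eee|^2)$.

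There is no real obstacle here; the only mildly delicate point is the third step, where one must observe that a simplex $\simplex \in \ripsse(\eee)$ may arise as a subclique of several distinct maximal cliques, so a careful implementation (for instance, inserting simplices into a dictionary keyed by their sorted vertex tuple) is needed to keep the total reporting cost linear in $|\ripsse(\eee)|$; in any event the resulting overhead is negligible compared to the clique-enumeration cost of Step~2.
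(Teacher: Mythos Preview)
Your proposal is correct and follows essentially the same three-step argument as the paper: build $G$ in $O(|\eee|^2)$, enumerate maximal cliques in $O(k\,|\eee|^3)$ via \cite{tias-77}, and report all subcliques in time linear in $|\ripsse(\eee)|$. You are in fact slightly more explicit than the paper in two places --- noting that $k\leq |\ripsse(\eee)|$ to justify the final bound, and explaining how $\ripsqe(\eee)$ is obtained --- but the approach is the same.
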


%%%%%%%%%%%%%%%%%%%%%
%\subsection{Complexity of the algorithm}
\subsection{Running time of the algorithm}
\label{sec-complexity}

Let $|\wit|, |\eee|, |\ripsse(\eee)|$ denote the sizes of $\wit, \eee,
\ripsse(\eee)$ respectively.  At each iteration, point $p$ and
parameter $\e$ are computed naively by iterating over the witnesses,
and for each witness, by reviewing its distances to all the
landmarks. This procedure takes 
%constant space and 
$O(|\wit||\eee|)$ time. According to Corollary \ref{cor-update-ripsa},
$\ripsqe(\eee)$ and $\ripsse(\eee)$ are updated (in fact, rebuilt) in
%$O(|\ripsse(\eee)|)$ space and 
$O(|\ripsse(\eee)||\eee|^3)$ time. Finally, the persistence algorithm
runs in 
%$O(|\ripsse(\eee)|)$ space and 
$O(|\ripsse(\eee)|^3)$ time \cite{elz-tps-02, zc-cph-05}. Hence,
%Using the fact that
%every point of $\eee$ is a vertex of $\ripsse(\eee)$, we get:
%
\begin{lemma}\label{lem-one-iteration}
The running time
%space and time complexities 
of one iteration of the algorithm 
%are respectively $O(|\ripsse(\eee)|)$ and
is
$O(|\wit||\eee|+|\ripsse(\eee)||\eee|^3+|\ripsse(\eee)|^3)$.
\end{lemma}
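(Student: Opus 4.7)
The plan is simply to sum the costs of the three distinct tasks performed during a single iteration, each of which has already been analyzed in the preceding paragraphs.

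First, I would bound the cost of choosing the new landmark $p$ and updating the parameter $\e$. The naive procedure described in the text iterates over every witness $w\in\wit$ and, for each such $w$, scans the distances $l(w,v)$ for all $v\in\eee$ to compute $\min_{v\in\eee} l(w,v)$. This takes $O(|\wit|\,|\eee|)$ time; both $p$ and the new value of $\e$ are then obtained by a single additional pass of the same cost, so the total is $O(|\wit|\,|\eee|)$.

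Second, I would invoke Corollary \ref{cor-update-ripsa} directly to bound the cost of rebuilding $\ripsqe(\eee)$ and $\ripsse(\eee)$ by $O(|\ripsse(\eee)|\,|\eee|^3)$. Third, I would cite the known complexity of the persistence algorithm of \cite{elz-tps-02,zc-cph-05}, which processes the filtration $\ripsqe(\eee)\hookrightarrow\ripsse(\eee)$ in time cubic in the number of simplices, i.e.\ $O(|\ripsse(\eee)|^3)$.

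Adding these three bounds gives the stated $O(|\wit|\,|\eee|+|\ripsse(\eee)|\,|\eee|^3+|\ripsse(\eee)|^3)$. There is no real obstacle here: the only minor point to be explicit about is that the filtration fed to the persistence algorithm has at most $2|\ripsse(\eee)|$ simplices (since $\ripsqe(\eee)\subseteq\ripsse(\eee)$), so the cubic factor is still $O(|\ripsse(\eee)|^3)$ and no extra term appears.
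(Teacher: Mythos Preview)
Your proposal is correct and follows essentially the same approach as the paper: the paper's argument is precisely the sum of the three costs you identify (naive scan for $p$ and $\e$ in $O(|\wit||\eee|)$, rebuilding via Corollary~\ref{cor-update-ripsa} in $O(|\ripsse(\eee)||\eee|^3)$, and the persistence algorithm in $O(|\ripsse(\eee)|^3)$). Your extra remark about the filtration having at most $2|\ripsse(\eee)|$ simplices is a small clarification the paper omits, but it does not change the argument.
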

There remains to find a reasonable bound on the size of
$\ripsse(\eee)$, which can be done in Euclidean space $\R^d$,
especially when the landmarks lie on a smooth submanifold:
\begin{lemma}\label{lem-size}
Let $\eee$ be a finite $\e$-sparse point set in $\R^d$. Then,
$\ripsse(\eee)$ has at most $2^{33^d}|\eee|$ simplices. If in addition
the points of $\eee$ lie on a smooth $m$-submanifold $\sss$ of $\R^d$
with reach $\rch(\sss) >16\e$, then $\ripsse(\eee)$ has at most
$2^{35^m}|\eee|$ simplices.
\end{lemma}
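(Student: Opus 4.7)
The proof splits into a simple counting bound for the number of simplices incident to a fixed vertex, and then two packing arguments (one Euclidean, one intrinsic on the manifold) to bound the degree in the Rips $1$-skeleton.

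The plan is to first fix any total ordering of $\eee$ and charge each simplex $\simplex$ of $\ripsse(\eee)$ to its smallest vertex. If $p\in\eee$ and $N(p)$ denotes the set of landmarks other than $p$ that lie within Euclidean distance $16\e$ of $p$, then every simplex charged to $p$ has its remaining vertices in $N(p)$, so at most $2^{|N(p)|}$ simplices are charged to $p$. Summing over $p$ gives $|\ripsse(\eee)| \leq |\eee|\cdot 2^{\max_p |N(p)|}$. It therefore suffices to bound $|N(p)|$ by $33^d$ in the general case and by $35^m$ in the manifold case.

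For the Euclidean bound, I would use the standard packing argument from $\e$-sparsity: the open balls $B(q,\e/2)$ for $q\in N(p)\cup\{p\}$ are pairwise disjoint since the centers lie at mutual distance at least $\e$, and each is contained in $B(p, 16\e+\e/2)=B(p, \tfrac{33}{2}\e)$. Comparing $d$-dimensional Lebesgue measures gives $|N(p)|+1 \leq \bigl(\tfrac{33\e/2}{\e/2}\bigr)^d = 33^d$, which yields the first bound.

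For the manifold case, the same volume argument in the ambient space wastes the intrinsic dimension $m$, so I would instead pack $m$-dimensional balls in the tangent space $T_p\sss$. Fix $p\in\eee$ and let $\pi_p\colon\R^d\to T_p\sss$ be the orthogonal projection. The hypothesis $\rch(\sss)>16\e$ together with standard reach estimates (e.g.\ Federer's tubular neighborhood / Niyogi--Smale--Weinberger type lemmas) gives two controls: first, for any $q\in\sss$ with $\|q-p\|\leq 16\e$ one has $\|q-\pi_p(q)\|\leq\|q-p\|^2/(2\rch(\sss))$, so the image $\pi_p(N(p)\cup\{p\})$ is contained in the $m$-ball of radius slightly larger than $16\e$ in $T_p\sss$; second, for any two $q,q'\in\sss\cap B(p,16\e)$ the projection distorts distances by a factor bounded away from $1$, so that $\|\pi_p(q)-\pi_p(q')\|\geq c\,\|q-q'\|\geq c\e$ for some constant $c$ close to $1$. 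Applying the $m$-dimensional volume packing argument to the disjoint $m$-balls of radius $c\e/2$ around the projected landmarks, contained in an $m$-ball of radius $16\e+O(\e)$ in $T_p\sss$, yields $|N(p)|+1 \leq 35^m$ once the constants are worked out (the slightly worse constant $35$ absorbs the distortion factor and the extra $\|q-\pi_p(q)\|$ slack).

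The main obstacle is the second step of the manifold argument: quantifying the distortion of $\pi_p$ and the normal displacement so that all the slack fits into the constant $35$. The two estimates themselves are classical reach-based bounds, but choosing the correct margins so the final packing ratio rounds up to $35^m$ rather than something larger requires careful bookkeeping, and the bound would genuinely fail if $\rch(\sss)$ were not strictly larger than $16\e$.
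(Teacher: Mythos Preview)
Your proposal is correct and follows essentially the same approach as the paper: reduce the simplex count to a bound on the number of Rips neighbors of any landmark, obtain that bound by a ball-packing argument in $\R^d$ for the general case, and in the manifold case project orthogonally onto the tangent space $T_p\sss$, control the distortion via the reach-based normal-displacement estimate $\|q-\pi_p(q)\|\leq\|q-p\|^2/(2\,\rch(\sss))$, and then pack $m$-balls. The only cosmetic difference is that the paper bounds the star of each vertex directly rather than charging simplices to their smallest vertex, and it derives the pairwise lower bound on projected distances from the single normal-displacement estimate rather than invoking a separate distortion factor~$c$.
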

\begin{proof}
Given an arbitrary point $v\in\eee$, we will show that the number of
vertices in the star of $v$ in $\ripsse(\eee)$ is at most $33^d$. From
this follows that the number of simplices in the star of $v$ is
bounded by $2^{33^d}$, which proves the first part of the lemma. Let
$\Lambda$ be the set of vertices in the star of $v$. These vertices
lie within Euclidean distance $16\e$ of $v$, and at least $\e$ away
from one another. It follows that they are centers of
pairwise-disjoint Euclidean $d$-balls of same radius $\frac{\e}{2}$,
included in the $d$-ball of center $v$ and radius
$(16+\frac{1}{2})\e$. Therefore, their number is bounded by
$\frac{\volume B(v, (16+\nicefrac{1}{2})\e)}{\volume
  B(v,\nicefrac{\e}{2})} =
\left(\frac{16+\nicefrac{1}{2}}{\nicefrac{1}{2}}\right)^d=33^d$.

Assume now that $v$ and the points of $\Lambda$ lie on a smooth
$m$-submanifold $\sss$ of $\R^d$, such that $16\e<\rch(\sss)$. It
follows then from Lemma 6 of \cite{gw-sdim-04} that, for all
$u\in\Lambda$, we have $\|u-u'\|\leq
\frac{\|u-v\|^2}{2\rch(\sss)}\leq\frac{\e^2}{2\rch(\sss)}<\frac{\e}{32}$,
where $u'$ is the orthogonal projection of $u$ onto the tangent space
of $\sss$ at $v$, $\tangentS(v)$. As a consequence, the orthogonal
projections of the points of $\Lambda$ onto $\tangentS(v)$ lie at
least $\frac{31\e}{32}$ away from one another, and still at most
$16\e$ away from $v$. As a result, they are centers of
pairwise-disjoint open $m$-balls of same radius $\frac{31\e}{64}$,
included in the open $m$-ball of center $v$ and radius
$\left(16+\frac{31}{64}\right)\e$ inside $\tangentS(v)$. Therefore,
their number is bounded by
$\left(\frac{16+\nicefrac{31}{64}}{\nicefrac{31}{64}}\right)^m\leq
35^m$, which proves the second part of the lemma, by the same argument
as above.
\end{proof}

In cases where the input point cloud $\wit$ lies on a smooth
$m$-submanifold $\sss$ of $\R^d$, the above result\footnote{Note that,
  at every iteration $i$ of the algorithm, $\eeei$ is an $\ei$-sparse
  point set, since the algorithm always inserts in $\eee$
  the point of $\wit$ lying furthest away from $\eee$ --- see {\em
    e.g.}  \cite[Lemma~4.1]{go-ruwc-07}.}  suggests that the course of
the algorithm goes through two phases: first, a transition phase, in
which the landmark set $\eee$ is too coarse for the dimensionality of
$\sss$ to have an influence on the shapes and sizes of the stars of
the vertices of $\ripsse(\eee)$; second, a stable phase, in which the
landmark set is dense enough for the dimensionality of $\sss$ to play
a role. This fact is quite intuitive: imagine $\sss$ to be a simple
closed curve, embedded in $\R^d$ in such a way that it roughly fills
in the space within the unit $d$-ball. Then, for large values of $\e$,
the landmark set $\eee$ is nothing but a sampling of the $d$-ball, and
therefore the stars of its points in $\ripsse(\eee)$ are
$d$-dimensional.

Let $\io$ be the last iteration of the transition phase, {\em i.e.}
the last iteration such that $\eio\geq\frac{1}{16}\;\rch(\sss)$. Then,
Lemmas \ref{lem-one-iteration} and \ref{lem-size} imply that the 
%space and time complexities 
time complexity of the transition phase 
%are respectively $O(2^{33^d}|\eeeio|)$ and
is $O(|\wit||\eeeio|^2+8^{33^d}|\eeeio|^5)$, while 
%those 
the one of of the stable phase 
%are $O(2^{35^m}\;|\wit|)$ and
is
$O(8^{35^m}\;|\wit|^5)$. We can get rid of the terms
depending on $d$ in at least two ways:

$\bullet$ The first approach has a rather theoretical flavor: it
consists in amortizing the cost of the transition phase by assuming
that $\wit$ is sufficiently large. Specifically, since $\eeeio$ is an
$\eio$-sparse sample of $\sss$, with
$\eio\geq\frac{1}{16}\;\rch(\sss)$, the size of $\eeeio$ is bounded
from above by some quantity $c_0(\sss)$ that depends solely on the
(smooth) manifold $\sss$ -- see {\em e.g.}  \cite{geometrica-5064a}
for a proof in the special case of smooth surfaces. As a result, we
have $8^{33^d}|\eeeio|^k\leq 8^{35^m}|\wit|^k$ for all $k\geq 1$
whenever $|\wit|\geq 8^{33^d-35^m}\;c_0(\sss)$. This condition on the
size of $\wit$ translates into a condition on $\delta$, by a similar
argument to the one invoked above.

$\bullet$ The second approach has a more algorithmic flavor, and it is
based on a backtracking strategy. 
Specifically, we first run the algorithm
without maintaining $\ripsqe(\eee)$ and $\ripsse(\eee)$, which simply
sorts the points of $\wit$ according to their order of insertion in
$\eee$.
%, and associates to each set $\eeei$ the corresponding value
%$\ei$ of parameter $\e$. 
Then, we run the algorithm backwards, starting with
$\eee=\eee(|\wit|)=\wit$ and considering at each iteration $j$ the
landmark set $\eee(|\wit|-j)$. During this second phase, we do
maintain $\ripsqe(\eee)$ and $\ripsse(\eee)$ and compute their
persistent Betti numbers. 
%Furthermore, we stop as soon as a sufficiently
%wide plateau has been detected in the diagram of persistent Betti
%numbers, the minimum plateau width being specified by a user-defined
%parameter $\Delta$. Given a fixed $\Delta$,
If $\wit$ samples $\sss$ densely enough, then Theorem
\ref{th-guarantees} ensures that the relevant plateaus will be
computed before the transition phase starts, and thus before the size
of the data structure becomes independent of the dimension of
$\sss$. It is then up to the user to stop the process
%, either when reasonably large plateaus have been found, or
 when the space complexity becomes too large. 

In both cases, we get the following complexity bounds:
\begin{theorem}\label{th-complexity}
If $\wit$ is a point cloud in Euclidean space $\R^d$, then the 
%space and time complexities 
running time of the algorithm 
%are respectively $O(2^{33^d}|\wit|)$ and 
is $O(8^{33^d}|\wit|^5)$, where $|\wit|$ denotes
the size of $\wit$. If in addition $\wit$ is a $\delta$-sample of some
smooth $m$-submanifold of $\R^d$, with $\delta$ small enough, then the
%space and time complexities become $O(2^{35^m}\;|\wit|)$ and
running time becomes
 $O(8^{35^m}\;|\wit|^5)$.
\end{theorem}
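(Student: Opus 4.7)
The starting point is Lemma~\ref{lem-one-iteration}, which bounds the cost of each iteration by $O(|\wit||\eee|+|\ripsse(\eee)||\eee|^3+|\ripsse(\eee)|^3)$. Since the algorithm inserts one landmark per iteration and terminates when $\eee=\wit$, it performs at most $|\wit|$ iterations, and at every iteration $\eee\subseteq\wit$ so $|\eee|\leq|\wit|$. It therefore suffices to control $|\ripsse(\eee)|$ at each iteration and to sum the resulting per-iteration bounds.

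For the first bound, I would invoke the first part of Lemma~\ref{lem-size}. Since each $\eeei$ is $\ei$-sparse (by the footnote of the excerpt), this gives $|\ripsse(\eeei)|\leq 2^{33^d}|\eeei|$. Plugging into Lemma~\ref{lem-one-iteration} and using $|\eeei|\leq|\wit|$ yields a per-iteration cost of $O(|\wit|^2+2^{33^d}|\wit|^4+8^{33^d}|\wit|^3)$, whose dominant term is $2^{33^d}|\wit|^4$. Multiplying by at most $|\wit|$ iterations gives the announced $O(8^{33^d}|\wit|^5)$ bound (with room to spare).

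For the refined bound on a smooth $m$-submanifold, the difficulty is that the second part of Lemma~\ref{lem-size} requires $\rch(\sss)>16\ei$, which fails during early iterations when the landmark set is too coarse to resolve $\sss$. The natural fix is to split the run into the transition phase (iterations $i\leq\io$ with $\ei\geq\tfrac{1}{16}\rch(\sss)$) and the stable phase (iterations $i>\io$). In the stable phase the second part of Lemma~\ref{lem-size} applies verbatim, and the argument of the preceding paragraph, with $35^m$ replacing $33^d$, delivers a total cost of $O(8^{35^m}|\wit|^5)$.

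The main obstacle is the transition phase, where the per-iteration cost still carries the $2^{33^d}$ factor. Here I would use the fact that, since $\eeeio$ is an $\eio$-sparse sample of $\sss$ with $\eio\geq\tfrac{1}{16}\rch(\sss)$, $|\eeeio|$ is bounded by a constant $c_0(\sss)$ depending only on the (smooth) manifold $\sss$. Applying the general Euclidean bound of the first part up to iteration $\io$ then gives a total transition-phase cost of $O(8^{33^d}\,c_0(\sss)^5)$, which is a constant depending on $\sss$ and $d$ but not on $|\wit|$. Under the hypothesis that $\delta$ is small enough that $|\wit|^5\geq 8^{33^d-35^m}c_0(\sss)^5$ (which translates directly from the $\delta$-sample hypothesis via the sparsity of $\wit$), this cost is absorbed into $O(8^{35^m}|\wit|^5)$, proving the second bound. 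Alternatively, one may adopt the backtracking strategy outlined before the theorem, in which case only the stable phase is actually traversed while the data structure is maintained, and the dimension-dependent term never arises.
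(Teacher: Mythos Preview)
Your proof is correct and follows essentially the same approach as the paper: you use Lemma~\ref{lem-one-iteration} for the per-iteration cost, Lemma~\ref{lem-size} to control $|\ripsse(\eee)|$, and for the submanifold case you split into a transition phase (bounded via $|\eeeio|\leq c_0(\sss)$ and absorbed by taking $\delta$ small enough) and a stable phase, exactly as the paper does in the paragraphs preceding the theorem. One minor quibble: the ``dominant term'' in your per-iteration estimate is not necessarily $2^{33^d}|\wit|^4$, since $8^{33^d}|\wit|^3$ can be larger when $|\wit|$ is small relative to $2^{2\cdot 33^d}$; but both are absorbed by the stated $O(8^{33^d}|\wit|^5)$ bound, so this does not affect correctness.
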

%  

%%%%%%%%%%%%%%%%%%%%%
%\subsection{Improvements}
%\label{sec-improvements}
%
%- seal the non-persistent cycles in $\ripsa(\eee)$.

%%%%%%%%%%%%%%%%%%%%%%%%%%%%%%%%%%%%%%%%%%%%%%%%%%%%%%%%%%%%%%%%%%%%%%%%%%
%
\section{Conclusion}
\label{sec-conclusion}
This paper makes effective the approach developped in
\cite{cl-sctis-07,ceh-spd-05} by providing an efficient, provably good
and easy-to-implement algorithm for topological estimation of general
shapes in any dimensions. Our theoretical framework can also be used
for the analysis of other persistence-based methods.  Addressing a
weaker version of the classical reconstruction problem, we introduce
an algorithm that ultimately outputs a nested pair of complexes at a
user-defined scale, from which the homology of the underlying shape
$X$ are inferred.  When $X$ is a smooth submanifold of $\R^d$, the
complexity scales up with the intrinsic dimension of $X$. These
results provide a new step towards reconstructing (low-dimensional)
manifolds in high-dimensional spaces in reasonnable time with
topological guarantees. It is now tempting to tackle the more
challenging problem of constructing an embedded simplicial complex
that is topologically and geometrically close to the sampled shape. As
a first step, we intend to adapt our method to provide a single output
complex that has the same homology as $X$, using for instance the {\em
  sealing} technique of \cite{fc-mlhc-07}.

%Our complexity bound is very likely to be improved on specific classes
%of compact sets, such as stratified manifolds.

\bibliography{geom,geometrica} \bibliographystyle{plain}

\newpage
\tableofcontents

%%%%%%%%%%%%%%%%%%%%%%%%%%%%%%%%%%%%%%%%%%%%%%%%%%%%%%%%%%%%%%%%%%%%%%%%%%
%\newpage
%\renewcommand{\thesection}{\Alph{section}}
%\setcounter{section}{0}
%\input{appendix-socg}

\end{document}